\documentclass[11pt,amssymb,amsfont,a4paper]{article}
\usepackage{latexsym,amssymb,amsmath,amsthm,color}
\usepackage{geometry}
\usepackage{graphics}
\usepackage{url}
\usepackage{graphicx}
\usepackage{color}
\usepackage[utf8]{inputenc}
\usepackage{xcolor}
\usepackage{authblk}
\usepackage{multirow}
\usepackage{tikz}
\usepackage{pstricks,pst-node}

\theoremstyle{definition}
\newtheorem{definition}{Definition}
\newtheorem{example}{Example}

\theoremstyle{plain}
\newtheorem{theorem}{Theorem}
\newtheorem{proposition}{Proposition}
\newtheorem{lemma}{Lemma}
\newtheorem{remark}{Remark}
\newtheorem{corollary}{Corollary}

\title{On the similarities between generalized rank and Hamming weights and their applications to network coding}
\author{Umberto Mart{\'i}nez-Pe\~{n}as\thanks{umberto@math.aau.dk}}
\affil{Department of Mathematical Sciences, Aalborg University, Denmark}

\begin{document}

\maketitle

\begin{abstract}
Rank weights and generalized rank weights have been proven to characterize error and erasure correction, and information leakage in linear network coding, in the same way as Hamming weights and generalized Hamming weights describe classical error and erasure correction, and information leakage in wire-tap channels of type II and code-based secret sharing. Although many similarities between both cases have been established and proven in the literature, many other known results in the Hamming case, such as bounds or characterizations of weight-preserving maps, have not been translated to the rank case yet, or in some cases have been proven after developing a different machinery. The aim of this paper is to further relate both weights and generalized weights, show that the results and proofs in both cases are usually essentially the same, and see the significance of these similarities in network coding. Some of the new results in the rank case also have new consequences in the Hamming case. \\

\textbf{Keywords:} Rank weight, generalized rank weight, rank distance, rank-metric codes, network coding, network error correction, secure network coding. \\

\textbf{MSC:} 94B05, 94B65, 94C99.
\end{abstract}

\section{Introduction}

Linear network coding has been intensively studied during the last decade \cite{ahlswede, cai-yeung, random, errors-network, koetter2003, rgrw, linearnetwork, on-metrics, silva-universal, yang-char, yang-weight}. Consider a network with several sources and several sinks, where each source transmits several packets through the network to multiple sinks. Following \cite{ahlswede, random, koetter2003, linearnetwork}, ``linear network coding'' is defined as the process by which, in each node of the network, linear combinations of the received packets are generated (possibly at random \cite{random}) and sent (see \cite[Definition 1]{koetter2003}). We assume no delays nor cycles. 

In this context, errors are considered as erroneous packets that appear on some links, and erasures are considered as the deficiency of the rank of the matrix (called transfer matrix \cite{koetter2003, rgrw, silva-universal}) that describes the received packets as combinations of the ones sent by a given source \cite{rgrw, silva-universal}. In secure network coding, an adversary (or several) may compromise the security of the network by doing the following, among other attacks: introducing $ t $ erroneous packets on $ t $ different links, modifying the transfer matrix and obtaining information from the sent packets by wiretapping several links \cite{rgrw, on-metrics, silva-universal}. 

In classical coding for error and erasure correction \cite{pless}, coding for wire-tap channels of type II \cite{luo, ozarow, wei} and code-based secret sharing \cite{secure-computation, kurihara-secret, shamir}, the original message is encoded into a vector $ \mathbf{c} = (c_1, c_2, \ldots, c_n) \in \mathbb{F}_q^n $, where $ \mathbb{F}_q $ is some finite field. Then, errors, erasures and information leakage happen component-wise. This means that some components of $ \mathbf{c} $ may be wrong (errors), some components may be erased (erasures), and a wiretapping adversary may obtain some components (information leakage). Using source coding on a network, as in \cite{rgrw, on-metrics}, all this is considered to happen on some linear combinations: errors are wrong combinations, erasures are losses of combinations, and information leakage is considered in the form of leaked combinations.

In the classical case, Hamming weights \cite{pless} and generalized Hamming weights \cite{wei} have been proven to describe error and erasure correction and information leakage on wire-tap channels of type II. On the other hand, in recent years there have been several attempts to find a suitable weight and generalized weight to study linear network coding \cite{errors-network, rgrw, oggier, on-metrics, yang-char, yang-weight}. Finally, rank weights and generalized rank weights, introduced in \cite{gabidulin} and \cite{rgrw, oggier}, respectively, have been proven to describe exactly the worst case error and erasure correction capability \cite{rgrw, on-metrics, silva-universal}, and worst case information leakage on networks \cite{rgrw, silva-universal}. 

Many similarities between Hamming weights and rank weights have been considered since the paper \cite{gabidulin}, and for generalized ones since \cite{rgrw, oggier}. However, many results on Hamming weights still have no counterpart in the rank case, or require proofs using a different machinery.

The aim of this paper is to give some alternative definitions of rank weights \cite{gabidulin} and generalized rank weights \cite{jerome, slides, rgrw, oggier}, and then show that most of the well-known results for Hamming weights, classical error and erasure correction and information leakage, can be directly translated to rank weights, network error and erasure correction and information leakage on networks, once the right definitions and tools are introduced.
%

After giving some preliminary tools from the literature in Section \ref{defs}, the new results in this paper are distributed as follows: In Section \ref{sec equivalent defs}, we gather alternative definitions of rank weights and generalized rank weights from the literature, and propose some new definitions, proving the equivalence between them. In contrast with \cite{jerome, slides, oggier}, we also treat relative weights \cite{rgrw}. In Section \ref{sec eq}, we study linear equivalences of codes, that is, vector space isomorphisms between codes that preserve rank weights (and generalized rank weights), which allow to say when two codes perform exactly equally in secure network coding. We establish new characterizations of these equivalences that also give a connection with information leakage. We treat for the first time the case of different lengths and obtain the minimum possible lengths of codes, up to these equivalences. In Section \ref{sec bounds}, we establish a way to derive bounds on generalized rank weights from bounds on generalized Hamming weights, and give a list of some of these bounds. In the rest of the section, we discuss what the Singleton bound in the rank case can be, establishing a new alternative version. In Section \ref{sec puncturing}, we introduce the concept of rank-punctured codes, which plays the same role as classical punctured codes, and which are a main tool for the study of rank weights, erasure correction and information leakage, since punctured codewords are conceptually the same as codewords with erasures. We use this to characterize MRD ranks of codes and introduce the concept of information spaces. Finally, in Section \ref{sec secure}, we revisit some of the results regarding error and erasure correction and information leakage on networks. We obtain new relations regarding information leakage and duality, estimate information leakage in terms of dimensions of spaces, and propose a slightly different decoder than that of \cite{rgrw, on-metrics}, proving also the characterization of the correction capability of arbitrary (in particular, $ \mathbb{F}_q $-linear) coding schemes, which has not been stated nor proven yet.

\section{Definitions and preliminaries} \label{defs}

Let $ q $ be a prime power and $ m $ and $ n $, two positive integers. $ \mathbb{F}_q $ denotes the finite field with $ q $ elements. All vectors are considered to be row vectors, and we use the notation $ A^T $ to denote the transpose of a matrix $ A $.

\subsection{Linear network coding model} \label{subsec linear network}

We will consider the network model with errors in \cite{rgrw, on-metrics}, where the original message $ \mathbf{x} \in \mathbb{F}_{q^m}^k $ (considered as $ k $ packets in $ \mathbb{F}_{q^m} $) is encoded by a given source into $ \mathbf{c} \in \mathbb{F}_{q^m}^n $, whose $ n $ components (seen as packets) are sent through a network with $ n $ outgoing links from that source node and where a given receiver obtains $ \mathbf{y} = \mathbf{c} A^T + \mathbf{e} $, for some transfer matrix $ A \in \mathbb{F}_q^{N \times n} $ and some error vector $ \mathbf{e} \in \mathbb{F}_{q^m}^N $.

As in \cite{rgrw, on-metrics}, when treating error and erasure correction, we will consider multicast networks with one source and several sinks, and no delays nor cycles. In the noiseless case, for treating just information leakage to an adversary, we may assume several sources as long as the packets sent by different sources have no correlations. This allows to treat packets from a different source as errors, which give no extra information to a wiretapping adversary by \cite[Proposition 5]{rgrw}.

The length of the vector $ \mathbf{c} $ is defined as $ n $, and corresponds to the number of outgoing links from the source in the network, while $ m $ corresponds to the packet size. Therefore, $ m $ and $ n $ do not play a symmetric role. 

Although it is usual in the literature to only consider the case $ n \leq m $, we consider all cases, and we argue as follows (see also \cite[Section I.A]{rgrw} for more details): on the one hand, in some Internet protocols, the size of each packet ($ m $) is bounded by some parameters of the protocol, whereas the number of outgoing links ($ n $) is not necessarily bounded. On the other hand, since many computations are carried out over the extension field $ \mathbb{F}_{q^m} $, requiring $ m \geq n $ may extremely increase the computational complexity of the encoding and decoding. 

\subsection{Codes and coding schemes} \label{subsec coding schemes}

A code in $ \mathbb{F}_{q^m}^n $ is just a subset $ C \subset \mathbb{F}_{q^m}^n $, whose length is defined as $ n $. We say that $ C $ is linear (respectively $ \mathbb{F}_q $-linear) if it is an $ \mathbb{F}_{q^m} $-linear subspace (respectively $ \mathbb{F}_q $-linear). The term arbitrary is used for all codes, including non-linear codes. 

\begin{definition}[\textbf{\cite[Definition 7]{rgrw}}]
A coding scheme (or binning scheme) with message set $ \mathcal{S} $ is a family of disjoint nonempty subsets of $ \mathbb{F}_{q^m}^n $, $ \mathcal{P}_\mathcal{S} = \{ C_\mathbf{x} \}_{\mathbf{x} \in \mathcal{S}} $, together with a probability distribution over each of these sets. 
\end{definition}

\begin{definition}
A coding scheme as in the previous definition is said to be linear if $ \mathcal{S} = \mathbb{F}_{q^m}^\ell $, where $ 0 < \ell \leq n $, and
$$ \alpha C_\mathbf{x} + \beta C_\mathbf{y} \subset C_{\alpha \mathbf{x} + \beta \mathbf{y}}, $$
for all $ \alpha, \beta \in \mathbb{F}_{q^m} $ and all $ \mathbf{x}, \mathbf{y} \in \mathbb{F}_{q^m}^\ell $. Similarly in the $ \mathbb{F}_q $-linear case (where $ \mathcal{S} = \mathbb{F}_q^\ell $, $ 0 < \ell \leq mn $).
\end{definition}

The encoding in the coding scheme is given in \cite[Definition 7]{rgrw} as follows: for each $ \mathbf{x} \in \mathcal{S} $, we choose at random (with the chosen distribution) an element $ \mathbf{c} \in C_\mathbf{x} $. With these definitions, the concept of coding scheme generalizes the concept of code, since a code is a coding scheme where $ \# C_\mathbf{x} = 1 $, for each $ \mathbf{x} \in \mathcal{S} $, and thus no probability distribution is required. In the same way, linear and $ \mathbb{F}_q $-linear coding schemes generalize linear and $ \mathbb{F}_q $-linear codes, respectively.

An equivalent way to describe linear (and $ \mathbb{F}_q $-linear) coding schemes is by nested linear code pairs, introduced in \cite[Section III.A]{zamir}. We use the description in \cite[Subsection 4.2]{secure-computation}.

\begin{definition}[\textbf{\cite{secure-computation, zamir}}] \label{definition NLCP}
A nested linear code pair is a pair of linear codes $ C_2 \varsubsetneq C_1 \subset \mathbb{F}_{q^m}^n $. Choose a linear space $ W $ such that $ C_1 = C_2 \oplus W $ (where $ \oplus $ represents the direct sum of vector spaces) and an isomorphism $ \psi : \mathbb{F}_{q^m}^\ell \longrightarrow W $, where $ \ell = \dim(C_1/C_2) $. Then we define the sets $ C_\mathbf{x} = \psi(\mathbf{x}) + C_2 $. They form a linear coding scheme called nested coset coding scheme \cite{rgrw}.

If we choose the probability distribution to be uniform, then the encoding can be done as follows: Take uniformly at random $ \mathbf{c}^\prime \in C_2 $ and define $ \mathbf{c} = \psi(\mathbf{x}) + \mathbf{c}^\prime $.
\end{definition}

A given code $ C \subset \mathbb{F}_{q^m}^n $, seen as a pair $ 0 \varsubsetneq C $ is suitable for error correction, but is not suitable for protection against information leakage. Ozarow and Wyner proposed in \cite{ozarow} using the pair $ C \varsubsetneq \mathbb{F}_{q^m}^n $ for protection against information leakage on noiseless channels. The idea of nested linear code pairs was introduced in \cite{zamir} to protect against both information leakage and noise.

Independently, the same idea was implicitly used by Shamir \cite{shamir} and Massey \cite[Section 3.1]{secure-computation} to construct secret sharing schemes, and general nested linear code pairs were first used for this purpose in \cite[Section 4.2]{secure-computation}, where it is claimed in an informal way that they include all possible linear coding schemes. We now state this in a formal way, omitting the proof, which is straightforward. The $ \mathbb{F}_q $-linear case is completely analogous.

\begin{proposition}
Given a linear coding scheme $ \mathcal{P}_\mathcal{S} = \{ C_\mathbf{x} \}_{\mathbf{x} \in \mathcal{S}} $, define $ C_1 = \bigcup_{\mathbf{x} \in \mathcal{S}} C_{\mathbf{x}} $ and $ C_2 = C_{\mathbf{0}} $ (recall that $ \mathcal{S} = \mathbb{F}_{q^m}^\ell $). Then, $ C_1 $ and $ C_2 $ are linear codes in $ \mathbb{F}_{q^m}^n $ and
\begin{enumerate}
\item
$ C_2 \varsubsetneq C_1 $.
\item
The relation given in $ C_1 $ by $ \mathbf{c} \thicksim \mathbf{d} $ if, and only if, there exists $ \mathbf{x} \in \mathbb{F}_{q^m}^\ell $ such that $ \mathbf{c}, \mathbf{d} \in C_{\mathbf{x}} $, is an equivalence relation that satisfies the following:
$$ \mathbf{c} \thicksim \mathbf{d} \quad \Longleftrightarrow \quad \mathbf{c} - \mathbf{d} \in C_2. $$
In particular, $ \mathcal{P}_\mathcal{S} = C_1 / C_2 $.
\item
The map $ \mathbb{F}_{q^m}^\ell \longrightarrow \mathcal{P}_\mathcal{S} = C_1/C_2 : \mathbf{x} \longmapsto C_{\mathbf{x}} $ is a vector space isomorphism.
\end{enumerate}
In particular, if we take a subspace $ W \subset C_1 $ such that $ C_1 = C_2 \oplus W $, then we can canonically define an isomorphism $ \psi: \mathbb{F}_{q^m}^\ell \longrightarrow W $ by $ C_{\mathbf{x}} \cap W = \{ \psi(\mathbf{x}) \} $. Of course, it satisfies that $ C_{\mathbf{x}} = \psi(\mathbf{x}) + C_2 $.
\end{proposition}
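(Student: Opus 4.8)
The plan is to unwind the definitions and verify each of the three numbered claims (plus the final ``In particular'' remark) in sequence, treating everything as routine bookkeeping about cosets of vector spaces. The key observation driving the whole argument is that the linearity axiom $\alpha C_\mathbf{x} + \beta C_\mathbf{y} \subset C_{\alpha\mathbf{x}+\beta\mathbf{y}}$ forces each inclusion to actually be an equality once one checks dimensions/cardinalities, because applying the axiom with $(\alpha,\beta)=(1,0)$ and then again ``inversely'' with scalars $\alpha^{-1}$ pins things down. So first I would record the elementary consequences of the axiom: taking $\alpha=\beta=0$ gives $\mathbf{0}\in C_\mathbf{0}=:C_2$ (more precisely $0\cdot C_\mathbf{x}+0\cdot C_\mathbf{y}\subset C_\mathbf{0}$, and the left side is $\{\mathbf{0}\}$ since the $C_\mathbf{x}$ are nonempty); taking $\beta=0$ and $\alpha$ arbitrary gives $\alpha C_\mathbf{x}\subset C_{\alpha\mathbf{x}}$, hence for $\alpha\neq 0$ also $\alpha^{-1}C_{\alpha\mathbf{x}}\subset C_\mathbf{x}$, so $\alpha C_\mathbf{x}=C_{\alpha\mathbf{x}}$ whenever $\alpha\neq 0$; and more generally $C_\mathbf{x}+C_\mathbf{0}\subset C_\mathbf{x}$, which combined with $\mathbf{0}\in C_\mathbf{0}$ (giving the reverse inclusion) yields $C_\mathbf{x}+C_2=C_\mathbf{x}$, i.e.\ each $C_\mathbf{x}$ is a union of cosets of $C_2$.

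Next I would establish that $C_2$ is a linear subspace: it is nonempty and closed under addition by the previous paragraph applied to $\mathbf{x}=\mathbf{y}=\mathbf{0}$, and closed under scalar multiplication since $\alpha C_\mathbf{0}\subset C_\mathbf{0}$ for $\alpha\neq 0$ and trivially for $\alpha=0$. Then $C_1=\bigcup_\mathbf{x} C_\mathbf{x}$ is linear: given $\mathbf{c}\in C_\mathbf{x}$ and $\mathbf{d}\in C_\mathbf{y}$, the axiom gives $\alpha\mathbf{c}+\beta\mathbf{d}\in C_{\alpha\mathbf{x}+\beta\mathbf{y}}\subset C_1$. For claim (1), $C_2\subset C_1$ is immediate, and properness follows because the $C_\mathbf{x}$ are disjoint and nonempty with $\#\mathcal{S}=\#\mathbb{F}_{q^m}^\ell>1$, so there is some $\mathbf{x}\neq\mathbf{0}$ with $C_\mathbf{x}\neq\emptyset$ and $C_\mathbf{x}\cap C_2=\emptyset$.

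For claim (2): reflexivity and symmetry of $\thicksim$ are clear, and transitivity follows from disjointness of the $C_\mathbf{x}$. For the characterization, if $\mathbf{c},\mathbf{d}\in C_\mathbf{x}$ then $\mathbf{c}-\mathbf{d}=1\cdot\mathbf{c}+(-1)\cdot\mathbf{d}\in C_{\mathbf{x}-\mathbf{x}}=C_\mathbf{0}=C_2$; conversely if $\mathbf{c}\in C_\mathbf{x}$ and $\mathbf{c}-\mathbf{d}\in C_2$, then $\mathbf{d}=\mathbf{c}-(\mathbf{c}-\mathbf{d})\in C_\mathbf{x}+C_2=C_\mathbf{x}$ by the coset property above, so $\mathbf{c}\thicksim\mathbf{d}$. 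Thus the $\thicksim$-classes are exactly the $C_\mathbf{x}$, and they are exactly the cosets of $C_2$ inside $C_1$ (every coset of $C_2$ in $C_1$ meets some $C_\mathbf{x}$ since $C_1=\bigcup C_\mathbf{x}$, and each $C_\mathbf{x}$ is a single coset by what we showed), giving $\mathcal{P}_\mathcal{S}=C_1/C_2$ as a set. For claim (3), the map $\mathbf{x}\mapsto C_\mathbf{x}$ is well-defined into $C_1/C_2$, is a homomorphism because $C_{\alpha\mathbf{x}+\beta\mathbf{y}}\supset\alpha C_\mathbf{x}+\beta C_\mathbf{y}$ forces $C_{\alpha\mathbf{x}+\beta\mathbf{y}}=\alpha C_\mathbf{x}+\beta C_\mathbf{y}$ at the level of cosets (using that $\alpha C_\mathbf{x}$ is the coset $\alpha\psi(\mathbf{x})+C_2$), is injective since $C_\mathbf{x}=C_\mathbf{y}$ implies $\mathbf{x}=\mathbf{y}$ by disjointness, and is surjective by definition of $C_1$. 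Finally, the ``In particular'' statement: choosing $W$ with $C_1=C_2\oplus W$, each coset $C_\mathbf{x}$ meets $W$ in exactly one point since $C_1/C_2\cong W$ canonically, and setting $\psi(\mathbf{x})$ to be that point gives an isomorphism $\mathbb{F}_{q^m}^\ell\to W$ with $C_\mathbf{x}=\psi(\mathbf{x})+C_2$. The only mildly delicate point — and the one I would be most careful about — is justifying that the inclusions in the linearity axiom are equalities of cosets; everything else is standard linear algebra, which is why the paper calls the proof straightforward and omits it.
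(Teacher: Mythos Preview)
The paper explicitly omits the proof of this proposition, calling it straightforward, so there is no approach in the paper to compare against; your write-up correctly supplies the routine verification the author left out.

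One small wrinkle: in your argument for claim~(3) you justify that $\mathbf{x}\mapsto C_\mathbf{x}$ is a homomorphism by invoking ``$\alpha C_\mathbf{x}$ is the coset $\alpha\psi(\mathbf{x})+C_2$'', but $\psi$ is only introduced afterwards in the ``In particular'' clause, so this is circular as written. The fix is immediate and you essentially already have it: once claim~(2) is established, each $C_\mathbf{x}$ is a single coset of $C_2$, so picking any representative $\mathbf{c}_\mathbf{x}\in C_\mathbf{x}$ and $\mathbf{c}_\mathbf{y}\in C_\mathbf{y}$, the axiom gives $\alpha\mathbf{c}_\mathbf{x}+\beta\mathbf{c}_\mathbf{y}\in C_{\alpha\mathbf{x}+\beta\mathbf{y}}$, which is exactly the statement that the map is $\mathbb{F}_{q^m}$-linear into $C_1/C_2$. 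With that adjustment the proof is complete and correct.
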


On the other hand, if $ d : \mathbb{F}_{q^m}^n \times \mathbb{F}_{q^m}^n \longrightarrow \mathbb{N} $ is the rank (respectively Hamming) distance \cite{gabidulin} (respectively \cite{pless}), we define the minimum rank (respectively Hamming) distance of the coding scheme $ \mathcal{P}_\mathcal{S} $ as
\begin{equation}
d(\mathcal{P}_\mathcal{S}) = \min \{ d(\mathbf{c}_1, \mathbf{c}_2) \mid \mathbf{c}_1 \in C_{\mathbf{x}_1}, \mathbf{c}_2 \in C_{\mathbf{x}_2}, \mathbf{x}_1 \neq  \mathbf{x}_2 \}.
\label{minimum distance}
\end{equation}
For arbitrary codes we obtain the usual definition of minimum distance. For arbitrary coding schemes, it is basically the minimum of the distances between the sets $ C_{\mathbf{x}} $, $ \mathbf{x} \in \mathcal{S} $. 

For a linear coding scheme $ \mathcal{P}_\mathcal{S} $ and the Hamming distance $ d $, $ d(\mathcal{P}_\mathcal{S}) $ coincides with the minimum coset distance introduced in \cite{duursma} or the first relative generalized Hamming weight \cite{luo}. For a linear coding scheme and the rank distance, it coincides with the first relative generalized rank weight \cite{rgrw}.

\subsection{Rank weights and rank supports}

Now we turn to rank weights. We first observe the following obvious fact from linear algebra.

\begin{lemma} \label{change basis}
Let $ \alpha_1, \alpha_2, \ldots, \alpha_m $ and $ \beta_1, \beta_2, \ldots, \beta_m $ be two bases of $ \mathbb{F}_{q^m} $ over $ \mathbb{F}_q $, and let $ \mathbf{c} \in \mathbb{F}_{q^m}^n $ be a vector. It can be written in a unique way as
$$ \mathbf{c} = \sum_{i=1}^m \mathbf{c}_i \alpha_i = \sum_{i=1}^m \mathbf{d}_i \beta_i, $$
where $ \mathbf{c}_i, \mathbf{d}_i \in \mathbb{F}_q^n $. Moreover,
$$ \langle \mathbf{c}_1, \mathbf{c}_2, \ldots, \mathbf{c}_m \rangle_{\mathbb{F}_q} = \langle \mathbf{d}_1, \mathbf{d}_2 \ldots, \mathbf{d}_m \rangle_{\mathbb{F}_q} \subset \mathbb{F}_q^n. $$
\end{lemma}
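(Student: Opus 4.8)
The plan is to reduce everything to a single invertible change-of-basis matrix over $ \mathbb{F}_q $ and then read the claim off from row spaces of matrices.

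First, I would dispose of the uniqueness assertion, which is really a coordinatewise statement and has nothing to do with $ \mathbf{c} $ being a vector rather than a scalar. Writing $ \mathbf{c} = (c^{(1)}, \ldots, c^{(n)}) $ with $ c^{(j)} \in \mathbb{F}_{q^m} $, each coordinate has a unique expansion $ c^{(j)} = \sum_{i=1}^m c_{ij} \alpha_i $ with $ c_{ij} \in \mathbb{F}_q $, simply because $ \alpha_1, \ldots, \alpha_m $ is an $ \mathbb{F}_q $-basis of the $ m $-dimensional $ \mathbb{F}_q $-vector space $ \mathbb{F}_{q^m} $. Setting $ \mathbf{c}_i = (c_{i1}, \ldots, c_{in}) \in \mathbb{F}_q^n $ produces the displayed expansion $ \mathbf{c} = \sum_i \mathbf{c}_i \alpha_i $, and uniqueness follows coordinate by coordinate; the same applies to the $ \beta_i $, giving $ \mathbf{c} = \sum_i \mathbf{d}_i \beta_i $ uniquely.

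Next I would introduce the invertible matrix $ P = (p_{ij}) \in \mathbb{F}_q^{m \times m} $ expressing one basis in terms of the other, say $ \beta_j = \sum_{i=1}^m p_{ij} \alpha_i $. Substituting this into $ \mathbf{c} = \sum_{j=1}^m \mathbf{d}_j \beta_j $ and collecting the coefficient of each $ \alpha_i $ gives
$$ \mathbf{c} = \sum_{i=1}^m \left( \sum_{j=1}^m p_{ij} \mathbf{d}_j \right) \alpha_i . $$
By the uniqueness established above, $ \mathbf{c}_i = \sum_{j=1}^m p_{ij} \mathbf{d}_j $ for every $ i $. Equivalently, if $ C, D \in \mathbb{F}_q^{m \times n} $ denote the matrices whose $ i $-th rows are $ \mathbf{c}_i $ and $ \mathbf{d}_i $ respectively, then $ C = PD $.

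Finally, since $ P $ is invertible, the $ \mathbb{F}_q $-row space of $ C = PD $ coincides with that of $ D $: each $ \mathbf{c}_i $ lies in $ \langle \mathbf{d}_1, \ldots, \mathbf{d}_m \rangle_{\mathbb{F}_q} $ by the identity above, and running the same argument with $ P^{-1} $ (which expresses the $ \alpha $-basis in terms of the $ \beta $-basis, i.e.\ $ D = P^{-1} C $) yields the reverse inclusion. Hence $ \langle \mathbf{c}_1, \ldots, \mathbf{c}_m \rangle_{\mathbb{F}_q} = \langle \mathbf{d}_1, \ldots, \mathbf{d}_m \rangle_{\mathbb{F}_q} $, as claimed. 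There is no genuine obstacle here; the only point requiring care is keeping the index convention for $ P $ consistent, so that the substitution produces the matrix product $ C = PD $ rather than its transpose, after which invariance of the row space under left multiplication by an invertible matrix finishes the proof.
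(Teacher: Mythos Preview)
Your proof is correct. The paper does not actually provide a proof of this lemma; it introduces it with the sentence ``We first observe the following obvious fact from linear algebra'' and states it without argument. Your approach via the change-of-basis matrix $P \in \mathrm{GL}_m(\mathbb{F}_q)$ and invariance of the row space under left multiplication by an invertible matrix is exactly the standard justification one would supply, and there is nothing to compare against.
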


\begin{definition}[\textbf{\cite{gabidulin}, \cite[Section II.D]{rgrw}}]
Choose one of such bases $ \alpha_1, \alpha_2, \ldots, \alpha_m $, and a vector $ \mathbf{c} \in \mathbb{F}_{q^m}^n $. We define the rank support \cite{rgrw} of $ \mathbf{c} $ as
$$ G(\mathbf{c}) = \langle \mathbf{c}_1, \mathbf{c}_2, \ldots, \mathbf{c}_m \rangle_{\mathbb{F}_q}, $$
where $ \mathbf{c} = \sum_{i=1}^m \mathbf{c}_i \alpha_i $ and $ \mathbf{c}_i \in \mathbb{F}_q^n $. The rank weight of $ \mathbf{c} $ \cite{gabidulin} is then $ {\rm wt_R} (\mathbf{c}) = \dim (G(\mathbf{c})) $.
\end{definition}

From the previous lemma it follows that $ G(\mathbf{c}) $ (and $ {\rm wt_R} (\mathbf{c}) $) does not depend on the choice of the basis. However, from now on, we fix one such basis $ \alpha_1, \alpha_2, \ldots, \alpha_m $.

\begin{definition}[\textbf{\cite[Definition 1]{slides}}]
For each linear subspace $ D \subset \mathbb{F}_{q^m}^n $, we define its rank support as $ G(D) = \sum_{\mathbf{d} \in D} G(\mathbf{d}) $ and its rank weight as $ {\rm wt_R}(D) = \dim(G(D)) $.
\end{definition}

\begin{remark} \label{vector to matrix}
We can associate each vector $ \mathbf{c} \in \mathbb{F}_{q^m}^n $ with a matrix over $ \mathbb{F}_q $, which we denote as follows:
\begin{displaymath}
\mu(\mathbf{c}) = \left( \begin{array}{cccc}
c_{1,1} & c_{1,2} & \ldots & c_{1,n} \\
c_{2,1} & c_{2,2} & \ldots & c_{2,n} \\
\vdots & \vdots & \ddots & \vdots \\
c_{m,1} & c_{m,2} & \ldots & c_{m,n} \\
\end{array} \right),
\end{displaymath}
where $ \mathbf{c} = \sum_{i=1}^m \alpha_i \mathbf{c}_i $ and $ \mathbf{c}_i = (c_{i,1}, c_{i,2}, \ldots, c_{i,n}) \in \mathbb{F}_q^n $. Note that $ \alpha_i \mathbf{e}_j $, where $ \mathbf{e}_j $ is the canonical basis of $ \mathbb{F}_{q^m}^n $, for $ i=1,2,\ldots, m $ and $ j=1,2, \ldots, n $, is a basis of $ \mathbb{F}_{q^m}^n $ over $ \mathbb{F}_q $. It follows that $ \mu : \mathbb{F}_{q^m}^n \longrightarrow \mathbb{F}_q^{m \times n} $ is an $ \mathbb{F}_q $-linear vector space isomorphism. Moreover, the rank support of $ \mathbf{c} $ is the row space of $ \mu(\mathbf{c}) $, which we denote by $ {\rm row}(\mu(\mathbf{c})) $, and the rank weight of $ \mathbf{c} $ is the rank of $ \mu(\mathbf{c}) $, denoted by $ {\rm Rk}(\mu(\mathbf{c})) $.

The rank weight of a subspace $ D \subset \mathbb{F}_{q^m}^n $ is then the rank of the matrix obtained by appending all rows of all matrices corresponding to the vectors in $ D $. It can be shown \cite[Proposition 3 (4)]{slides} that we can take the vectors in a basis of $ D $.
\end{remark}

Note that $ G(\mathbf{c}) = G(\langle \mathbf{c} \rangle) $ and thus $ {\rm wt_R}(\mathbf{c}) = {\rm wt_R}( \langle \mathbf{c} \rangle ) $, for every $ \mathbf{c} \in \mathbb{F}_{q^m}^n $.

\subsection{Trace codes, subfield codes and Galois closures}

Now we gather some tools from the literature regarding trace and subfield codes, and Galois closures. More details can be found in \cite{galoisinvariance}, \cite[Section 3.8]{pless}, \cite[Section II]{stichtenoth} or \cite[Chapter 9]{stichtenothbook}:

\begin{definition}
For a vector $ \mathbf{x} = (x_1, x_2, \ldots, x_n) \in \mathbb{F}_{q^m}^n $ and any integer $ i \geq 0 $, we define $ \mathbf{x}^{q^i} = (x_1^{q^i}, x_2^{q^i}, \ldots, x_n^{q^i}) $. Then we define the trace map on vectors as follows
$$ {\rm Tr} : \mathbb{F}_{q^m}^n \longrightarrow \mathbb{F}_q^n: \mathbf{x} \longmapsto \sum_{i=0}^{m-1} \mathbf{x}^{q^i}.  $$
For a linear subspace $ D \subset \mathbb{F}_{q^m}^n $, we define its Galois closure \cite[Definition]{stichtenoth} as 
$$ D^* = \sum_{i=0}^{m-1} D^{q^i}, $$
its trace code as $ {\rm Tr}(D) = \{ {\rm Tr}(\mathbf{d}) \mid \mathbf{d} \in D \} $ and its subfield code as $ D \vert_{\mathbb{F}_q} = D \cap \mathbb{F}_q^n $. We say that $ D $ is Galois closed if $ D = D^* $. If $ D \subset \mathbb{F}_q^n $ and is $ \mathbb{F}_q $-linear, we define its extended code as $ D \otimes \mathbb{F}_{q^m} $, that is, the code generated over $ \mathbb{F}_{q^m} $ by the set $ D $, also denoted as $ \langle D \rangle_{\mathbb{F}_{q^m}} \subset \mathbb{F}_{q^m}^n $.
\end{definition}

Note that $ {\rm Tr} $ is $ \mathbb{F}_q $-linear and $ D^* $ is the smallest Galois closed linear code containing $ D $ \cite{stichtenoth}. Moreover, a linear subspace $ D \subset \mathbb{F}_{q^m}^n $ is Galois closed if, and only if $ D^q \subset D $, which is equivalent to $ D^q = D $. 

The following proposition easily follows from \cite[Lemma 1]{stichtenoth}. The equivalence between items 1, 2, 4 and 5 were also noticed in \cite{galoisinvariance, slides}.

\begin{proposition} [\textbf{\cite{stichtenoth}}] \label{galois}
For every linear code $ C \subset \mathbb{F}_{q^m}^n $ of dimension $ k $, the following are equivalent:
\begin{enumerate}
\item
$ C $ is Galois closed.
\item
$ C $ admits a basis of vectors in $ \mathbb{F}_q^n $.
\item
$ C $ has a basis consisting of vectors of rank weight $ 1 $.
\item
$ C = C \vert_{\mathbb{F}_q} \otimes \mathbb{F}_{q^m} $.
\item
$ C = {\rm Tr}(C) \otimes \mathbb{F}_{q^m} $.
\item
$ {\rm Tr}(C) = C \vert _{\mathbb{F}_q} $.
\item
$ \dim({\rm Tr}(C)) = k $.
\item
$ \dim(C \vert _{\mathbb{F}_q}) = k $.
\end{enumerate}
\end{proposition}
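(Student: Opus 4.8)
The plan is to prove the cycle of implications following the chain $1 \Rightarrow 2 \Rightarrow 3 \Rightarrow 1$ for the ``structural'' conditions, then link in the trace/subfield conditions $4,5,6$ and finally the dimension conditions $7,8$. Throughout I would lean on \cite[Lemma 1]{stichtenoth}, which already records that $D^* = \mathrm{Tr}(D)\otimes\mathbb{F}_{q^m}$ for any linear $D$, and on the elementary observation (from Lemma \ref{change basis} and Remark \ref{vector to matrix}) that a single nonzero vector $\mathbf c \in \mathbb{F}_{q^m}^n$ has rank weight $1$ if and only if $\mathbf c = \alpha \mathbf v$ for some $\alpha \in \mathbb{F}_{q^m}^\times$ and $\mathbf v \in \mathbb{F}_q^n$; equivalently, $\langle \mathbf c\rangle$ is Galois closed. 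This last equivalence is what makes conditions $2$ and $3$ essentially interchangeable once one knows a basis can be chosen inside $\mathbb{F}_q^n$, since scaling a basis vector by an element of $\mathbb{F}_{q^m}^\times$ does not change its rank weight.

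Concretely: For $1 \Rightarrow 2$, since $C$ is Galois closed, $C = C^* = \mathrm{Tr}(C)\otimes\mathbb{F}_{q^m}$ by \cite[Lemma 1]{stichtenoth}; as $\mathrm{Tr}(C)\subset\mathbb{F}_q^n$, any $\mathbb{F}_q$-basis of $\mathrm{Tr}(C)$ is an $\mathbb{F}_{q^m}$-generating set of $C$ consisting of vectors in $\mathbb{F}_q^n$, and one extracts a basis from it — this also gives $5$ directly, and comparing dimensions gives $\dim(\mathrm{Tr}(C)) = \dim C = k$, i.e.\ $7$. For $2 \Rightarrow 3$: a vector in $\mathbb{F}_q^n\setminus\{\mathbf 0\}$ has rank weight $1$, so a basis in $\mathbb{F}_q^n$ is already a basis of rank-$1$ vectors. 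For $3 \Rightarrow 1$: if $\mathbf c_1,\dots,\mathbf c_k$ is a basis with $\mathrm{wt_R}(\mathbf c_j) = 1$, write $\mathbf c_j = \alpha_j \mathbf v_j$ with $\mathbf v_j \in \mathbb{F}_q^n$; then $\mathbf c_j^q = \alpha_j^q \mathbf v_j \in C$ because $\mathbf v_j = \alpha_j^{-1}\mathbf c_j \in C$, so $C^q \subset C$, which as noted after the definition is equivalent to $C$ being Galois closed.

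For the remaining equivalences I would argue: $4 \Leftrightarrow 5$ and $4 \Rightarrow 6$ follow from the general identities $C\vert_{\mathbb{F}_q}\otimes\mathbb{F}_{q^m} \subset C$ and $\mathrm{Tr}(C)\otimes\mathbb{F}_{q^m} = C^*$ together with the inclusion $\mathrm{Tr}(C)\subset C\vert_{\mathbb{F}_q}$ whenever $C$ is Galois closed (apply $\mathrm{Tr}$ to $C$ and use $C^q = C$). The dimension conditions $7 \Leftrightarrow 8$ and their equivalence with $6$ come from the standard fact that $\dim_{\mathbb{F}_q}\mathrm{Tr}(C) \le k$ and $\dim_{\mathbb{F}_q}(C\vert_{\mathbb{F}_q}) \le k$ always hold, with $\mathrm{Tr}(C)\otimes\mathbb{F}_{q^m} = C$ (hence Galois closedness) forced exactly when equality is attained — this is precisely \cite[Lemma 1]{stichtenoth} combined with the observation that for a Galois-closed code the extension of its subfield code recovers it. I would close the loop by showing, say, $7 \Rightarrow 1$: $\dim(\mathrm{Tr}(C)) = k$ forces $\mathrm{Tr}(C)\otimes\mathbb{F}_{q^m} = C^*$ to have dimension $k = \dim C$, and since $C \subset C^*$ always, $C = C^*$.

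The only mildly delicate point — the ``main obstacle'', such as it is — is bookkeeping the direction of the trace/subfield inclusions so as not to circularly invoke what is being proved; in particular one must be careful that $\mathrm{Tr}(C) = C\vert_{\mathbb{F}_q}$ is a genuine equality of $\mathbb{F}_q$-subspaces of $\mathbb{F}_q^n$ and not merely an inclusion, which is why I route condition $6$ through the dimension count rather than trying to prove the reverse inclusion $C\vert_{\mathbb{F}_q}\subset\mathrm{Tr}(C)$ by hand. Everything else is a routine transcription of \cite[Lemma 1]{stichtenoth} and basic linear algebra over $\mathbb{F}_q$ versus $\mathbb{F}_{q^m}$, which is presumably why the authors attribute the bulk of it to \cite{stichtenoth, galoisinvariance, slides} and only claim the rank-weight reformulation (condition $3$) as the new ingredient.
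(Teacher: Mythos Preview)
The paper does not actually supply a proof of this proposition; it simply states that it ``easily follows from \cite[Lemma 1]{stichtenoth}'' and notes that the equivalences among items 1, 2, 4, 5 were already observed in \cite{galoisinvariance, slides}. Your proposal is precisely the natural fleshing-out of that citation, routing everything through the identity $D^* = \mathrm{Tr}(D)\otimes\mathbb{F}_{q^m}$, and the cycle $1\Rightarrow 2\Rightarrow 3\Rightarrow 1$ together with the closing implication $7\Rightarrow 1$ are correct as written.

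There is one slip worth flagging: you claim that $\dim_{\mathbb{F}_q}\mathrm{Tr}(C)\le k$ always holds. The inequality goes the other way --- since $\mathrm{Tr}(C)\otimes\mathbb{F}_{q^m}=C^*\supset C$, one has $\dim_{\mathbb{F}_q}\mathrm{Tr}(C)=\dim_{\mathbb{F}_{q^m}}C^*\ge k$, while it is $\dim_{\mathbb{F}_q}(C\vert_{\mathbb{F}_q})\le k$ that holds in general (from $C\vert_{\mathbb{F}_q}\otimes\mathbb{F}_{q^m}\subset C$). Fortunately your actual argument for $7\Rightarrow 1$ does not use the wrong direction: you correctly deduce $\dim C^*=k$ from $\dim\mathrm{Tr}(C)=k$ and conclude $C=C^*$ from $C\subset C^*$. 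Likewise $8\Rightarrow 4$ follows from $C\vert_{\mathbb{F}_q}\otimes\mathbb{F}_{q^m}\subset C$ with equal dimensions. So the logical skeleton survives; just correct the stated inequality and, if you want a clean direct $7\Leftrightarrow 8$, note that together with the always-valid inclusion $C\vert_{\mathbb{F}_q}\subset\mathrm{Tr}(C)$ (take $\alpha$ with $\mathrm{Tr}(\alpha)=1$ and observe $\mathbf v=\mathrm{Tr}(\alpha\mathbf v)$ for $\mathbf v\in\mathbb{F}_q^n$) the chain $\dim C\vert_{\mathbb{F}_q}\le k\le\dim\mathrm{Tr}(C)$ makes both equalities equivalent once either forces Galois closedness.
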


We give a final tool due to Delsarte \cite[Theorem 2]{delsarte}:

\begin{lemma}[\textbf{Delsarte \cite{delsarte}}] \label{delsarte}
For every linear code $ C \subset \mathbb{F}_{q^m}^n $, we have that
$$ (C \vert_{\mathbb{F}_q})^\perp = {\rm Tr}(C^\perp), \quad \textrm{and} \quad (C^\perp) \vert_{\mathbb{F}_q} = ({\rm Tr}(C))^\perp. $$
\end{lemma}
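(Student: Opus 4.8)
The plan is to deduce both identities from a single lemma: for every $\mathbb{F}_{q^m}$-linear subspace $D \subseteq \mathbb{F}_{q^m}^n$ one has $({\rm Tr}(D))^\perp = D^\perp \vert_{\mathbb{F}_q}$, where the dual on the left-hand side is the $\mathbb{F}_q$-dual taken inside $\mathbb{F}_q^n$ (legitimate, since ${\rm Tr}(D) \subseteq \mathbb{F}_q^n$). Granting this, the second identity of the statement is exactly this lemma applied to $D = C$. For the first identity, apply the lemma to $D = C^\perp$ to get $({\rm Tr}(C^\perp))^\perp = (C^\perp)^\perp \vert_{\mathbb{F}_q} = C \vert_{\mathbb{F}_q}$; since both ${\rm Tr}(C^\perp)$ and $C\vert_{\mathbb{F}_q}$ are $\mathbb{F}_q$-subspaces of $\mathbb{F}_q^n$ and taking $\perp$ inside $\mathbb{F}_q^n$ is an involution, dualizing once more yields ${\rm Tr}(C^\perp) = (C\vert_{\mathbb{F}_q})^\perp$.

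To prove the lemma, I would introduce the $\mathbb{F}_q$-bilinear ``trace form'' on $\mathbb{F}_{q^m}^n$, namely $\langle \mathbf{x}, \mathbf{y} \rangle := {\rm Tr}\big( \sum_{j=1}^n x_j y_j \big)$, where ${\rm Tr}$ now also denotes the field trace $\mathbb{F}_{q^m} \to \mathbb{F}_q$ (this overloading is consistent with the componentwise trace on vectors). The first step is an adjunction computation: for $\mathbf{u} \in \mathbb{F}_q^n$ and $\mathbf{v} \in \mathbb{F}_{q^m}^n$, the $\mathbb{F}_q$-linearity of the trace together with $u_j \in \mathbb{F}_q$ gives $\mathbf{u} \cdot {\rm Tr}(\mathbf{v}) = \sum_j u_j {\rm Tr}(v_j) = \sum_j {\rm Tr}(u_j v_j) = \langle \mathbf{u}, \mathbf{v} \rangle$. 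Hence $\mathbf{u} \in ({\rm Tr}(D))^\perp$ if and only if $\mathbf{u}$ is $\langle\cdot,\cdot\rangle$-orthogonal to every element of $D$. The second step identifies the $\langle\cdot,\cdot\rangle$-orthogonal of $D$ inside $\mathbb{F}_{q^m}^n$ with the ordinary $\mathbb{F}_{q^m}$-dual $D^\perp$: the inclusion $D^\perp \subseteq \{ \mathbf{w} : \langle \mathbf{w}, \mathbf{v}\rangle = 0 \ \forall \mathbf{v} \in D \}$ is immediate, and for the reverse, if $\langle \mathbf{w}, \mathbf{v} \rangle = 0$ for all $\mathbf{v} \in D$ then, using that $D$ is $\mathbb{F}_{q^m}$-linear, we may replace $\mathbf{v}$ by $\lambda \mathbf{v}$ for arbitrary $\lambda \in \mathbb{F}_{q^m}$, obtaining ${\rm Tr}(\lambda (\mathbf{v} \cdot \mathbf{w})) = 0$ for all $\lambda$; nondegeneracy of the trace form of $\mathbb{F}_{q^m}/\mathbb{F}_q$ forces $\mathbf{v} \cdot \mathbf{w} = 0$, i.e. $\mathbf{w} \in D^\perp$. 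Combining the two steps, $({\rm Tr}(D))^\perp = \{ \mathbf{u} \in \mathbb{F}_q^n : \langle \mathbf{u}, \mathbf{v}\rangle = 0\ \forall \mathbf{v} \in D \} = D^\perp \cap \mathbb{F}_q^n = D^\perp \vert_{\mathbb{F}_q}$, as claimed.

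I do not expect a genuine obstacle here; the only points requiring care are bookkeeping — keeping the $\mathbb{F}_q$-dual taken in $\mathbb{F}_q^n$ distinct from the $\mathbb{F}_{q^m}$-dual taken in $\mathbb{F}_{q^m}^n$, and making sure that the reduction from the lemma to the two stated identities uses only the involutivity of $\perp$ on $\mathbb{F}_q$-subspaces of $\mathbb{F}_q^n$ — and the one substantive ingredient, which is the nondegeneracy of the trace bilinear form $\mathbb{F}_{q^m} \times \mathbb{F}_{q^m} \to \mathbb{F}_q$ used in the second step. (If one prefers to avoid the involutivity shuffle, one can instead establish the inclusion ${\rm Tr}(C^\perp) \subseteq (C\vert_{\mathbb{F}_q})^\perp$ directly from $\mathbf{c} \cdot \mathbf{x} = 0$ via ${\rm Tr}(\mathbf{c})\cdot\mathbf{x} = {\rm Tr}(\mathbf{c}\cdot\mathbf{x})$ for $\mathbf{x} \in C\vert_{\mathbb{F}_q}$, and then match $\mathbb{F}_q$-dimensions using the same adjunction, since $\dim_{\mathbb{F}_q} {\rm Tr}(C^\perp) = n - \dim_{\mathbb{F}_q}(C \cap \mathbb{F}_q^n) = \dim_{\mathbb{F}_q}(C\vert_{\mathbb{F}_q})^\perp$.)
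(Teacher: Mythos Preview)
Your argument is correct: the adjunction $\mathbf{u}\cdot{\rm Tr}(\mathbf{v})={\rm Tr}(\mathbf{u}\cdot\mathbf{v})$ for $\mathbf{u}\in\mathbb{F}_q^n$, combined with the nondegeneracy of the field trace, yields $({\rm Tr}(D))^\perp = D^\perp\vert_{\mathbb{F}_q}$, and both identities of the lemma follow by specializing $D$ and using that $\perp$ is an involution on $\mathbb{F}_q$-subspaces of $\mathbb{F}_q^n$. The bookkeeping concerns you flag (distinguishing the two duals, and where the involutivity is invoked) are handled cleanly.

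There is nothing to compare against in the paper itself: the lemma is quoted there as a tool from the literature (attributed to Delsarte \cite{delsarte}) and is not given a proof. Your proof is essentially the classical one via the trace bilinear form, so it matches what one finds in the cited source and in standard references.
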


\section{Equivalent definitions of rank weights and generalized rank weights} \label{sec equivalent defs}

In this section we give new equivalent definitions of generalized rank weights \cite{rgrw, oggier}. In contrast with \cite{jerome, slides, oggier}, we also treat relative weights \cite{rgrw}. Both have been proven to characterize worst-case information leakage and error and erasure correction on networks \cite{rgrw, oggier}.

\subsection{The Hamming case}

We briefly recall the definitions of Hamming weights, generalized Hamming weights \cite{wei} and their relative versions \cite{luo}. Following \cite[Section II]{wei} (see also \cite[Section 7.10]{pless}), given a linear subspace $ D \subset \mathbb{F}_{q^m}^n $, we define its support as $ {\rm Supp}(D) = \{ i \mid \exists \mathbf{d} \in D, d_i \neq 0 \} $ and its Hamming weight as $ {\rm wt_H}(D) = \# {\rm Supp}(D) $. The $ r $-th generalized Hamming weight of a code $ C $ \cite{wei}, and $ r $-th relative generalized Hamming weight of a nested linear code pair $ C_2 \varsubsetneq C_1 $ \cite{luo} are, respectively,
\begin{equation}
d_{H,r}(C) = \min \{ {\rm wt_H} (D) \mid D \subset C, \dim(D) = r \},  
\label{hamming}
\end{equation}
\begin{equation}
\begin{split}
M_{H,r}(C_1,C_2) = \min \{ & {\rm wt_H} (D) \mid D \subset C_1, \\
 & D \cap C_2 = 0, \dim(D) = r \}.
\label{hamming relative}
\end{split}
\end{equation}

\subsection{Existing equivalent definitions}

We briefly review the existing equivalent definitions of generalized rank weights and their relative versions. We attribute the following lemma to a combination of \cite{stichtenoth} with \cite{rgrw} for $ \dim(D) = 1 $, and a combination of \cite{stichtenoth} with \cite{slides} for the general case, and show why:

\begin{lemma} [\textbf{\cite{slides, rgrw, stichtenoth}}] \label{charact stic}
For any linear subspace $ D \subset \mathbb{F}_{q^m}^n $, 
$$ {\rm wt_R} (D) = {\rm wt_R} (D^*) = \dim({\rm Tr}(D)) = \dim (D^*). $$
\end{lemma}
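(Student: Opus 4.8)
The plan is to establish the chain of equalities by first proving the two "outer" identities $ {\rm wt_R}(D) = {\rm wt_R}(D^*) $ and $ {\rm wt_R}(D^*) = \dim(D^*) $, and then connecting to the trace code via $ \dim(D^*) = \dim({\rm Tr}(D)) $. The key structural fact to exploit is that the Galois closure $ D^* $ is Galois closed, so by Proposition \ref{galois} it admits a basis of vectors lying in $ \mathbb{F}_q^n $, i.e.\ a basis of rank-weight-one vectors.

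First I would show $ {\rm wt_R}(D) = {\rm wt_R}(D^*) $. The inclusion $ D \subset D^* $ gives $ G(D) \subset G(D^*) $, hence $ {\rm wt_R}(D) \leq {\rm wt_R}(D^*) $. For the reverse, it suffices to check $ G(D^{q^i}) \subset G(D) $ for each $ i $, since $ G(D^*) = \sum_i G(D^{q^i}) $ by definition of $ D^* $ and additivity of $ G $. Writing $ \mathbf{d} = \sum_j \mathbf{d}_j \alpha_j $ with $ \mathbf{d}_j \in \mathbb{F}_q^n $, Frobenius gives $ \mathbf{d}^{q^i} = \sum_j \mathbf{d}_j \alpha_j^{q^i} $, and since each $ \alpha_j^{q^i} \in \mathbb{F}_{q^m} $ is an $ \mathbb{F}_q $-linear combination of the fixed basis $ \alpha_1,\dots,\alpha_m $, the $ \mathbb{F}_q $-coordinate vectors of $ \mathbf{d}^{q^i} $ lie in $ \langle \mathbf{d}_1,\dots,\mathbf{d}_m\rangle_{\mathbb{F}_q} = G(\mathbf{d}) \subset G(D) $. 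Thus $ G(\mathbf{d}^{q^i}) \subset G(D) $, giving $ G(D^{q^i}) \subset G(D) $ and hence $ {\rm wt_R}(D^*) \leq {\rm wt_R}(D) $.

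Next, $ {\rm wt_R}(D^*) = \dim(D^*) $. Since $ D^* $ is Galois closed, by Proposition \ref{galois} it has a basis $ \mathbf{b}_1,\dots,\mathbf{b}_s \in \mathbb{F}_q^n $ where $ s = \dim(D^*) $. These vectors are $ \mathbb{F}_q $-linearly independent (they are even $ \mathbb{F}_{q^m} $-linearly independent), and for $ \mathbf{b} \in \mathbb{F}_q^n $ we have $ G(\mathbf{b}) = \langle \mathbf{b}\rangle_{\mathbb{F}_q} $ since its only nonzero $ \mathbb{F}_q $-coordinate vector is $ \mathbf{b} $ itself. Using Remark \ref{vector to matrix} together with \cite[Proposition 3 (4)]{slides}, $ {\rm wt_R}(D^*) $ is the rank of the matrix whose rows are (the $ \mathbb{F}_q $-coordinate rows of) a basis of $ D^* $; taking the basis $ \mathbf{b}_1,\dots,\mathbf{b}_s $, this matrix is exactly the $ s \times n $ matrix with rows $ \mathbf{b}_i $, which has rank $ s = \dim(D^*) $.

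Finally, $ \dim(D^*) = \dim({\rm Tr}(D)) $. Here I would invoke Proposition \ref{galois} applied to $ C = D^* $: since $ D^* $ is Galois closed, item 7 gives $ \dim({\rm Tr}(D^*)) = \dim(D^*) $, so it remains to show $ {\rm Tr}(D^*) = {\rm Tr}(D) $. The trace map kills the Frobenius twist in the sense that $ {\rm Tr}(\mathbf{d}^{q^i}) = {\rm Tr}(\mathbf{d}) $ for all $ i $ (the sum $ \sum_{j} \mathbf{d}^{q^{i+j}} $ over $ j = 0,\dots,m-1 $ is just a cyclic reindexing), and $ {\rm Tr} $ is $ \mathbb{F}_q $-linear, so $ {\rm Tr}(D^{q^i}) = {\rm Tr}(D) $ and therefore $ {\rm Tr}(D^*) = {\rm Tr}(\sum_i D^{q^i}) = {\rm Tr}(D) $. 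Combining the three displayed equalities yields the claim.

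The main obstacle I anticipate is not any single deep step but the bookkeeping in the first part: one must be careful that $ G $ is genuinely additive over sums of subspaces and that the Frobenius action on $ \mathbb{F}_q $-coordinate vectors is correctly tracked through the change from the basis $ \{\alpha_j\} $ to $ \{\alpha_j^{q^i}\} $, which is where Lemma \ref{change basis} (basis-independence of $ G $) does the real work. Everything else reduces to citing Proposition \ref{galois} and Remark \ref{vector to matrix}.
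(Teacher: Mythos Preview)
Your proof is correct. The approach differs from the paper's in a useful way: the paper's argument is largely citation-based, invoking \cite[Lemma 11]{rgrw} for the equality $ {\rm wt_R}(D) = \dim(D^*) $ when $ \dim(D) = 1 $ and \cite[Theorem 16]{slides} for the identity $ G(D) = {\rm Tr}(D) $ in general, and otherwise only records the observation $ {\rm Tr}(D^*) = {\rm Tr}(D) $ together with Proposition \ref{galois}. You replace those external citations with direct, self-contained computations: the Frobenius argument in your Step 1 effectively reproves $ G(D^{q^i}) = G(D) $ (hence $ {\rm wt_R}(D) = {\rm wt_R}(D^*) $) using Lemma \ref{change basis}, and your Step 2 extracts $ {\rm wt_R}(D^*) = \dim(D^*) $ from the $ \mathbb{F}_q $-basis of $ D^* $ guaranteed by Proposition \ref{galois}. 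Your Step 3 coincides with the paper's first sentence verbatim. One cosmetic point in Step 2: the matrix obtained by stacking the $ \mu(\mathbf{b}_i) $ for $ \mathbf{b}_i \in \mathbb{F}_q^n $ is $ sm \times n $, not $ s \times n $; but its row space is exactly $ \langle \mathbf{b}_1,\dots,\mathbf{b}_s\rangle_{\mathbb{F}_q} $, so the rank is indeed $ s $ as you assert. The trade-off is clear: your route yields an in-paper proof that does not depend on the cited lemmas, while the paper's route is shorter but defers the substance to \cite{rgrw, slides}.
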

\begin{proof}
It is immediate that $ \dim(D^*) = \dim({\rm Tr}(D^*)) $ from Proposition \ref{galois}, and moreover it holds that $ {\rm Tr}(D^*) = {\rm Tr}(D) $.

The equality $ {\rm wt_R} (D) = \dim (D^*) $ is proven in \cite[Lemma 11]{rgrw} for $ \dim(D) = 1 $, hence the result follows immediately in that case.

On the other hand, \cite[Theorem 16]{slides} states that $ G(D) = {\rm Tr}(D) $, hence $ {\rm wt_R} (D) = \dim({\rm Tr}(D)) $ and the result follows in the general case.
\end{proof}

Now we define generalized rank weights, introduced in \cite{oggier} for $ n \leq m $, and their relative versions, both introduced in general in \cite{rgrw}:

\begin{definition}[\textbf{\cite[Definition 2]{rgrw}}]
For a linear code $ C \subset \mathbb{F}_{q^m}^n $ and $ 1 \leq r \leq k = \dim(C) $, we define its $ r $-th generalized rank weight as
\begin{equation}
\begin{split}
d_{R,r}(C) = \min \{ & \dim V \mid V \subset \mathbb{F}_{q^m}^n, V = V^*, \\
 & \dim(C \cap V) \geq r \}.
\end{split}
\label{def1}
\end{equation}
For a nested linear code pair $ C_2 \varsubsetneq C_1 \subset \mathbb{F}_{q^m}^n $, we define its $ r $-th relative generalized rank weight as
\begin{equation}  
\begin{split}
M_{R,r}(C_1, C_2) = \min \{ & \dim V \mid V \subset \mathbb{F}_{q^m}^n, V = V^*, \\
 & \dim((C_1 \cap V)/(C_2 \cap V)) \geq r \}.
\label{def1r}
\end{split}
\end{equation}
\end{definition}

Fix a linear code $ C \subset \mathbb{F}_{q^m}^n $ and $ 1 \leq r \leq k = \dim(C) $. We have the following equivalent definitions from the literature:

\begin{lemma} [\textbf{\cite[Corollary 17]{slides}}]
The $ r $-th generalized rank weight $ d_{R,r}(C) $ is equal to
\begin{equation}
\min \{ {\rm wt_R} (D) \mid D \subset C, \dim(D) = r \}.
\label{def3}
\end{equation}
\end{lemma}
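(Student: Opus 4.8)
The plan is to prove the two inequalities separately, using Lemma \ref{charact stic} as the bridge between rank weights of subspaces and dimensions of Galois closures.

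First I would establish $ d_{R,r}(C) \leq \min \{ {\rm wt_R}(D) \mid D \subset C, \dim(D) = r \} $. Take any $ D \subset C $ with $ \dim(D) = r $, and set $ V = D^* $. By definition $ V $ is Galois closed, so $ V = V^* $. Clearly $ D \subset C \cap V $ (since $ D \subset C $ and $ D \subset D^* = V $), so $ \dim(C \cap V) \geq \dim(D) = r $, meaning $ V $ is admissible in the minimization \eqref{def1}. Hence $ d_{R,r}(C) \leq \dim V = \dim(D^*) = {\rm wt_R}(D) $, where the last equality is Lemma \ref{charact stic}. Taking the minimum over all such $ D $ gives the desired inequality.

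Next I would establish the reverse inequality $ d_{R,r}(C) \geq \min \{ {\rm wt_R}(D) \mid D \subset C, \dim(D) = r \} $. Let $ V \subset \mathbb{F}_{q^m}^n $ be Galois closed with $ \dim(C \cap V) \geq r $ achieving the minimum in \eqref{def1}. Choose a subspace $ D \subset C \cap V $ with $ \dim(D) = r $ exactly. Then $ D^* \subset V^* = V $ because $ D \subset V $ and $ V $ is the smallest Galois closed space containing itself (using that Galois closure is monotone with respect to inclusion). Therefore $ {\rm wt_R}(D) = \dim(D^*) \leq \dim(V) = d_{R,r}(C) $ by Lemma \ref{charact stic}. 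Since $ D \subset C $ with $ \dim(D) = r $, this $ D $ is a competitor in the right-hand minimization, so the minimum there is $ \leq d_{R,r}(C) $.

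I expect the main subtlety — though it is minor — to be the monotonicity of the Galois closure operation ($ D_1 \subset D_2 \implies D_1^* \subset D_2^* $) and the fact that $ \dim(C \cap V) \geq r $ always permits extracting a subspace $ D $ of $ C $ of dimension exactly $ r $ lying inside $ V $; both are elementary linear algebra facts about $ \sum_{i=0}^{m-1}(\cdot)^{q^i} $, already implicit in the remark that $ D^* $ is the smallest Galois closed code containing $ D $. Combining the two inequalities yields the claimed equality.
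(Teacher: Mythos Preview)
Your proof is correct and follows essentially the same approach as the paper. The paper itself does not prove this lemma directly (it is cited from \cite{slides}), but it does prove the relative version $(\ref{def1r}) = (\ref{def3r})$ in the proof of Theorem \ref{definitions relative}, using exactly your two-inequality argument via Lemma \ref{charact stic} and the monotonicity of the Galois closure; specializing that proof to $C_2 = 0$ recovers precisely what you wrote.
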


\begin{lemma} [\textbf{\cite[Proposition II.1]{jerome}}]
If $ n \leq m $, the $ r $-th generalized rank weight $ d_{R,r}(C) $ is equal to
\begin{equation}
\min \{ \max \{ {\rm wt_R}(\mathbf{x}) \mid \mathbf{x} \in D^* \} \mid D \subset C, \dim(D) = r \}.
\end{equation} \label{defoggier}
\end{lemma}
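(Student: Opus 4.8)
The plan is to reduce everything to the identity $d_{R,r}(C)=\min\{{\rm wt_R}(D)\mid D\subset C,\ \dim(D)=r\}$ recorded just above (equation $(\ref{def3})$), so that it suffices to prove the following purely linear-algebraic fact: for every linear subspace $D\subset\mathbb{F}_{q^m}^n$ with $n\leq m$ one has
$$ \max\{{\rm wt_R}(\mathbf{x})\mid \mathbf{x}\in D^*\} = {\rm wt_R}(D). $$
Granting this, one just takes the minimum over all $D\subset C$ with $\dim(D)=r$ on both sides and applies $(\ref{def3})$ to conclude. So I would fix such a $D$, put $s={\rm wt_R}(D)$, note $s=\dim(D^*)$ by Lemma $\ref{charact stic}$, and use that $D^*$ is Galois closed so that Proposition $\ref{galois}$ provides a basis $\mathbf{b}_1,\dots,\mathbf{b}_s$ of $D^*$ with all $\mathbf{b}_j\in\mathbb{F}_q^n$.

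The inequality ``$\leq$'' I would get directly from the definitions and needs no hypothesis on $n$: for any $\mathbf{x}\in D^*$ we have $G(\mathbf{x})\subseteq G(D^*)$, hence
$$ {\rm wt_R}(\mathbf{x})=\dim G(\mathbf{x})\leq \dim G(D^*)={\rm wt_R}(D^*)={\rm wt_R}(D)=s, $$
the last two equalities again by Lemma $\ref{charact stic}$.

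For ``$\geq$'', the only place $n\leq m$ enters, I would exhibit a single vector of $D^*$ of rank weight exactly $s$. Since $s=\dim(D^*)\leq n\leq m$, the vector $\mathbf{x}=\sum_{i=1}^{s}\alpha_i\mathbf{b}_i$ lies in $D^*$; writing $\mathbf{x}=\sum_{i=1}^{m}\alpha_i\mathbf{x}_i$ with $\mathbf{x}_i\in\mathbb{F}_q^n$ as in Remark $\ref{vector to matrix}$, we get $\mathbf{x}_i=\mathbf{b}_i$ for $i\leq s$ and $\mathbf{x}_i=\mathbf{0}$ for $i>s$, so $G(\mathbf{x})=\langle \mathbf{b}_1,\dots,\mathbf{b}_s\rangle_{\mathbb{F}_q}$. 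Since $\mathbf{b}_1,\dots,\mathbf{b}_s$ are linearly independent over $\mathbb{F}_{q^m}$ they are a fortiori linearly independent over $\mathbb{F}_q$, whence ${\rm wt_R}(\mathbf{x})=\dim G(\mathbf{x})=s$. This yields $\max\{{\rm wt_R}(\mathbf{x})\mid\mathbf{x}\in D^*\}\geq s$ and finishes the claim.

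The one genuine obstacle is this last step: one needs enough ``room'' to fold the $s$ basis vectors of $D^*$ into a single vector of $D^*$ whose row space (in the sense of Remark $\ref{vector to matrix}$) is all of $\langle\mathbf{b}_1,\dots,\mathbf{b}_s\rangle_{\mathbb{F}_q}$, which is possible exactly because the number $m$ of $\mathbb{F}_q$-basis elements $\alpha_i$ of $\mathbb{F}_{q^m}$ is at least $s={\rm wt_R}(D)\leq n$. This is precisely why the statement is restricted to $n\leq m$ and fails in general otherwise; everything else is routine bookkeeping with Lemmas $\ref{charact stic}$ and Proposition $\ref{galois}$ together with the characterization $(\ref{def3})$.
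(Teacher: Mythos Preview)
Your argument is correct. Note that the paper does not supply its own proof of this lemma: it is quoted verbatim from \cite[Proposition II.1]{jerome} and stated without proof, so there is no in-paper argument to compare against. Your reduction to the identity $\max\{{\rm wt_R}(\mathbf{x})\mid\mathbf{x}\in D^*\}={\rm wt_R}(D)$ via $(\ref{def3})$ is clean, and both halves are handled properly: the bound ${\rm wt_R}(\mathbf{x})\leq{\rm wt_R}(D)$ for $\mathbf{x}\in D^*$ follows from $G(\mathbf{x})\subset G(D^*)$ and Lemma~\ref{charact stic}, and your explicit witness $\mathbf{x}=\sum_{i=1}^{s}\alpha_i\mathbf{b}_i$ with $\mathbf{b}_1,\dots,\mathbf{b}_s\in\mathbb{F}_q^n$ a basis of $D^*$ (available by Proposition~\ref{galois}) indeed has $G(\mathbf{x})=\langle\mathbf{b}_1,\dots,\mathbf{b}_s\rangle_{\mathbb{F}_q}$ of dimension $s$, the hypothesis $n\leq m$ being used exactly to guarantee $s=\dim(D^*)\leq n\leq m$ so that the coefficients $\alpha_1,\dots,\alpha_s$ are available. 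Your remark that the inequality ``$\leq$'' holds without the restriction $n\leq m$, and that the restriction is genuinely needed for ``$\geq$'' (since ${\rm wt_R}(\mathbf{x})\leq m$ always), is also accurate.
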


\subsection{New equivalent definitions}

In this subsection, we give new equivalent definitions of rank weights, generalized rank weights and their relative versions.

\begin{theorem} \label{weights basis}
For any linear subspace $ D \subset \mathbb{F}_{q^m}^n $, we have that
$$ {\rm wt_R} (D) = \min \{ {\rm wt_H}(\varphi_B(D)) \mid B \subset \mathbb{F}_q^n \textrm{ is a basis of } \mathbb{F}_{q^m}^n \}, $$
where $ \varphi_B : \mathbb{F}_{q^m}^n \longrightarrow \mathbb{F}_{q^m}^n $ is the linear map defined as $ \varphi_B(\mathbf{c}) = \mathbf{x} $, where $ \mathbf{c} = \sum_{i=1}^n x_i \mathbf{v}_i $ and $ B = \{ \mathbf{v}_i \}_{i=1}^n $. In particular, for every vector $ \mathbf{c} \in \mathbb{F}_{q^m}^n $, we have that
\begin{equation*}
\begin{split}
{\rm wt_R}(\mathbf{c}) = \min \{ & {\rm wt_H}(\mathbf{x}) \mid \mathbf{c} = \sum_{i=1}^n x_i \mathbf{v}_i, \\
 & B = \{\mathbf{v}_i \}_{i=1}^n \subset \mathbb{F}_q^n \textrm{ is a basis of } \mathbb{F}_{q^m}^n \}.
\end{split}
\end{equation*}
\end{theorem}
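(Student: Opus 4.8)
The plan is to prove the two inequalities $\mathrm{wt_R}(D) \leq \mathrm{wt_H}(\varphi_B(D))$ for every basis $B \subset \mathbb{F}_q^n$, and that equality is attained for a suitable choice of $B$. For the first inequality, fix a basis $B = \{\mathbf{v}_i\}_{i=1}^n \subset \mathbb{F}_q^n$ and let $S = \mathrm{Supp}(\varphi_B(D)) \subseteq \{1,\dots,n\}$, so $\mathrm{wt_H}(\varphi_B(D)) = \#S$. Every $\mathbf{c} \in D$ can be written as $\mathbf{c} = \sum_{i \in S} x_i \mathbf{v}_i$ with $x_i \in \mathbb{F}_{q^m}$; expanding each $x_i$ in the fixed basis $\alpha_1,\dots,\alpha_m$ of $\mathbb{F}_{q^m}/\mathbb{F}_q$ shows that each component vector $\mathbf{c}_j \in \mathbb{F}_q^n$ of $\mathbf{c}$ lies in $\langle \mathbf{v}_i \mid i \in S\rangle_{\mathbb{F}_q}$. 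Hence $G(\mathbf{c}) \subseteq \langle \mathbf{v}_i \mid i \in S\rangle_{\mathbb{F}_q}$ for all $\mathbf{c} \in D$, so $G(D) = \sum_{\mathbf{c}\in D} G(\mathbf{c})$ is contained in a space of dimension $\#S$, giving $\mathrm{wt_R}(D) = \dim G(D) \leq \#S = \mathrm{wt_H}(\varphi_B(D))$.

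For the reverse direction, I want to exhibit a basis $B$ achieving equality. Let $r = \mathrm{wt_R}(D) = \dim G(D)$ and pick $\mathbb{F}_q$-linearly independent vectors $\mathbf{v}_1,\dots,\mathbf{v}_r \in \mathbb{F}_q^n$ spanning $G(D)$; extend them to an $\mathbb{F}_q$-basis $\mathbf{v}_1,\dots,\mathbf{v}_n$ of $\mathbb{F}_q^n$. The key observation is that, by Remark~\ref{vector to matrix} and Lemma~\ref{change basis}, an $\mathbb{F}_q$-basis of $\mathbb{F}_q^n$ is simultaneously an $\mathbb{F}_{q^m}$-basis of $\mathbb{F}_{q^m}^n$, so $B = \{\mathbf{v}_i\}_{i=1}^n$ is a valid choice in the minimum. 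Now take any $\mathbf{c} \in D$: its component vectors $\mathbf{c}_1,\dots,\mathbf{c}_m \in \mathbb{F}_q^n$ all lie in $G(\mathbf{c}) \subseteq G(D) = \langle \mathbf{v}_1,\dots,\mathbf{v}_r\rangle_{\mathbb{F}_q}$, so $\mathbf{c}_j = \sum_{i=1}^r a_{j,i}\mathbf{v}_i$ with $a_{j,i} \in \mathbb{F}_q$, and therefore $\mathbf{c} = \sum_{j} \alpha_j \mathbf{c}_j = \sum_{i=1}^r \big(\sum_j \alpha_j a_{j,i}\big)\mathbf{v}_i$, i.e.\ the last $n-r$ coordinates of $\varphi_B(\mathbf{c})$ vanish. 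This holds for every $\mathbf{c} \in D$, so $\mathrm{Supp}(\varphi_B(D)) \subseteq \{1,\dots,r\}$ and $\mathrm{wt_H}(\varphi_B(D)) \leq r = \mathrm{wt_R}(D)$. Combined with the first inequality, this forces equality at this $B$ and hence the claimed formula.

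The vector case $D = \langle \mathbf{c}\rangle$ follows immediately, since $\mathrm{wt_R}(\mathbf{c}) = \mathrm{wt_R}(\langle\mathbf{c}\rangle)$ and $\mathrm{wt_H}(\varphi_B(\langle\mathbf{c}\rangle)) = \mathrm{wt_H}(\varphi_B(\mathbf{c}))$; writing $\mathbf{c} = \sum_{i=1}^n x_i \mathbf{v}_i$ identifies the coordinate vector $\mathbf{x}$ with $\varphi_B(\mathbf{c})$. I do not anticipate a serious obstacle here; the only point requiring care is the bookkeeping that identifies an $\mathbb{F}_q$-basis of $\mathbb{F}_q^n$ with an $\mathbb{F}_{q^m}$-basis of $\mathbb{F}_{q^m}^n$ (so that $\varphi_B$ is well-defined as an $\mathbb{F}_{q^m}$-linear map and $B$ is admissible), and the verification that $G(\mathbf{c}) \subseteq G(D)$ is closed under the passage from a single vector to the whole subspace — both of which are already packaged in the preliminaries (Lemma~\ref{change basis}, Remark~\ref{vector to matrix}, and the definition of $G(D)$).
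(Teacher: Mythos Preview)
Your proof is correct. The $\geq$ direction is essentially identical to the paper's: you pick an $\mathbb{F}_q$-basis of $G(D)$ and extend, while the paper picks an $\mathbb{F}_q$-basis of $D^*$ in $\mathbb{F}_q^n$ (equivalently, of $D^*|_{\mathbb{F}_q} = {\rm Tr}(D) = G(D)$) and extends --- the same space either way.

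The $\leq$ direction takes a different route. You argue directly from the definition of rank support: writing each $\mathbf{c}\in D$ in the basis $B$ and expanding coefficients over $\mathbb{F}_q$ shows $G(D)\subseteq\langle\mathbf{v}_i:i\in S\rangle_{\mathbb{F}_q}$, which is elementary and self-contained. The paper instead observes that, because $\mathbf{v}_i\in\mathbb{F}_q^n$, the map $\varphi_B$ commutes with the Frobenius, so $\varphi_B(D^*)=\varphi_B(D)^*$; then Lemma~\ref{charact stic} gives the stronger equality ${\rm wt_R}(\varphi_B(D))={\rm wt_R}(D)$, from which the inequality follows via~(\ref{ineq}). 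Your argument avoids the Galois-closure machinery and Lemma~\ref{charact stic} entirely; the paper's argument costs slightly more but produces as a byproduct the fact that $\varphi_B$ preserves rank weights, which is exactly what is exploited again in Theorems~\ref{equiv rank} and~\ref{defs equivs}.
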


The following inequality is obtained when choosing the basis $ B $ as the canonical basis. It also follows easily from the definitions and was first noticed by Gabidulin \cite{gabidulin} when $ \dim(D) = 1 $:
\begin{equation}\label{ineq}
{\rm wt_R}(D) \leq {\rm wt_H}(D).
\end{equation} 

\begin{proof}[Proof of Theorem \ref{weights basis}]
We first prove the inequality $ \leq $: Let $ B = \{ \mathbf{v}_i \}_{i=1}^n \subset \mathbb{F}_q^n $ be a basis of $ \mathbb{F}_{q^m}^n $. If $ \mathbf{c} = \sum_{i=1}^n x_i \mathbf{v}_i $ and $ j \geq 0 $, then 
$$ \mathbf{c}^{q^j} = \left( \sum_{i=1}^n x_i \mathbf{v}_i \right)^{q^j} = \sum_{i=1}^n x_i^{q^j} \mathbf{v}_i^{q^j} = \sum_{i=1}^n x_i^{q^j} \mathbf{v}_i, $$
since $ \mathbf{v}_i \in \mathbb{F}_q^n $. It follows that $ \varphi_B(\mathbf{c}^{q^j}) = \varphi_B(\mathbf{c})^{q^j} $, for all $ \mathbf{c} \in \mathbb{F}_{q^m}^n $ and all $ j \geq 0 $, and therefore, 
$$ \varphi_B(D^*) = \sum_{j=0}^{m-1} \varphi_B(D^{q^j}) = \sum_{j=0}^{m-1} \varphi_B(D)^{q^j} = \varphi_B(D)^*. $$ 
Hence, using this and Lemma \ref{charact stic}, we see that
$$ {\rm wt_R}(D) = \dim(D^*) = \dim(\varphi_B(D^*)) $$
$$ = \dim(\varphi_B(D)^*) = {\rm wt_R}(\varphi_B(D)) \leq {\rm wt_H}(\varphi_B(D)), $$ 
where the last inequality follows from (\ref{ineq}).

Now we prove the inequality $ \geq $: We will show that we may select an appropriate basis $ B $ from the given family such that $ {\rm wt_R}(D) \geq {\rm wt_H}(\varphi_B(D)) $. 

By Proposition \ref{galois}, since $ D^* $ is Galois closed, it has a basis $ \mathbf{v}_1, \mathbf{v}_2, \ldots, \mathbf{v}_s $ of vectors in $ \mathbb{F}_q^n $. We may extend it to a basis $ B = \{ \mathbf{v}_1, \mathbf{v}_2, \ldots, \mathbf{v}_n \} $ of $ \mathbb{F}_q^n $, which is then a basis of $ \mathbb{F}_{q^m}^n $ as an $ \mathbb{F}_{q^m} $-linear space. Then $ {\rm Supp} (\varphi_B(D)) \subset \{ 1,2, \ldots s \} $, since $ \varphi_B (\mathbf{v}_i) = \mathbf{e}_i $, where the vectors $ \mathbf{e}_i $ constitute the canonical basis. Therefore, $ {\rm wt_R}(D) = \dim(D^*) = s \geq {\rm wt_H}(\varphi_B(D)) $, as desired, and the inequality follows.
\end{proof}

We now give the following new equivalent definitions of generalized rank weights:

\begin{theorem} \label{definitions}
For a linear code $ C \subset \mathbb{F}_{q^m}^n $ and $ 1 \leq r \leq k = \dim(C) $, the $ r $-th generalized rank weight of $ C $ is equal to:
\begin{equation}
\min \{ d_{H,r}(\varphi_B(C)) \mid B \subset \mathbb{F}_q^n \textrm{ is a basis of } \mathbb{F}_{q^m}^n \},
\label{def4}
\end{equation}
\begin{equation}
n - \max \{ \dim(L^G_U) \mid U \subset \mathbb{F}_{q^m}^k, \dim(U)=k-r \},
\label{def5}
\end{equation}
where $ G $ is a generator matrix of $ C $, $ \varphi_B $ is as in Theorem \ref{weights basis} and $ L^G_U = \{ \mathbf{x} \in \mathbb{F}_q^n \mid G \mathbf{x}^T \in U \} $.
\end{theorem}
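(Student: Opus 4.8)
The proof of Theorem~\ref{definitions} has two parts, one for each of the two claimed expressions for $d_{R,r}(C)$. For the first expression~(\ref{def4}), I would leverage Theorem~\ref{weights basis} together with the characterization of $d_{R,r}(C)$ as $\min\{{\rm wt_R}(D) \mid D \subset C,\ \dim(D) = r\}$ (Lemma, equation~(\ref{def3})). The key observation is that $\varphi_B$ is an $\mathbb{F}_{q^m}$-linear isomorphism, so it puts the $r$-dimensional subspaces $D \subset C$ in bijection with the $r$-dimensional subspaces $\varphi_B(D) \subset \varphi_B(C)$. Taking a minimum over both $D$ and $B$ and swapping the order of the two minima, one gets
\begin{equation*}
\min_B \min_{\substack{D \subset C \\ \dim D = r}} {\rm wt_H}(\varphi_B(D))
= \min_{\substack{D \subset C \\ \dim D = r}} \min_B {\rm wt_H}(\varphi_B(D))
= \min_{\substack{D \subset C \\ \dim D = r}} {\rm wt_R}(D) = d_{R,r}(C),
\end{equation*}
where the middle-to-right step is exactly Theorem~\ref{weights basis} applied to each $D$ (with the caveat that bases of $\varphi_B(D)$ correspond to subsets of $B$-coordinates, so ${\rm wt_H}(\varphi_B(D))$ really is the Hamming weight of the subspace). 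I would also need the trivial observation that $d_{H,r}(\varphi_B(C)) = \min\{{\rm wt_H}(D') \mid D' \subset \varphi_B(C),\ \dim D' = r\}$ is literally the definition~(\ref{hamming}), so~(\ref{def4}) unwinds to exactly the double minimum above.

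**The second expression.** For~(\ref{def5}), the natural route is to reduce to the classical duality/parity-check formula for generalized Hamming weights and then use the part of the theorem just proven. Recall the classical fact (essentially Wei duality, or a direct counting argument): for an $[n,k]$ linear code $C'$ over a field with generator matrix $G'$, one has $d_{H,r}(C') = n - \max\{\dim(\ker(G') \cap L) \mid \dots\}$-type formulas; more precisely, $d_{H,r}(C')$ equals $n$ minus the largest dimension of a coordinate-restriction on which the dual has small codimension. I would instead argue directly: $d_{H,r}(C')$ is the smallest $|S|$, $S \subset \{1,\dots,n\}$, such that the shortened-to-$S$ code $C'|_S$ (codewords supported in $S$) has dimension $\geq r$; equivalently, writing things through $G'$, it is $n$ minus the largest size of a complement $S^c$ such that $\dim\{\mathbf{x} \in \mathbb{F}_q^n : G'\mathbf{x}^T \in U,\ \mathrm{supp}(\mathbf{x}) \subset S^c\}$... — but the cleanest formulation is the one already in the statement: $L^G_U = \{\mathbf{x} \in \mathbb{F}_q^n : G\mathbf{x}^T \in U\}$ for a subspace $U \subset \mathbb{F}_{q^m}^k$ of dimension $k-r$. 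The point is that $G\mathbf{x}^T$ ranges over all of $\mathbb{F}_{q^m}^k$ as $\mathbf{x}$ ranges over a complement of $\ker_{\mathbb{F}_q}(G)$-type data, and $\dim_{\mathbb{F}_q} L^G_U = (n-k) + (k-r) = n-r$ when $U$ is ``generic''; the constraint ``$G\mathbf{x}^T \in U$'' with $\mathbf{x}\in\mathbb{F}_q^n$ but $U \subset \mathbb{F}_{q^m}^k$ is where the field extension enters and forces the $\max$ to be a genuine optimization rather than a fixed value. I would show that maximizing $\dim L^G_U$ over $(k-r)$-dimensional $U$ is equivalent to minimizing ${\rm wt_R}$ over $r$-dimensional subspaces $D\subset C$: given $D$, take $U$ to be (the image under $G^{-1}$ on $C$ of) a complement of $D$ inside $C$, tensored appropriately; conversely given $U$, the subspace $D = \{\mathbf{c} \in C : \text{the message of } \mathbf{c} \text{ lies outside } U\}$... — this correspondence, together with $\dim L^G_U = n - {\rm wt_R}(D)$ for the matching pair, gives~(\ref{def5}). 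Alternatively, and probably more robustly, I would prove~(\ref{def5}) by first establishing it in the Hamming case (where it is standard: it is the ``dual distance'' style formula, e.g. via Forney's or Wei's arguments) and then applying~(\ref{def4}): choosing the basis $B$ and the map $\varphi_B$ turns $C$ into $\varphi_B(C)$ with generator matrix $G\Phi_B$ (where $\Phi_B$ is the change-of-basis matrix), and one checks that $L^{G}_U$ for rank weights corresponds to $L^{G\Phi_B}_U$ for Hamming weights, so minimizing over $B$ on the outside matches the rank formula.

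**Main obstacle.** The routine half-swapping minima for~(\ref{def4}) is essentially mechanical once Theorem~\ref{weights basis} and~(\ref{def3}) are in hand. The real work is~(\ref{def5}): getting the bookkeeping right between (a) $r$-dimensional subspaces $D$ of $C$, (b) $(k-r)$-dimensional subspaces $U$ of the message space $\mathbb{F}_{q^m}^k$, and (c) the $\mathbb{F}_q$-subspaces $L^G_U$ of $\mathbb{F}_q^n$, including verifying the dimension identity $\dim_{\mathbb{F}_q} L^G_U = n - {\rm wt_R}(D)$ for the matching pair and confirming that the extremal $U$ can always be taken to be the ``row space'' of a Galois-closed object so that the optimum is attained where the rank-support machinery applies. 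I would expect to spend most of the proof carefully identifying, for a given Galois-closed $V$ witnessing $d_{R,r}(C)$ in definition~(\ref{def1}), a subspace $U$ with $G\mathbf{x}^T \in U \iff \mathbf{x} \in V|_{\mathbb{F}_q}$-type data, and checking both inequalities $d_{R,r}(C) \leq n - \dim L^G_U$ and the reverse; the subtlety that $\mathbf{x}$ lives over $\mathbb{F}_q$ while $U$ lives over $\mathbb{F}_{q^m}$ is exactly the feature that makes this the rank analogue rather than a triviality, and handling it cleanly (probably via $\dim_{\mathbb{F}_q}(V \cap \mathbb{F}_q^n) = \dim_{\mathbb{F}_{q^m}} V$ for Galois-closed $V$, from Proposition~\ref{galois}) is the crux.
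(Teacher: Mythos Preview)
Your argument for (\ref{def4}) is correct and matches the paper's: both use (\ref{def3}) together with Theorem~\ref{weights basis} and a swap of minima.

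For (\ref{def5}), however, your proposal is too vague and misses the key step. The correct correspondence is not ``$U$ is a complement of $D$ inside $C$'' but rather an \emph{orthogonal} complement in message space: given $U \subset \mathbb{F}_{q^m}^k$ with $\dim U = k-r$, set $V = U^\perp \subset \mathbb{F}_{q^m}^k$ and $D = \{ \mathbf{v}G \mid \mathbf{v} \in V \} \subset C$. A one-line computation then shows $G\mathbf{x}^T \in U \iff \mathbf{v}G\mathbf{x}^T = 0$ for all $\mathbf{v} \in V \iff \mathbf{x} \in D^\perp$, so $L^G_U = (D^\perp)\vert_{\mathbb{F}_q}$. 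The identity $\dim_{\mathbb{F}_q} L^G_U = n - {\rm wt_R}(D)$ that you correctly identify as the target now follows from Delsarte's Lemma~\ref{delsarte}, which gives $(D^\perp)\vert_{\mathbb{F}_q} = ({\rm Tr}(D))^\perp$, hence $\dim L^G_U = n - \dim {\rm Tr}(D) = n - {\rm wt_R}(D)$ by Lemma~\ref{charact stic}. Proposition~\ref{galois} alone, which you invoke, does not give this; Delsarte is the missing ingredient.

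Your alternative route (prove the Hamming analogue first, then transport via (\ref{def4})) would not work as stated: the set $L^G_U$ in (\ref{def5}) is carved out of $\mathbb{F}_q^n$, not $\mathbb{F}_{q^m}^n$, so the change-of-basis $\varphi_B$ does not simply relate it to the Hamming-case object $\{\mathbf{x} \in \mathbb{F}_{q^m}^n : G'\mathbf{x}^T \in U\}$ appearing in \cite{kloeve}. The restriction to $\mathbb{F}_q^n$ is precisely what makes (\ref{def5}) a rank-weight statement, and it is handled in one stroke by Delsarte rather than by reducing to the Hamming case.
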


Definition (\ref{def5}) is an analogous description as that of \cite[Lemma 1]{kloeve} for generalized Hamming weights, and is expressed in terms of a generator matrix of the code. We now give new equivalent definitions of relative generalized rank weights. Observe that Definition (\ref{def3r}) is an extension of Definition (\ref{def3}) for relative weights.

\begin{theorem} \label{definitions relative}
For a nested linear code pair $ C_2 \varsubsetneq C_1 \subset \mathbb{F}_{q^m}^n $ and $ 1 \leq r \leq \ell = \dim(C_1/C_2) $, the $ r $-th relative generalized rank weight of $ C_2 \varsubsetneq C_1 $ is equal to:
\begin{equation}
\min \{ {\rm wt_R} (D) \mid D \subset C_1, D \cap C_2 = 0, \dim(D) = r \},
\label{def3r}
\end{equation}
\begin{equation}
\min \{ M_{H,r}(\varphi_B(C_1), \varphi_B(C_2)) \mid B \subset \mathbb{F}_q^n \textrm{ is a basis of } \mathbb{F}_{q^m}^n \},
\label{def4r}
\end{equation}
\begin{equation}
\begin{split}
n - \max \{ & \dim(L^G_U) \mid U \subset \mathbb{F}_{q^m}^{k_1}, \dim(U)=k_1-r, \\
 & \dim(U^I) = k_2 \},
\end{split}
\label{def5r}
\end{equation}
where $ \varphi_B $ is as in Theorem \ref{weights basis}, $ G $ is a generator matrix of $ C_1 $, the first $ k_2 $ rows of $ G $ are a basis of $ C_2 $ and $ U^I $ is the projection of $ U $ onto the first $ k_2 $ coordinates.
\end{theorem}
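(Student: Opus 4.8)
The strategy is to mirror the structure of the proof of Theorem \ref{definitions}, adapting each step to the relative setting, using Theorem \ref{weights basis} as the central engine. I would organize the argument as a chain of equalities: first show that the relative generalized rank weight $M_{R,r}(C_1,C_2)$ from (\ref{def1r}) equals (\ref{def3r}); then that (\ref{def3r}) equals (\ref{def4r}); and finally that (\ref{def4r}) equals (\ref{def5r}).

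For $M_{R,r}(C_1,C_2) = $ (\ref{def3r}): given a Galois-closed $V$ with $\dim((C_1\cap V)/(C_2\cap V))\geq r$, pick inside $C_1\cap V$ a subspace $D$ of dimension exactly $r$ with $D\cap(C_2\cap V)=0$, hence $D\cap C_2=0$; then $G(D)\subset G(V)$ and, since $V$ is Galois closed, Lemma \ref{charact stic} gives ${\rm wt_R}(D)=\dim(D^*)\leq \dim(V^*)=\dim V$, so (\ref{def3r}) $\leq M_{R,r}(C_1,C_2)$. Conversely, given $D\subset C_1$ with $D\cap C_2=0$ and $\dim D=r$, take $V=D^*$; it is Galois closed, $\dim V={\rm wt_R}(D)$ by Lemma \ref{charact stic}, and $D\subset C_1\cap V$ maps injectively into $(C_1\cap V)/(C_2\cap V)$ because $D\cap C_2=0$, so $\dim((C_1\cap V)/(C_2\cap V))\geq r$ and the reverse inequality follows.

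For (\ref{def3r}) $=$ (\ref{def4r}): fix a basis $B\subset\mathbb{F}_q^n$ of $\mathbb{F}_{q^m}^n$; since $\varphi_B$ is an $\mathbb{F}_{q^m}$-linear isomorphism, $D\mapsto\varphi_B(D)$ is a dimension-preserving bijection between $r$-dimensional subspaces $D\subset C_1$ with $D\cap C_2=0$ and $r$-dimensional subspaces $D'\subset\varphi_B(C_1)$ with $D'\cap\varphi_B(C_2)=0$, and by Theorem \ref{weights basis} applied with this fixed $B$ one has ${\rm wt_H}(\varphi_B(D))\geq{\rm wt_R}(\varphi_B(D))={\rm wt_R}(D)$ (the middle equality because $\varphi_B$ preserves rank weight, as shown inside the proof of Theorem \ref{weights basis}), which yields (\ref{def3r}) $\leq$ (\ref{def4r}); for the reverse, given an optimal $D$ realizing (\ref{def3r}), Theorem \ref{weights basis} lets us choose a specific $B$ with ${\rm wt_H}(\varphi_B(D))={\rm wt_R}(D)$, and then $M_{H,r}(\varphi_B(C_1),\varphi_B(C_2))\leq{\rm wt_H}(\varphi_B(D))={\rm wt_R}(D)$. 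I expect this step to be the main obstacle, because one must be careful that the single basis $B$ chosen to collapse ${\rm wt_H}$ to ${\rm wt_R}$ for the optimal $D$ is simultaneously a legal competitor in the outer minimum of (\ref{def4r}) — this is exactly the point where one reuses the explicit construction (extend a basis of $D^*$ in $\mathbb{F}_q^n$) from the $\geq$ direction of Theorem \ref{weights basis}, now making sure the nestedness $\varphi_B(C_2)\subset\varphi_B(C_1)$ and the condition $\varphi_B(D)\cap\varphi_B(C_2)=0$ survive.

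Finally, for (\ref{def4r}) $=$ (\ref{def5r}): apply the classical generator-matrix description of relative generalized Hamming weights (the analogue of \cite{kloeve}, \cite{luo} for $M_{H,r}$) to the code pair $\varphi_B(C_2)\varsubsetneq\varphi_B(C_1)$, whose generator matrix is $G\varphi_B^{-1}$-type data; the condition $\dim(U^I)=k_2$ encodes precisely that the chosen subspace of the message space respects the $C_2$-block, i.e. that the witnessing subspace $D'$ meets $\varphi_B(C_2)$ trivially. Taking the minimum over all bases $B$ and matching the set $L^G_U=\{\mathbf{x}\in\mathbb{F}_q^n\mid G\mathbf{x}^T\in U\}$ to the support complement of the corresponding subspace of $\varphi_B(C_1)$ — exactly as in the non-relative case of Theorem \ref{definitions} — gives the claimed identity. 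Here I would mostly cite the proof of Theorem \ref{definitions} for the bookkeeping and only add the verification that the extra constraint $\dim(U^I)=k_2$ corresponds to $D'\cap\varphi_B(C_2)=0$ on both sides.
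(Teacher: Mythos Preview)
Your first two steps --- showing $M_{R,r}(C_1,C_2) = (\ref{def3r})$ and $(\ref{def3r}) = (\ref{def4r})$ --- are essentially the paper's argument, both in structure and in the key tools (Lemma \ref{charact stic} and Theorem \ref{weights basis}).

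For the third equality the paper takes a different and shorter route: it proves $(\ref{def3r}) = (\ref{def5r})$ \emph{directly}, without passing through $(\ref{def4r})$. Given $U$ satisfying the constraints in $(\ref{def5r})$, set $V = U^\perp \subset \mathbb{F}_{q^m}^{k_1}$ and $D = \{\mathbf{v}G : \mathbf{v} \in V\}$; the condition $\dim(U^I)=k_2$ is exactly what forces $D \cap C_2 = 0$, and a one-line computation shows $L^G_U = (D^\perp)\vert_{\mathbb{F}_q}$. Then Delsarte's Lemma \ref{delsarte} together with Lemma \ref{charact stic} gives ${\rm wt_R}(D) = \dim({\rm Tr}(D)) = n - \dim((D^\perp)\vert_{\mathbb{F}_q}) = n - \dim(L^G_U)$, and the bijection $U \leftrightarrow D$ finishes the argument.

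Your route via $(\ref{def4r})$ can also be made to work, but the part you leave to ``bookkeeping'' is the actual content: for fixed $U$, as $B$ ranges over $\mathbb{F}_q$-bases, the maximum number of columns of the generator matrix of $\varphi_B(C_1)$ landing in $U$ equals $\dim(L^G_U)$ (extend a basis of $L^G_U$ to a basis of $\mathbb{F}_q^n$). This is correct but more circuitous, and it imports the Hamming-case identity from \cite{zhuang-luo} as an external input, whereas the paper's Delsarte argument is self-contained. Note also that you cannot ``cite the proof of Theorem \ref{definitions}'' for this step: in the paper Theorem \ref{definitions} is obtained \emph{from} Theorem \ref{definitions relative} by setting $C_2=0$, so there is no separate proof to invoke.
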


Now, the last definition is analogous to \cite[Lemma 2]{zhuang-luo} for the Haming case. We only prove Theorem \ref{definitions relative}, since Theorem \ref{definitions} is obtained from it by choosing $ C_2 = 0 $.

\begin{proof}[Proof of Theorem \ref{definitions relative}]
We first prove $ (\ref{def1r}) \geq (\ref{def3r}) $: Take a $ V $ as in (\ref{def1r}). Since $ \dim((C_1 \cap V)/(C_2 \cap V)) \geq r $, we may choose a linear subspace $ D \subset C_1 \cap V $ such that $ \dim(D) = r $ and $ D \cap (C_2 \cap V) = 0 $. Hence $ D $ is as in (\ref{def3r}). Moreover, since $ D \subset V $, we have that $ D^* \subset V^* = V $, hence $ {\rm wt_R}(D) \leq \dim(V) $ by Lemma \ref{charact stic}, and the inequality follows.

No we prove $ (\ref{def1r}) \leq (\ref{def3r}) $: Take $ D $ as in (\ref{def3r}), and define $ V = D^* $, which is Galois closed. The natural linear map $ D \longrightarrow (C_1 \cap V)/(C_2 \cap V) $ is one to one, and hence $ \dim((C_1 \cap V)/(C_2 \cap V)) \geq \dim(D) = r $, and $ V $ is as in (\ref{def1r}). Moreover, $ \dim(V) = \dim(D^*) = {\rm wt_R}(D) $ by Lemma \ref{charact stic}, hence the inequality follows.

Using Theorem \ref{weights basis} and the expression (\ref{hamming relative}), we see that $ (\ref{def3r}) = (\ref{def4r}) $.

Finally, we prove that $ (\ref{def3r}) = (\ref{def5r}) $. Fix $ U \subset \mathbb{F}_{q^m}^{k_1} $ as in (\ref{def5r}), and define $ V = U^\perp $ and $ D = \{ \mathbf{v} G \mid \mathbf{v} \in V \} $. It holds that $ \dim(D) = r $ and $ D \cap C_2 = 0 $ since $ U^I = \mathbb{F}_{q^m}^{k_2} $. For any $ \mathbf{x} \in \mathbb{F}_q^n $, we have that
$$ G \mathbf{x}^T \in U \Longleftrightarrow \mathbf{v} G \mathbf{x}^T = \mathbf{0}, \forall \mathbf{v} \in V $$
$$ \Longleftrightarrow \mathbf{d} \cdot \mathbf{x} = \mathbf{0}, \forall \mathbf{d} \in D \Longleftrightarrow \mathbf{x} \in D^\perp, $$
and thus $ L^G_U = (D^\perp)\vert_{\mathbb{F}_q} $. Using Lemma \ref{charact stic} and Delsarte's Lemma \ref{delsarte},
$$ {\rm wt_R}(D) = \dim({\rm Tr}(D)) = n - \dim(L^G_U), $$
and we are done.
\end{proof}

\section{Equivalences of codes} \label{sec eq}

The purpose of this section is to characterize the $ \mathbb{F}_{q^m} $-linear vector space isomorphisms $ \phi : V \longrightarrow V^\prime $ that preserve rank weights, where $ V, V^\prime $ are Galois closed. 

Observe first of all that $ {\rm wt_R}(V) = \dim(V) $ and $ {\rm wt_R}(V^\prime) = \dim(V^\prime) $ by Lemma \ref{charact stic}, hence $ \dim(V) = \dim(V^\prime) $ is necessary if we want to preserve all possible rank weights.

A first characterization has been given in \cite[Theorem 1]{berger}, for $ V = V^\prime = \mathbb{F}_{q^m}^n $. We will see that, due to our new characterizations, equivalent codes are guaranteed to exactly perform in the same way in secure network coding, and not only regarding worst cases (which would be guaranteed just by having the same minimum rank distance). Moreover, in contrast with \cite{berger}, we consider equivalent codes with different lengths, which allows to consider equivalent codes that can be applied to networks with different number of outgoing links. As a consequence, we will see which is the minimum possible length of a code equivalent to a given one, that is, which is the minimum number of outgoing links that a given code requires.

\subsection{New characterizations}

Define the sets $ \Upsilon (\mathbb{F}_{q^m}^n) $ and $ \Lambda(\mathbb{F}_{q^m}^n) $ as the set of Galois closed linear subspaces of $ \mathbb{F}_{q^m}^n $ and the set of subspaces of the form $ V_I = \{ \mathbf{c} \in \mathbb{F}_{q^m}^n \mid c_i = 0, \forall i \notin I \} $, for some $ I \subset \mathcal{J} = \{1,2, \ldots, n \} $, respectively, as in \cite{rgrw}. We will write just $ \Upsilon $ and $ \Lambda $ if there is no confusion on the space $ \mathbb{F}_{q^m}^n $. For convenience, we also define $ L_I = \{ \mathbf{c} \in \mathbb{F}_{q}^n \mid c_i = 0, \textrm{ if } i \notin I \} $.

The rank weights are defined in terms of the spaces in $ \Upsilon $ (see (\ref{def1}) or \cite{rgrw}), and the Hamming weights are defined in terms of the spaces in $ \Lambda $ (see \cite{kurihara-secret, rgrw}). We will use this analogy in the rest of the paper.

We have the following two collections of characterizations of Hamming-weight and rank-weight preserving vector space isomorphisms. To the best of our knowledge, only the equivalence between items 2 and 5 has been noticed in the Hamming case. We only prove the rank case, that is, Theorem \ref{equiv rank}, being the proof of Theorem \ref{equiv hamming} analogous.

\begin{theorem} \label{equiv hamming}
Given an $ \mathbb{F}_{q^m} $-linear vector space isomorphism $ \phi : V \longrightarrow V^\prime $, where $ V \in \Lambda(\mathbb{F}_{q^m}^n) $ and $ V^\prime \in \Lambda(\mathbb{F}_{q^m}^{n^\prime}) $, the following are equivalent:
\begin{enumerate}
\item
If $ \mathbf{c} \in V $ and $ {\rm wt_H}(\mathbf{c}) = 1 $, then $ {\rm wt_H}(\phi(\mathbf{c})) = 1 $.
\item
$ \phi $ preserves Hamming weights, that is, $ {\rm wt_H}(\phi (\mathbf{c})) = {\rm wt_H}(\mathbf{c}) $, for all $ \mathbf{c} \in V $.
\item
For all linear subspaces $ D \subset V $, it holds that $ {\rm wt_H}(\phi (D)) = {\rm wt_H}(D) $.
\item
For all $ U \in \Lambda(\mathbb{F}_{q^m}^n) $, $ U \subset V $, it holds that $ \phi (U) \in \Lambda(\mathbb{F}_{q^m}^{n^\prime}) $.
\item
$ \phi $ is a monomial map. That is, if $ V = V_I $ and $ V^\prime = V_J $, with $ N = \# I = \# J $, then there exists a bijection $ \sigma : I \longrightarrow J $ and elements $ \gamma_1, \gamma_2, \ldots, \gamma_N \in \mathbb{F}_{q^m} $ such that $ \phi(\mathbf{e}_i) = \gamma_i \mathbf{e}_{\sigma (i)} $, for all $ i \in I $.
\end{enumerate}
In such case, we will say that $ \phi $ is a Hamming-weight preserving transformation or a Hamming equivalence. 
\end{theorem}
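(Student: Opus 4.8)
The plan is to prove the five conditions equivalent by closing the implication diagram $5 \Rightarrow 3 \Rightarrow 2 \Rightarrow 1 \Rightarrow 5$ together with $5 \Rightarrow 4 \Rightarrow 1$, which makes all of $1,2,3,4,5$ equivalent. A useful preliminary remark is that, since $\phi$ is a vector space isomorphism, $ \dim(V) = \dim(V^\prime) $; writing $ V = V_I $ and $ V^\prime = V_J $, this forces $ \# I = \# J $, so the numerology appearing in item 5 is automatic and need not be assumed separately.

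Most of the implications are routine and I would dispatch them quickly. The implications out of item 5 use only that a monomial map rescales and permutes coordinates: it carries the support of any vector bijectively onto another support set, which gives $ 5 \Rightarrow 3 $ (apply this to a spanning set of $ D $, so $ {\rm Supp}(\phi(D)) = \sigma({\rm Supp}(D)) $) and hence $ 5 \Rightarrow 2 $ (take $ D = \langle \mathbf{c} \rangle $, using $ {\rm wt_H}(\mathbf{c}) = {\rm wt_H}(\langle \mathbf{c} \rangle) $); and it carries $ V_{I^\prime} $ to $ V_{\sigma(I^\prime)} $ for every $ I^\prime \subset I $, giving $ 5 \Rightarrow 4 $. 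For the downward steps, $ 3 \Rightarrow 2 $ follows by applying item 3 to $ D = \langle \mathbf{c} \rangle $ and noting $ \phi(\langle \mathbf{c} \rangle) = \langle \phi(\mathbf{c}) \rangle $; $ 2 \Rightarrow 1 $ is a special case; and $ 4 \Rightarrow 1 $ holds because a weight-one vector $ \mathbf{c} = \gamma \mathbf{e}_i \in V = V_I $ spans $ \langle \mathbf{c} \rangle = V_{\{ i \}} \subset V $, an element of $ \Lambda(\mathbb{F}_{q^m}^n) $, so by item 4 its image $ \phi(V_{\{ i \}}) $ is a one-dimensional member of $ \Lambda(\mathbb{F}_{q^m}^{n^\prime}) $, i.e.\ $ V_{\{ j \}} $ for some $ j $, whence $ \phi(\mathbf{c}) $ is a nonzero scalar multiple of $ \mathbf{e}_j $ and has weight one.

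The substance of the theorem is the implication $ 1 \Rightarrow 5 $, which is the ``weight-preserving $ \Rightarrow $ monomial'' phenomenon restricted to the subspaces $ V_I $. Here I would argue as follows. For each $ i \in I $, the vector $ \mathbf{e}_i $ lies in $ V $ and has $ {\rm wt_H}(\mathbf{e}_i) = 1 $, so by item 1 its image lies in $ V^\prime = V_J $ and has weight one; hence $ \phi(\mathbf{e}_i) = \gamma_i \mathbf{e}_{\sigma(i)} $ for a uniquely determined index $ \sigma(i) \in J $ and a nonzero $ \gamma_i \in \mathbb{F}_{q^m} $. The resulting map $ \sigma : I \longrightarrow J $ is injective, since $ \sigma(i) = \sigma(i^\prime) $ with $ i \neq i^\prime $ would make $ \phi(\mathbf{e}_i), \phi(\mathbf{e}_{i^\prime}) $ linearly dependent while $ \mathbf{e}_i, \mathbf{e}_{i^\prime} $ are independent, contradicting injectivity of $ \phi $; and since $ \# I = \# J $ is finite, $ \sigma $ is a bijection. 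As $ \{ \mathbf{e}_i \}_{i \in I} $ is an $ \mathbb{F}_{q^m} $-basis of $ V $, the data $ (\sigma, (\gamma_i)_{i \in I}) $ determines $ \phi $ completely, which is exactly the statement that $ \phi $ is monomial. I do not expect a genuine obstacle: the only delicate point is ruling out collisions of $ \sigma $, handled by injectivity of $ \phi $, and the rest is bookkeeping of which of the six implications are needed to close the equivalence. (The analogous Theorem~\ref{equiv rank} would be proved the same way, replacing $ \Lambda $ by $ \Upsilon $, Hamming weight by rank weight, the $ \mathbf{e}_i $ by a basis of rank-one vectors of a Galois closed space as in Proposition~\ref{galois}, and invoking Lemma~\ref{charact stic} where $ \dim = {\rm wt_R} $ on Galois closed spaces is needed.)
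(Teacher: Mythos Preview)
Your proposal is correct. The paper does not prove Theorem~\ref{equiv hamming} directly; it proves the rank analogue (Theorem~\ref{equiv rank}) and declares the Hamming proof analogous. Translating the paper's rank argument to the Hamming setting gives the single cycle $5 \Rightarrow 4 \Rightarrow 3 \Rightarrow 2 \Rightarrow 1 \Rightarrow 5$, whereas you close the diagram via $5 \Rightarrow 3 \Rightarrow 2 \Rightarrow 1 \Rightarrow 5$ together with the side branch $5 \Rightarrow 4 \Rightarrow 1$. The only place the two really diverge is that the paper proves $4 \Rightarrow 3$ (by counting the members of $\Lambda$ inside $V$ and $V^\prime$ to conclude $\phi$ restricts to a bijection on them, and then using that $V_{{\rm Supp}(D)}$ is the smallest member of $\Lambda$ containing $D$), while you bypass this entirely by deriving item~3 straight from the monomial form and hooking item~4 back in through the easy $4 \Rightarrow 1$. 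Your arrangement is slightly more economical in the Hamming case; the paper's $4 \Rightarrow 3$ step is the one that ports more uniformly to the rank setting (where it becomes $\phi(D^*) = \phi(D)^*$ via Lemma~\ref{charact stic}). The substantive implication $1 \Rightarrow 5$ is argued identically in both.
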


\begin{theorem} \label{equiv rank}
Given an $ \mathbb{F}_{q^m} $-linear vector space isomorphism $ \phi : V \longrightarrow V^\prime $, where $ V \in \Upsilon(\mathbb{F}_{q^m}^n) $ and $ V^\prime \in \Upsilon(\mathbb{F}_{q^m}^{n^\prime}) $, the following are equivalent:
\begin{enumerate}
\item
If $ \mathbf{c} \in V $ and $ {\rm wt_R}(\mathbf{c}) = 1 $, then $ {\rm wt_R}(\phi(\mathbf{c})) = 1 $.
\item
$ \phi $ preserves rank weights, that is, $ {\rm wt_R}(\phi (\mathbf{c})) = {\rm wt_R}(\mathbf{c}) $, for all $ \mathbf{c} \in V $.
\item
For all linear subspaces $ D \subset V $, it holds that $ {\rm wt_R}(\phi (D)) = {\rm wt_R}(D) $.
\item
For all $ U \in \Upsilon(\mathbb{F}_{q^m}^n) $, $ U \subset V $, it holds that $ \phi (U) \in \Upsilon(\mathbb{F}_{q^m}^{n^\prime}) $. 
\item
There exists $ \beta \in \mathbb{F}_{q^m}^* = \mathbb{F}_{q^m} \setminus \{ 0 \} $ and an $ \mathbb{F}_{q^m} $-linear vector space isomorphism $ \phi^\prime : V \longrightarrow V^\prime $ such that $ \phi^\prime(V \vert_{\mathbb{F}_q}) \subset V^\prime \vert_{\mathbb{F}_q} $ and $ \phi(\mathbf{c}) = \beta \phi^\prime (\mathbf{c}) $, for every $ \mathbf{c} \in V $. Equivalently, there exists a matrix $ A \in \mathbb{F}_q^{n \times n^\prime} $ and $ \beta \in \mathbb{F}_{q^m}^* $ such that $ \phi (\mathbf{c}) = \beta \mathbf{c} A $, for every $ \mathbf{c} \in V $.
\end{enumerate}
In such case, we will say that $ \phi $ is a rank-weight preserving transformation or a rank-metric equivalence. 
\end{theorem}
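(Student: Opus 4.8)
The plan is to establish the cycle $(5)\Rightarrow(3)\Rightarrow(2)\Rightarrow(1)\Rightarrow(5)$ together with the extra implications $(5)\Rightarrow(4)\Rightarrow(1)$, so that all five statements lie in one strongly connected component; since $(1)$ is the weakest hypothesis and $(5)$ the strongest conclusion, the real content is $(1)\Rightarrow(5)$. The implications $(3)\Rightarrow(2)$ and $(2)\Rightarrow(1)$ are immediate, the former from $ {\rm wt_R}(\mathbf{c})={\rm wt_R}(\langle\mathbf{c}\rangle) $ and $ \phi(\langle\mathbf{c}\rangle)=\langle\phi(\mathbf{c})\rangle $, the latter by restricting to rank-one vectors. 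Two elementary facts will be used throughout: for any $ \mathbb{F}_{q^m} $-subspace $ U $ and $ \beta\in\mathbb{F}_{q^m}^* $ one has $ \beta U=U $; and $ G(\beta\mathbf{c})=G(\mathbf{c}) $ for any $ \mathbf{c} $ and $ \beta\in\mathbb{F}_{q^m}^* $, because $ \beta\alpha_1,\dots,\beta\alpha_m $ is again a basis of $ \mathbb{F}_{q^m} $ over $ \mathbb{F}_q $ and $ G $ is basis-independent by Lemma \ref{change basis}.

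For the implications out of $(5)$, write $ \phi=\beta\phi^\prime $ as in the first formulation. The restriction of $ \phi^\prime $ to $ V\vert_{\mathbb{F}_q} $ is an injective $ \mathbb{F}_q $-linear map into $ V^\prime\vert_{\mathbb{F}_q} $, and these spaces have equal $ \mathbb{F}_q $-dimension ($ =\dim V=\dim V^\prime $ by Lemma \ref{charact stic}), so $ \phi^\prime\colon V\vert_{\mathbb{F}_q}\to V^\prime\vert_{\mathbb{F}_q} $ is an $ \mathbb{F}_q $-isomorphism. Since $ V $ is Galois closed, $ V=(V\vert_{\mathbb{F}_q})\otimes\mathbb{F}_{q^m} $ by Proposition \ref{galois}, so the coordinate vectors of any $ \mathbf{c}\in V $ lie in $ V\vert_{\mathbb{F}_q} $; expanding $ \mathbf{c}=\sum_i\alpha_i\mathbf{c}_i $ and using $ G(\beta\mathbf{c})=G(\mathbf{c}) $ gives $ G(\phi(\mathbf{c}))=\phi^\prime(G(\mathbf{c})) $, and summing over a subspace $ D\subset V $ gives $ G(\phi(D))=\phi^\prime(G(D)) $; since $ \phi^\prime $ preserves $ \mathbb{F}_q $-dimension on $ V\vert_{\mathbb{F}_q} $, this yields $(3)$, hence $(2)$ and $(1)$. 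For $(4)$: $ \phi^\prime $ carries an $ \mathbb{F}_q $-basis of a Galois-closed $ U\subset V $ into $ V^\prime\vert_{\mathbb{F}_q}\subset\mathbb{F}_q^{n^\prime} $, and since rescaling generators by $ \beta $ does not change an $ \mathbb{F}_{q^m} $-span, $ \phi(U) $ is the span of these vectors, hence has a basis over $ \mathbb{F}_q $ and is Galois closed by Proposition \ref{galois}. Finally $(4)\Rightarrow(1)$: if $ {\rm wt_R}(\mathbf{c})=1 $ then $ \langle\mathbf{c}\rangle_{\mathbb{F}_{q^m}} $ is a one-dimensional Galois-closed subspace of $ V $, so by $(4)$ its image $ \langle\phi(\mathbf{c})\rangle_{\mathbb{F}_{q^m}} $ is one-dimensional Galois closed, hence spanned by a vector of $ \mathbb{F}_q^{n^\prime} $, so $ {\rm wt_R}(\phi(\mathbf{c}))=1 $.

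It remains to prove $(1)\Rightarrow(5)$, the core of the argument. The structural fact to start from is that a vector has rank weight one exactly when it equals $ \lambda\mathbf{v} $ with $ \lambda\in\mathbb{F}_{q^m}^* $ and $ \mathbf{v}\in\mathbb{F}_q^n\setminus\{\mathbf{0}\} $; inside $ V $ these are the $ \lambda\mathbf{v} $ with $ \mathbf{v}\in V\vert_{\mathbb{F}_q}\setminus\{\mathbf{0}\} $. Fix an $ \mathbb{F}_q $-basis $ \mathbf{b}_1,\dots,\mathbf{b}_s $ of $ V\vert_{\mathbb{F}_q} $, also an $ \mathbb{F}_{q^m} $-basis of $ V $; by hypothesis $ \phi(\mathbf{b}_j)=\mu_j\mathbf{w}_j $ with $ \mu_j\in\mathbb{F}_{q^m}^* $ and $ \mathbf{w}_j\in V^\prime\vert_{\mathbb{F}_q}\setminus\{\mathbf{0}\} $. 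The decisive lemma is: if $ \mathbf{w},\mathbf{w}^\prime $ are $ \mathbb{F}_{q^m} $-linearly independent, $ \mu,\mu^\prime\in\mathbb{F}_{q^m}^* $, and $ {\rm wt_R}(\mu\mathbf{w}+\mu^\prime\mathbf{w}^\prime)=1 $, then $ \mu/\mu^\prime\in\mathbb{F}_q^* $; this is proved by writing $ \mu=\sum_i a_i\alpha_i $, $ \mu^\prime=\sum_i a_i^\prime\alpha_i $ with $ a_i,a_i^\prime\in\mathbb{F}_q $ and noting that the coordinate vectors $ a_i\mathbf{w}+a_i^\prime\mathbf{w}^\prime $ must span a line, which by independence of $ \mathbf{w},\mathbf{w}^\prime $ forces all pairs $ (a_i,a_i^\prime) $ to be $ \mathbb{F}_q $-proportional. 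Applying it to $ \phi(\mathbf{b}_j+\mathbf{b}_k)=\mu_j\mathbf{w}_j+\mu_k\mathbf{w}_k $, which is rank-one since $ \mathbf{b}_j+\mathbf{b}_k\in V\vert_{\mathbb{F}_q} $ and $ \mathbf{w}_j,\mathbf{w}_k $ are independent, shows $ \mu_j\mu_k^{-1}\in\mathbb{F}_q^* $ for all $ j,k $; setting $ \beta=\mu_1 $ and $ \phi^\prime=\beta^{-1}\phi $ gives $ \phi^\prime(\mathbf{b}_j)\in V^\prime\vert_{\mathbb{F}_q} $ for every $ j $, hence $ \phi^\prime(V\vert_{\mathbb{F}_q})\subset V^\prime\vert_{\mathbb{F}_q} $ by $ \mathbb{F}_{q^m} $-linearity, which is the first formulation of $(5)$. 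Extending the $ \mathbb{F}_q $-linear map $ \phi^\prime\vert_{V\vert_{\mathbb{F}_q}} $ to a matrix $ A\in\mathbb{F}_q^{n\times n^\prime} $ and using $ V=(V\vert_{\mathbb{F}_q})\otimes\mathbb{F}_{q^m} $ turns this into $ \phi(\mathbf{c})=\beta\mathbf{c}A $, and conversely setting $ \phi^\prime(\mathbf{c})=\mathbf{c}A $ recovers the $ \phi^\prime $-formulation from the matrix identity. I expect this rigidity lemma — collapsing the scalars $ \mu_j $ into a single $ \mathbb{F}_{q^m}^* $-coset — to be the only real obstacle; everything else is routine bookkeeping with Galois closures, Proposition \ref{galois}, and Lemma \ref{charact stic}.
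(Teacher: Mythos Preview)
Your proof is correct, and the decisive step $(1)\Rightarrow(5)$ is essentially the paper's argument (which the paper in turn attributes to a modification of Berger): pick an $\mathbb{F}_q$-basis of $V$, write the images as $\mu_j\mathbf{w}_j$, and force all $\mu_j$ into a single $\mathbb{F}_q^*$-coset by applying the rank-one hypothesis to $\phi(\mathbf{b}_j+\mathbf{b}_k)$. Your ``rigidity lemma'' is a direct coordinate computation showing $\mu_j/\mu_k\in\mathbb{F}_q^*$, whereas the paper argues by contradiction (if $\beta_i\notin\mathbb{F}_q\beta_j$, extend $\{\beta_i,\beta_j\}$ to an $\mathbb{F}_q$-basis of $\mathbb{F}_{q^m}$ and read off rank two); the two arguments are equivalent.

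The routing of the easy implications differs. The paper proves a single cycle $1\Rightarrow 5\Rightarrow 4\Rightarrow 3\Rightarrow 2\Rightarrow 1$; its $(4)\Rightarrow(3)$ uses a counting argument (equal numbers of Galois-closed subspaces in $V$ and $V'$) to deduce that $\phi$ \emph{bijects} Galois-closed subspaces, hence $\phi(D^*)=\phi(D)^*$, and then invokes ${\rm wt_R}(D)=\dim(D^*)$. You instead prove $(5)\Rightarrow(3)$ directly via the identity $G(\phi(D))=\phi'(G(D))$ on rank supports, and handle $(4)$ by the short implications $(5)\Rightarrow(4)\Rightarrow(1)$. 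Your route is arguably more transparent, since it makes the rank-support transport explicit; the paper's route has the small bonus of showing that item $(4)$ is automatically two-sided (a bijection on Galois-closed subspaces), which you get only after closing the cycle. Either organization is fine.
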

\begin{proof}
It is obvious that item 2 implies item 1 and item 3 implies item 2.

We now see that item 4 implies item 3. First, the number of sets in the family $ \Upsilon(\mathbb{F}_{q^m}^n) $ that are contained in $ V $ is the same as the number of sets in the family $ \Upsilon(\mathbb{F}_{q^m}^{n^\prime}) $ that are contained in $ V^\prime $, since $ \dim(V) = \dim(V^\prime) $. It follows that, given a linear subspace $ U \subset V $, $ U \in \Upsilon(\mathbb{F}_{q^m}^n) $ if, and only if, $ \phi (U) \in \Upsilon(\mathbb{F}_{q^m}^{n^\prime}) $. Now given a linear subspace $ D \subset V $, since $ D^* $ is the smallest set in $ \Upsilon(\mathbb{F}_{q^m}^n) $ that contains $ D $, it follows that $ \phi(D^*) = \phi(D)^* $. Therefore, $ {\rm wt_R}(D) = \dim(D^*) = \dim (\phi(D^*)) = \dim (\phi(D)^*) = {\rm wt_R}(\phi(D)) $ by Lemma \ref{charact stic}.

To prove that item 5 implies item 4, it is enough to show that, for a given subspace $ U \subset V $, if $ U^q \subset U $, then $ \phi(U)^q \subset \phi(U) $. Take bases $ B= \{ \mathbf{v}_1, \mathbf{v}_2, \ldots, \mathbf{v}_N \} $ and $ B^\prime = \{ \mathbf{v}^\prime_1, \mathbf{v}^\prime_2, \ldots, \mathbf{v}^\prime_N \} $ of $ V $ and $ V^\prime $ in $ \mathbb{F}_q^n $, respectively, such that $ \phi(\mathbf{v}_i) = \beta \mathbf{v}^\prime_i $. Take $ \mathbf{u} \in U $, and write it as $ \mathbf{u} = \sum_{i,j} \lambda_{i,j} \alpha_j \mathbf{v}_i $, where $ \lambda_{i,j} \in \mathbb{F}_q $. Then $ \phi(\mathbf{u})^q = \sum_{i,j} \lambda_{i,j} \beta^q \alpha_j^q \mathbf{v}^\prime_i $. Since $ \phi(\mathbf{u}^q) = \sum_{i,j} \lambda_{i,j} \beta \alpha_j^q \mathbf{v}^\prime_i \in \phi(U) $, it follows that $ \phi(\mathbf{u})^q \in \phi(U) $.

Finally, we prove that item 1 implies item 5, which is a slight modification of the proof given in \cite{berger}. Taking a basis of $ V $ in $ \mathbb{F}_q^n $ as before, it holds that $ \phi(\mathbf{v}_i) = \beta_i \mathbf{u}_i $, for some $ \mathbf{u}_i \in \mathbb{F}_q^n $ and $ \beta_i \in \mathbb{F}_{q^m}^* $. Since $ \phi $ is an isomorphism, the vectors $ \mathbf{u}_i $ are linearly independent.

Now take $ i \neq j $ and assume that $ \beta_i \neq a_{i,j} \beta_j $, for every $ a_{i,j} \in \mathbb{F}_q $. Then there exists a basis of $ \mathbb{F}_{q^m} $ over $ \mathbb{F}_q $ that contains $ \beta_i $ and $ \beta_j $. Therefore $ \phi(\mathbf{v}_i + \mathbf{v}_j) = \beta_i \mathbf{u}_i + \beta_j \mathbf{u}_j $, but $ {\rm wt_R}(\phi(\mathbf{v}_i + \mathbf{v}_j)) = {\rm wt_R}(\mathbf{v}_i + \mathbf{v}_j) = 1 $ and also $ {\rm wt_R}(\beta_i \mathbf{u}_i + \beta_j \mathbf{u}_j) = 2 $, since $ \mathbf{u}_i $ and $ \mathbf{u}_j $ are linearly independent. 

We have reached an absurd, so there exists $ a_{i,j} \in \mathbb{F}_q^* $ such that $ \beta_i = a_{i,j} \beta_j $, for all $ i,j $. Defining $ \beta = \beta_1 = a_{1,j} \beta_j $ and $ \mathbf{v}^\prime_i = a_{1,i}^{-1}\mathbf{u}_i $, we obtain a description of $ \phi $ as in item 5.
\end{proof}

This motivates the following definition.

\begin{definition}
We say that two (arbitrary) codes $ C \subset \mathbb{F}_{q^m}^n $ and $ C^\prime \subset \mathbb{F}_{q^m}^{n^\prime} $ are rank-metric equivalent if there exists a rank-metric equivalence $ \phi $ between $ V $ and $ V^\prime $ such that $ \phi (C) = C^\prime $, where $ C \subset V \in \Upsilon (\mathbb{F}_{q^m}^n) $ and $ C^\prime \subset V^\prime \in \Upsilon (\mathbb{F}_{q^m}^{n^\prime}) $. Similarly for Hamming equivalent codes.
\end{definition}

\begin{remark}
Observe that item 2 states that equivalent codes behave exactly in the same way regarding error and erasure correction, and not just in worst cases, since corresponding codewords have the same rank weight (see \cite[Subsection IV.C]{on-metrics} for MRD codes, and \cite[Theorem 4]{rgrw} and \cite[Theorem 2]{silva-universal} in general). On the other hand, item 4 states that equivalent linear codes behave exactly in the same way regarding information leakage, and not only in worst cases, since the information leaked by wiretapping links is measured by the dimension of $ C \cap U $, for some $ U \in \Upsilon $, as stated in \cite[Lemma 7]{rgrw}. The previous theorem thus states that one property is preserved if, and only if, the other is preserved.

The same holds for the Hamming case, where item 4 states that equivalent codes behave exactly in the same way regarding information leakage in code-based secret sharing \cite{one-point, kurihara-secret}, and item 2 states that equivalent codes behave exactly in the same way regarding usual error and erasure correction.

Item 1 states that it is only necessary for codes to be equivalent that they behave in the same way regarding ``unitary'' errors.
\end{remark}

\begin{remark}
Observe that, due to the equivalence between items 2 and 3, rank weight preserving transformations preserve not only minimum rank distances and rank weight distributions, but also generalized rank weights and generalized rank weight distributions.
\end{remark}

\begin{remark}
In the Hamming case, if $ \phi : C_1 \longrightarrow C_2 $ is an $ \mathbb{F}_{q^m} $-linear vector space isomorphism that preserves Hamming weights, for arbitrary linear codes $ C_1 \subset \mathbb{F}_{q^m}^n $ and $ C_2 \subset \mathbb{F}_{q^m}^{n^\prime} $, then it can be extended to a Hamming weight preserving isomorphism $ \widetilde{\phi} : V_I \longrightarrow V_J $, where $ I = {\rm Supp}(C_1) $ and $ J = {\rm Supp}(C_2) $. This is known as MacWilliams extension theorem (see \cite[Section 7.9]{pless}).

However, this is not true in the rank case. For a counterexample, see \cite[Example 2.9 (c)]{barra}.
\end{remark}

As a consequence, we can now establish the following relations between Hamming and rank weights:

\begin{theorem} \label{defs equivs}
For any linear codes $ D, C \subset \mathbb{F}_{q^m}^n $, we have that
\begin{equation*}
\begin{split}
{\rm wt_R}(D) = \min \{ & {\rm wt_H}(\phi(D)) \mid \phi : \mathbb{F}_{q^m}^n \longrightarrow \mathbb{F}_{q^m}^n \\
 & \textrm{ is a rank-metric equivalence} \},
\end{split}
\end{equation*}
\begin{equation*}
\begin{split}
d_{R,r}(C) = \min \{ & d_{H,r}(\phi(C)) \mid \phi : \mathbb{F}_{q^m}^n \longrightarrow \mathbb{F}_{q^m}^n \\
 & \textrm{ is a rank-metric equivalence} \},
\end{split}
\end{equation*}
where $ 1 \leq r \leq k = \dim(C) $. Moreover, if $ n \leq m $, we have that
\begin{equation*}
\begin{split}
{\rm wt_H}(D) = \max \{ & {\rm wt_R}(\phi(D)) \mid \phi : \mathbb{F}_{q^m}^n \longrightarrow \mathbb{F}_{q^m}^n \\
 & \textrm{ is a Hamming equivalence} \},
\end{split}
\end{equation*}
\begin{equation*}
\begin{split}
d_{H,k}(C) = \max \{ & d_{R,k}(\phi(C)) \mid \phi : \mathbb{F}_{q^m}^n \longrightarrow \mathbb{F}_{q^m}^n \\
 & \textrm{ is a Hamming equivalence} \}.
\end{split}
\end{equation*}
\end{theorem}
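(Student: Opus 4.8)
The plan is to deduce the two left-hand equalities from Theorems~\ref{weights basis} and \ref{definitions}, and the two right-hand equalities from the inequality (\ref{ineq}) together with a reduction to the support of the code.

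First I would make the relevant equivalences explicit. By item~5 of Theorem~\ref{equiv rank}, a rank-metric equivalence $\phi:\mathbb{F}_{q^m}^n\to\mathbb{F}_{q^m}^n$ has the form $\phi(\mathbf{c})=\beta\mathbf{c}A$ with $\beta\in\mathbb{F}_{q^m}^*$ and $A\in\mathbb{F}_q^{n\times n}$ invertible; writing $A=M^{-1}$ with $M$ the matrix whose rows form a basis $B\subset\mathbb{F}_q^n$ of $\mathbb{F}_{q^m}^n$, one gets $\mathbf{c}A=\varphi_B(\mathbf{c})$ in the notation of Theorem~\ref{weights basis}, and conversely every such $\varphi_B$ is a rank-metric equivalence. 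Since multiplying every coordinate by a nonzero scalar $\beta$ leaves unchanged both the Hamming weight of a subspace and every generalized Hamming weight, the set of values $\{{\rm wt_H}(\phi(D))\}$ (resp.\ $\{d_{H,r}(\phi(C))\}$) taken over rank-metric equivalences $\phi$ coincides with $\{{\rm wt_H}(\varphi_B(D))\}$ (resp.\ $\{d_{H,r}(\varphi_B(C))\}$) taken over bases $B\subset\mathbb{F}_q^n$. The first equality is then exactly Theorem~\ref{weights basis}, and the second is the characterization (\ref{def4}) of Theorem~\ref{definitions}.

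For the remaining two equalities, fix $D$ with $n\le m$. By item~5 of Theorem~\ref{equiv hamming} a Hamming equivalence $\phi$ is a monomial map, and by item~3 it preserves the Hamming weight of every subspace; hence ${\rm wt_R}(\phi(D))\le{\rm wt_H}(\phi(D))={\rm wt_H}(D)$ by (\ref{ineq}), so the maximum is at most ${\rm wt_H}(D)$. To attain it I would restrict to $I={\rm Supp}(D)$, so that, extending the monomial by the identity on coordinates outside $I$ and taking $\sigma={\rm id}$, we may assume $D$ has full support in $\mathbb{F}_{q^m}^h$ with $h={\rm wt_H}(D)\le n\le m$ and that $\phi$ sends $\mathbf{e}_i\mapsto\gamma_i\mathbf{e}_i$. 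The crucial point is that $D$ contains a codeword of full support: each $D\cap V_{I\setminus\{i\}}$ is a proper $\mathbb{F}_{q^m}$-subspace of $D$, a vector space over $\mathbb{F}_{q^m}$ is never the union of at most $q^m$ proper subspaces, and $h\le m<q^m$. Picking such a $\mathbf{d}$ and $\theta_1,\dots,\theta_h\in\mathbb{F}_{q^m}$ linearly independent over $\mathbb{F}_q$ (possible because $h\le m$), set $\gamma_i=\theta_i d_i^{-1}$; then $\phi(\mathbf{d})=(\theta_1,\dots,\theta_h)$ has rank weight $h$, so ${\rm wt_R}(\phi(D))\ge{\rm wt_R}(\phi(\mathbf{d}))=h$, forcing equality. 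This proves the third equality; the fourth is its special case $D=C$, $r=k$, since the only $k$-dimensional subspace of $C$ is $C$ itself and thus $d_{H,k}(C)={\rm wt_H}(C)$ and $d_{R,k}(\phi(C))={\rm wt_R}(\phi(C))$ by (\ref{hamming}) and (\ref{def3}).

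The main obstacle is the last paragraph: producing a full-support codeword (the union-of-proper-subspaces argument, which together with $m<q^m$ is exactly where the hypothesis $n\le m$ enters) and then choosing the monomial scalars so that the image codeword attains the maximal rank weight $h$. Everything else is bookkeeping with supports and the identifications from Section~\ref{sec equivalent defs}.
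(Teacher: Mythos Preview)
Your argument is correct. For the first two equalities and for the deduction of the fourth from the third, you do exactly what the paper does: identify rank-metric equivalences with maps of the form $\beta\varphi_B$, discard $\beta$ because scalar multiplication preserves Hamming weights, and invoke Theorem~\ref{weights basis} and (\ref{def4}).

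For the third equality your construction is genuinely different from the paper's. The paper first applies a Hamming equivalence so that $D$ admits a generator matrix in a staircase form (row $i$ has a block of $1$'s in positions $t_{i-1}+1,\dots,t_i$ and zeros beyond $t_i$, with $t_r={\rm wt_H}(D)$), and then multiplies column $j$ by the $j$-th element $\gamma_j$ of a fixed $\mathbb{F}_q$-basis of $\mathbb{F}_{q^m}$; a direct computation of $G(\phi(D))$ then gives $V_{\{1,\dots,t_r\}}$. You instead locate a single codeword $\mathbf{d}$ of full support in $D$ via the standard ``a vector space over $\mathbb{F}_{q^m}$ is not a union of $\le q^m$ proper subspaces'' lemma (valid here because $h\le n\le m<q^m$), and then choose the diagonal scalars so that $\phi(\mathbf{d})$ has $\mathbb{F}_q$-independent entries, forcing ${\rm wt_R}(\phi(D))\ge{\rm wt_R}(\phi(\mathbf{d}))=h$. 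Your route is shorter and avoids the preliminary reduction of the generator matrix, at the price of invoking the covering lemma; the paper's route is entirely constructive and self-contained but needs the (not completely obvious) claim that the staircase form is attainable by a monomial map together with a choice of basis of $D$. Both uses of the hypothesis $n\le m$ are the same in spirit: it guarantees enough $\mathbb{F}_q$-independent scalars in $\mathbb{F}_{q^m}$ to fill the support.
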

\begin{proof}
The second equality follows from the first one, which we now prove. By Theorem \ref{equiv rank}, the map $ \varphi_B $ in Theorem \ref{weights basis} is a rank-metric equivalence, for any basis $ B \subset \mathbb{F}_q^n $ of $ \mathbb{F}_{q^m}^n $, since it maps vectors in $ \mathbb{F}_q^n $ to vectors in $ \mathbb{F}_q^n $. On the other hand, given a rank-metric equivalence $ \phi : \mathbb{F}_{q^m}^n \longrightarrow \mathbb{F}_{q^m}^n $, with $ \beta $ and $ \phi^\prime $ as in item 5 in Theorem \ref{equiv rank}, define $ \mathbf{v}_i = \phi^{\prime -1}(\mathbf{e}_i) $, where $ \mathbf{e}_i $ is the $ i $-th vector in the canonical basis and $ B = \{ \mathbf{v}_i \}_{i=1}^n $. Hence $ \phi^\prime = \varphi_B $ and $ \phi = \beta \varphi_B $. Multiplication by $ \beta $ preserves Hamming weights, and hence we see that the first equality follows from Theorem \ref{weights basis}.

The last equality follows from the third one, which we now prove. First, for every Hamming equivalence $ \phi $, it follows from Theorem \ref{equiv hamming} and Equation (\ref{ineq}) that $ {\rm wt_H}(D) = {\rm wt_H}(\phi (D)) \geq {\rm wt_R}(\phi(D)) $, and therefore the inequality $ \geq $ follows. 

To conclude, we need to prove that there exists a Hamming equivalence $ \phi $ such that $ {\rm wt_H}(D) = {\rm wt_R}(\phi(D)) $. By taking a suitable Hamming equivalence, we may assume that $ D $ has a generator matrix $ G $ of the following form: the rows in $ G $ (a basis for $ D $) are $ \mathbf{g}_1, \mathbf{g}_2, \ldots, \mathbf{g}_r $, and there exist $ 0 = t_0 < t_1 < t_2 < \ldots < t_r \leq n $ such that, for every $ i = 1,2, \ldots, r $, $ g_{i,j} = 1 $ if $ t_{i-1} < j \leq t_i $, and $ g_{i,j} = 0 $ if $ t_i < j $. Observe that $ t_r = {\rm wt_H}(D) $.

Finally, choose a basis $ \gamma_1, \gamma_2, \ldots, \gamma_m $ of $ \mathbb{F}_{q^m} $ over $ \mathbb{F}_q $, and define the Hamming equivalence $ \phi(c_1,c_2, \ldots, c_n) = (\gamma_1 c_1, \gamma_2 c_2, \ldots, \gamma_n c_n) $. Then, $ \phi(D) $ has a generator matrix whose rows are $ \mathbf{h}_i = \phi(\mathbf{g}_i) $, which satisfy that $ h_{i,j} = \gamma_j $ if $ t_{i-1} < j \leq t_i $, and $ h_{i,j} = 0 $ if $ t_i < j $. 

It follows that $ G(\phi(D)) = \sum_{i=1}^r G(\mathbf{h}_i) = V_I $, where $ I = \{ 1,2, \ldots, t_r \} $, and we are done.
\end{proof}

\subsection{Rank degenerateness and minimum length}

Now we turn to degenerate codes in the rank case, extending the study in \cite[Section 6]{slides}. 

\begin{definition}
A linear code $ C \subset \mathbb{F}_{q^m}^n $ is rank degenerate if it is rank-metric equivalent to a linear code $ C^\prime \subset \mathbb{F}_{q^m}^{n^\prime} $ with $ n^\prime < n $.
\end{definition}

Hamming degenerate codes are defined in the analogous way. As in the Hamming case, rank degenerate codes are identified by looking at their last generalized rank weight. This is the definition of rank degenerate codes used in \cite{slides}. However, note that our definition actually states whether a given code does not require the given length, which in network coding means whether a code can be implemented with less outgoing links from the source node. 

The next proposition actually gives the whole range of lengths of linear codes rank-metric equivalent to a given one. To prove it, for every $ V \in \Upsilon(\mathbb{F}_{q^m}^n) $ and every basis $ B \subset \mathbb{F}_q^n $ of $ V $, we define the $ \mathbb{F}_{q^m} $-linear map 
\begin{equation} \label{psi basis}
\psi_B: V \longrightarrow \mathbb{F}_{q^m}^{\dim(V)} 
\end{equation}
given by $ \psi_B(\mathbf{c}) = \mathbf{x} $, if $ B = \{ \mathbf{v}_i \}_{i=1}^{\dim(V)} $ and $ \mathbf{c} = \sum_{i=1}^{\dim(V)} x_i \mathbf{v}_i $. It is a rank-metric equivalence by Theorem \ref{equiv rank}.

\begin{proposition} \label{range for which equivalent}
Given a linear code $ C \subset \mathbb{F}_{q^m}^n $ of dimension $ k $ and any positive integer $ n^\prime $, there exists a linear code $ C^\prime \subset \mathbb{F}_{q^m}^{n^\prime} $ that is rank-metric equivalent to $ C $ if, and only if, $ n^\prime \geq d_{R,k}(C) $.
\end{proposition}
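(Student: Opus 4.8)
The plan is to prove the two implications separately, routing the reverse one through an intermediate code of length exactly $ d := d_{R,k}(C) $. A convenient preliminary observation is that, since $ \dim(C) = k $, the only $ k $-dimensional subspace of $ C $ is $ C $ itself, so the characterization $ d_{R,r}(C) = \min \{ {\rm wt_R}(D) \mid D \subset C, \dim(D) = r \} $ in (\ref{def3}), specialized to $ r = k $, gives $ d_{R,k}(C) = {\rm wt_R}(C) = \dim(C^*) $ via Lemma \ref{charact stic}; equivalently, in definition (\ref{def1}) with $ r = k $ the condition $ \dim(C \cap V) \geq k $ forces $ C \subseteq V $, so the minimum there is attained precisely at $ V = C^* $.

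For the implication $ \Rightarrow $, suppose $ \phi : V \longrightarrow V^\prime $ is a rank-metric equivalence with $ C \subset V \in \Upsilon(\mathbb{F}_{q^m}^n) $, $ C^\prime \subset V^\prime \in \Upsilon(\mathbb{F}_{q^m}^{n^\prime}) $ and $ \phi(C) = C^\prime $. Since $ \phi $ is a vector space isomorphism, $ \dim(C^\prime) = k $, and by item 3 of Theorem \ref{equiv rank} we have $ {\rm wt_R}(C^\prime) = {\rm wt_R}(\phi(C)) = {\rm wt_R}(C) $; hence $ d_{R,k}(C^\prime) = d_{R,k}(C) $ by the preliminary observation. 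On the other hand $ d_{R,k}(C^\prime) \leq n^\prime $, because $ V = \mathbb{F}_{q^m}^{n^\prime} $ is Galois closed, contains $ C^\prime $, and witnesses the minimum in (\ref{def1}). Therefore $ n^\prime \geq d_{R,k}(C) $.

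For the implication $ \Leftarrow $, assume $ n^\prime \geq d $. By the preliminary observation $ C \subset C^* \in \Upsilon(\mathbb{F}_{q^m}^n) $ and $ \dim(C^*) = d $. First I would choose a basis $ B \subset \mathbb{F}_q^n $ of $ C^* $ (which exists by Proposition \ref{galois}) and apply the rank-metric equivalence $ \psi_B : C^* \longrightarrow \mathbb{F}_{q^m}^d $ of (\ref{psi basis}) to obtain the code $ C^{\prime\prime} := \psi_B(C) \subset \mathbb{F}_{q^m}^d $, which is by construction rank-metric equivalent to $ C $. Next, for $ n^\prime \geq d $, set $ A = ( I_d \mid 0 ) \in \mathbb{F}_q^{d \times n^\prime} $, with $ I_d $ the $ d \times d $ identity matrix, and observe that $ \mathbf{c} \longmapsto \mathbf{c} A $ is an $ \mathbb{F}_{q^m} $-linear isomorphism from $ \mathbb{F}_{q^m}^d $ onto the subspace of $ \mathbb{F}_{q^m}^{n^\prime} $ of vectors supported on the first $ d $ coordinates, a Galois closed space since it is spanned by vectors of the canonical basis. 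As this map has the form required by item 5 of Theorem \ref{equiv rank} (with $ \beta = 1 $), it is a rank-metric equivalence, so $ C^\prime := \{ \mathbf{c} A \mid \mathbf{c} \in C^{\prime\prime} \} \subset \mathbb{F}_{q^m}^{n^\prime} $ is rank-metric equivalent to $ C^{\prime\prime} $. Finally, composites of rank-metric equivalences are again rank-metric equivalences (immediate from item 5, or from item 2, of Theorem \ref{equiv rank}), so $ C^\prime $ is rank-metric equivalent to $ C $, which finishes the argument.

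I do not expect a serious obstacle: Theorem \ref{equiv rank} and Lemma \ref{charact stic} carry essentially all the weight. The two points needing a little care are (i) recognizing that the minimum defining $ d_{R,k}(C) $ is attained exactly at $ V = C^* $, so that the shortened code naturally has length $ d $, and (ii) the transitivity of the relation ``rank-metric equivalent'', which is what lets me chain the shortening step with the zero-padding step. One could sidestep the abstract use of $ \psi_B $ by realizing it explicitly as $ \mathbf{c} \longmapsto \beta\, \mathbf{c}\, A $ from a generator matrix of $ C^* $ over $ \mathbb{F}_q $, turning the whole argument into elementary matrix manipulation.
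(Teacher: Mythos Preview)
Your proof is correct and follows essentially the same route as the paper: both directions rest on the identity $d_{R,k}(C)=\dim(C^*)$, the forward implication uses that rank-metric equivalences preserve $d_{R,k}$ together with the trivial bound $d_{R,k}(C')\le n'$, and the reverse implication first applies $\psi_B$ for a basis $B\subset\mathbb{F}_q^n$ of $C^*$ to reach length $d_{R,k}(C)$ and then zero-pads to reach any larger $n'$. Your write-up is just a bit more explicit about transitivity and about why the zero-padding map $\mathbf{c}\mapsto\mathbf{c}A$ is a rank-metric equivalence via item~5 of Theorem~\ref{equiv rank}.
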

\begin{proof}
For a given $ n^\prime $, assume that there exists a linear code $ C^\prime \subset \mathbb{F}_{q^m}^{n^\prime} $ that is rank-metric equivalent to $ C $. Then $ C^\prime $ has dimension $ k $ and $ d_{R,k}(C) = d_{R,k}(C^\prime) \leq n^\prime $.

Now fix $ n^\prime = d_{R,k}(C) = \dim(C^*) $. Take $ V = C^* $ and $ \psi_B $ as in (\ref{psi basis}) for some basis $ B \subset \mathbb{F}_q^n $ of $ V $. As remarked before, $ \psi_B $ is a rank-metric equivalence and thus $ C $ is rank-metric equivalent to $ C^\prime = \psi_B(C) \subset \mathbb{F}_{q^m}^{n^\prime} $.

Finally, take $ n^{\prime \prime} \geq n^\prime = d_{R,k}(C) $ and $ C^\prime $ as in the previous paragraph. Append $ n^{\prime \prime} - n^\prime \geq 0 $ zeroes to every codeword in $ C^\prime $. The obtained code $ C^{\prime \prime} \subset \mathbb{F}_{q^m}^{n^{\prime \prime}} $ is linear and rank-metric equivalent to $ C^\prime $, and thus also to $ C $, and we are done.
\end{proof}

Therefore, $ d_{R,k}(C) $ gives the minimum possible length (minimum number of outgoing links required by $ C $) of a linear code that is rank equivalent to $ C $. As an immediate consequence, we obtain the following:

\begin{corollary}
A linear code $ C \subset \mathbb{F}_{q^m}^n $ of dimension $ k $ is rank degenerate if, and only if, $ d_{R,k}(C) < n $, or equivalently, $ C^* \neq \mathbb{F}_{q^m}^n $.
\end{corollary}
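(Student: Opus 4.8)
The plan is to read the corollary directly off Proposition \ref{range for which equivalent}, which already records the whole set of lengths attainable by codes rank-metric equivalent to $C$. By definition, $C$ is rank degenerate precisely when there exist a positive integer $n^\prime < n$ and a linear code $C^\prime \subset \mathbb{F}_{q^m}^{n^\prime}$ that is rank-metric equivalent to $C$. Proposition \ref{range for which equivalent} states that, for a given $n^\prime$, such a $C^\prime$ exists if and only if $n^\prime \geq d_{R,k}(C)$. So the first step is simply to combine these: $C$ is rank degenerate if and only if the set of admissible lengths $n^\prime$ with $n^\prime < n$ and $n^\prime \geq d_{R,k}(C)$ is nonempty, i.e. if and only if $d_{R,k}(C) < n$. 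Concretely, if $d_{R,k}(C) < n$ one takes $n^\prime = d_{R,k}(C)$; conversely, any witnessing $n^\prime < n$ automatically satisfies $d_{R,k}(C) \leq n^\prime < n$.

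The second step is to rewrite the condition $d_{R,k}(C) < n$ in the promised form $C^* \neq \mathbb{F}_{q^m}^n$. This uses the identity $d_{R,k}(C) = \dim(C^*)$, which is already invoked in the proof of Proposition \ref{range for which equivalent}: specializing the equivalent description (\ref{def3}) of the generalized rank weight to $r = k$, the only $k$-dimensional subspace of $C$ is $C$ itself, whence $d_{R,k}(C) = {\rm wt_R}(C) = \dim(C^*)$ by Lemma \ref{charact stic}. Since $C^*$ is a linear subspace of $\mathbb{F}_{q^m}^n$, the strict inequality $\dim(C^*) < n$ is equivalent to $C^* \neq \mathbb{F}_{q^m}^n$, which completes the chain of equivalences.

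I expect no real obstacle here: the entire content is packaged in the preceding proposition together with Lemma \ref{charact stic}. The only point that deserves a word of care is the boundary value $n^\prime = n$, which the definition of rank degenerateness excludes, so the equivalence must be phrased with the strict inequality $d_{R,k}(C) < n$ (not $\leq n$); this is consistent with the observation that $C$ being equivalent to a code of length exactly $n$, such as $C$ itself, does not make $C$ degenerate.
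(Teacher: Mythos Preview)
Your argument is correct and is precisely the ``immediate consequence'' the paper has in mind: the first equivalence is read off Proposition~\ref{range for which equivalent} by taking $n' = d_{R,k}(C)$ (and conversely bounding $d_{R,k}(C) \leq n' < n$), and the second uses $d_{R,k}(C) = \dim(C^*)$ from Lemma~\ref{charact stic} via (\ref{def3}). There is nothing to add.
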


On the other hand, we obtain the following result. The first part is \cite[Corollary 30]{slides}.

\begin{proposition} \label{when degenerate}
If $ mk < n $, then every linear code $ C \subset \mathbb{F}_{q^m}^n $ of dimension $ k $ is rank degenerate. On the other hand, if $ mk \geq n $, then there exists a linear code $ C \subset \mathbb{F}_{q^m}^n $ of dimension $ k $ that is not rank degenerate.
\end{proposition}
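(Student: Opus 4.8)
The plan is to reduce everything to the characterization of rank degeneracy already in hand: by the corollary following Proposition~\ref{range for which equivalent}, a linear code $ C \subset \mathbb{F}_{q^m}^n $ of dimension $ k $ is rank degenerate if and only if $ d_{R,k}(C) < n $, equivalently $ C^* \neq \mathbb{F}_{q^m}^n $. Since $ \dim(C) = k $, the code $ C $ is its own unique $ k $-dimensional subspace, so $ d_{R,k}(C) = {\rm wt_R}(C) = \dim(G(C)) $. Thus the whole statement becomes: $ \dim(G(C)) < n $ always holds when $ mk < n $, while $ \dim(G(C)) = n $ is attainable when $ mk \geq n $.

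For the first part I would take any basis $ \mathbf{g}_1, \ldots, \mathbf{g}_k $ of $ C $ and use $ G(C) = \sum_{i=1}^k G(\mathbf{g}_i) $ (Remark~\ref{vector to matrix}). By definition each $ G(\mathbf{g}_i) $ is a subspace of $ \mathbb{F}_q^n $ spanned by $ m $ vectors, so $ \dim G(\mathbf{g}_i) \leq m $, whence $ {\rm wt_R}(C) = \dim G(C) \leq mk < n $ and $ C^* \neq \mathbb{F}_{q^m}^n $. This reproves \cite[Corollary~30]{slides}.

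For the second part I would exhibit an explicit non-degenerate code by a block-diagonal construction. Because $ k \leq n \leq mk $ (the left inequality holds since $ C \subset \mathbb{F}_{q^m}^n $ has dimension $ k $), one can partition $ \mathcal{J} = \{1, \ldots, n\} $ into nonempty blocks $ I_1, \ldots, I_k $ with $ |I_j| = n_j \leq m $ for each $ j $: take the sizes as equal as possible, so that $ 1 \leq \lfloor n/k \rfloor \leq n_j \leq \lceil n/k \rceil \leq m $. For each $ j $ choose $ n_j $ elements of $ \mathbb{F}_{q^m} $ that are linearly independent over $ \mathbb{F}_q $ — possible since $ n_j \leq m = \dim_{\mathbb{F}_q} \mathbb{F}_{q^m} $ — and place them in the coordinates indexed by $ I_j $, producing a vector $ \mathbf{g}^{(j)} \in \mathbb{F}_{q^m}^n $ supported on $ I_j $. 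Set $ C = \langle \mathbf{g}^{(1)}, \ldots, \mathbf{g}^{(k)} \rangle_{\mathbb{F}_{q^m}} $; the disjointness of the supports makes these vectors $ \mathbb{F}_{q^m} $-linearly independent, so $ \dim(C) = k $.

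It then remains to check $ G(C) = \mathbb{F}_q^n $. For each $ j $, the matrix $ \mu(\mathbf{g}^{(j)}) $ has all columns outside $ I_j $ equal to zero, and its columns indexed by $ I_j $ are, by construction, $ n_j $ linearly independent vectors of $ \mathbb{F}_q^m $; hence $ \mu(\mathbf{g}^{(j)}) $ has rank $ n_j $, and its row space, being contained in $ L_{I_j} $ and of dimension $ n_j = |I_j| = \dim L_{I_j} $, equals $ L_{I_j} $. Therefore $ G(C) = \sum_{j=1}^k G(\mathbf{g}^{(j)}) = \sum_{j=1}^k L_{I_j} = L_{\mathcal{J}} = \mathbb{F}_q^n $, so $ {\rm wt_R}(C) = n = d_{R,k}(C) $ and $ C $ is not rank degenerate. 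The only two points needing a line of care are the existence of the partition (which is exactly the content of $ k \leq n \leq mk $) and the identification $ G(\mathbf{g}^{(j)}) = L_{I_j} $ (a matrix whose nonzero columns form an independent set has row space equal to the coordinate subspace supported on those columns); neither is a genuine obstacle, and the only real decision is to pick this block-diagonal code rather than a generic one.
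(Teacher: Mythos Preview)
Your proof is correct and follows the same overall strategy as the paper: both parts hinge on the identity $ d_{R,k}(C) = \dim(C^*) = \dim(G(C)) $ together with the criterion $ C $ rank degenerate $ \Leftrightarrow C^* \neq \mathbb{F}_{q^m}^n $, and the second part is handled by exhibiting $ k $ vectors whose rank supports together span $ \mathbb{F}_q^n $.

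The only notable difference is in the construction for the second part. The paper picks any $ mk $ vectors $ \mathbf{x}_{l,i} \in \mathbb{F}_q^n $ spanning $ \mathbb{F}_q^n $, sets $ \mathbf{u}_i = \sum_l \alpha_l \mathbf{x}_{l,i} $, and observes $ G(C^\prime) = \mathbb{F}_q^n $ for $ C^\prime = \langle \mathbf{u}_1, \ldots, \mathbf{u}_k \rangle $; since this only guarantees $ \dim(C^\prime) \leq k $, an extra step is needed to enlarge $ C^\prime $ to a code of dimension exactly $ k $. Your block-diagonal construction is a particular instance of this general scheme in which the $ \mathbf{x}_{l,i} $ are chosen to be standard basis vectors grouped by coordinate blocks, which buys you linear independence of the $ \mathbf{g}^{(j)} $ for free and eliminates the extension step. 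Both arguments are equally short; yours is slightly more explicit and self-contained.
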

\begin{proof}
The first part follows from the previous corollary and the fact that $ \dim(C^*) \leq mk $.

Now, if $ mk \geq n $, choose $ \lambda_{l,j}^{(i)} \in \mathbb{F}_q $, for $ 1 \leq i \leq k $, $ 1 \leq j \leq n $ and $ 1 \leq l \leq m $, such that $ \langle \{ \mathbf{x}_{l,i} \}_{1 \leq l \leq m}^{1 \leq i \leq k} \rangle = \mathbb{F}_q^n $, where $ \mathbf{x}_{l,i} = \sum_{j=1}^n \lambda_{l,j}^{(i)} \mathbf{e}_j $ and $ \mathbf{e}_j $ is the canonical basis of $ \mathbb{F}_q^n $. This is possible since $ mk \geq n $.

On the other hand, define $ \mathbf{u}_i = \sum_{l=1}^m \alpha_l \mathbf{x}_{l,i} \in \mathbb{F}_{q^m}^n $, and $ C^\prime = \langle \mathbf{u}_1, \mathbf{u}_2, \ldots, \mathbf{u}_k \rangle $. Then, $ C^{\prime *} = \mathbb{F}_{q^m}^n $ and $ \dim(C^\prime) \leq k $. Taking $ C^\prime \subset C $, with $ \dim(C) =k $, we obtain the desired code.
\end{proof}

\section{Bounds on generalized rank weights} \label{sec bounds}

In this section we establish a method to derive bounds on generalized rank weights from bounds on generalized Hamming weights, and afterwards we discuss what the Singleton bound can be for generalized rank weights. Due to \cite[Lemma 7 and Theorem 2]{rgrw}, bounds on generalized rank weights directly translate into bounds on worst case information leakage on networks, and therefore are of significant importance. 

\subsection{Translating bounds on GHWs to bounds on GRWs}

Some attempts to give bounds similar to the ones in the Hamming case have been made \cite{jerome, rgrw, oggier}. In this subsection, we prove that most of the bounds in the Hamming case can be directly translated to the rank case.

Note that, since rank weights are smaller than or equal to Hamming weights (by Equation (\ref{ineq})), every bound of the form
$$ M \geq g_{s_1,s_2,\ldots,s_N}(d_{s_1} (C), d_{s_2}(C), \ldots, d_{s_N}(C)), $$
that is valid for Hamming weights, where $ M > 0 $ is a fixed positive real number and $ g_{s_1,s_2,\ldots,s_N} $ is increasing in each component, is obviously also valid for rank weights. This is the case of the classical Singleton or Griesmer bounds \cite[Section 7.10]{pless}. On the other hand, the next result is not straightforward if we do not use (\ref{def4}) or (\ref{def4r}). 

\begin{theorem} \label{comparison bounds}
Fix numbers $ k $ and $ 1 \leq r, s \leq k $, and functions $ f_{r,s}, g_{r,s} : \mathbb{N} \longrightarrow \mathbb{R} $, which may also depend on $ n, m, k $ and $ q $. If $ g_{r,s} $ is increasing, then every bound of the form 
$$ f_{r,s}(d_r (C)) \geq g_{r,s}(d_s (C)) $$
that is valid for generalized Hamming weights, for any linear code $ C \subset \mathbb{F}_{q^m}^n $ with $ \dim(C) = k $, is also valid for generalized rank weights. The same holds for relative weights.
\end{theorem}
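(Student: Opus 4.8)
The plan is to reduce the rank-weight statement to the Hamming-weight statement by passing through a well-chosen basis $B \subset \mathbb{F}_q^n$ of $\mathbb{F}_{q^m}^n$, using the characterization of generalized rank weights given in Theorem \ref{definitions}, namely
$$ d_{R,r}(C) = \min \{ d_{H,r}(\varphi_B(C)) \mid B \subset \mathbb{F}_q^n \textrm{ is a basis of } \mathbb{F}_{q^m}^n \}. $$
First I would fix a linear code $C \subset \mathbb{F}_{q^m}^n$ with $\dim(C) = k$ and pick a basis $B$ that simultaneously attains this minimum for both indices $r$ and $s$; I need to argue that a single $B$ works for both, which is not quite immediate. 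The cleanest route is: choose $B_r$ attaining $d_{R,r}(C) = d_{H,r}(\varphi_{B_r}(C))$, so that $d_{R,r}(C) = d_{H,r}(\varphi_{B_r}(C)) \geq d_{H,r}(\text{something})$; but to combine the two indices I actually want the inequality in the convenient direction. Note $d_{R,s}(C) \leq d_{H,s}(\varphi_B(C))$ for \emph{every} basis $B$ (this is the ``$\leq$'' half of Theorem \ref{definitions}, i.e.\ the canonical-basis inequality applied after the equivalence), while $d_{R,r}(C) = d_{H,r}(\varphi_{B^*}(C))$ for the \emph{optimal} basis $B^* = B^*_r$. So pick $B = B^*_r$.

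With this choice, since $\varphi_{B^*}(C)$ is a linear code in $\mathbb{F}_{q^m}^n$ of dimension $k$ (as $\varphi_{B^*}$ is an $\mathbb{F}_{q^m}$-linear isomorphism), the Hamming bound applies to it:
$$ f_{r,s}(d_{H,r}(\varphi_{B^*}(C))) \geq g_{r,s}(d_{H,s}(\varphi_{B^*}(C))). $$
Now substitute $d_{H,r}(\varphi_{B^*}(C)) = d_{R,r}(C)$ on the left, and on the right use $d_{H,s}(\varphi_{B^*}(C)) \geq d_{R,s}(C)$ together with monotonicity of $g_{r,s}$, giving $g_{r,s}(d_{H,s}(\varphi_{B^*}(C))) \geq g_{r,s}(d_{R,s}(C))$. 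Chaining these,
$$ f_{r,s}(d_{R,r}(C)) = f_{r,s}(d_{H,r}(\varphi_{B^*}(C))) \geq g_{r,s}(d_{H,s}(\varphi_{B^*}(C))) \geq g_{r,s}(d_{R,s}(C)), $$
which is exactly the desired bound for generalized rank weights. (One should double-check that the parameters $n,m,k,q$ entering $f_{r,s},g_{r,s}$ are unchanged under $\varphi_{B^*}$: indeed $\varphi_{B^*}(C) \subset \mathbb{F}_{q^m}^n$ has the same length $n$, same $m$, same dimension $k$, over the same field, so the hypothesis on the Hamming side is genuinely applicable.)

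For the relative case the argument is identical with Theorem \ref{definitions relative} in place of Theorem \ref{definitions}: use $M_{R,r}(C_1,C_2) = \min_B M_{H,r}(\varphi_B(C_1),\varphi_B(C_2))$, pick $B^*$ optimal for the index $r$, apply the relative Hamming bound to the nested pair $\varphi_{B^*}(C_2) \varsubsetneq \varphi_{B^*}(C_1)$ (noting $\varphi_{B^*}$ preserves dimensions, hence $\dim(\varphi_{B^*}(C_1)/\varphi_{B^*}(C_2)) = \ell$ and the subcode dimensions $k_1,k_2$), and then use $M_{R,s}(C_1,C_2) \leq M_{H,s}(\varphi_{B^*}(C_1),\varphi_{B^*}(C_2))$ with monotonicity of $g_{r,s}$. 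The main (and really the only) subtlety is the one flagged above: making sure the \emph{same} basis $B^*$ can be used to turn the $r$-side into an equality and the $s$-side into an inequality in the right direction — and this works precisely because Theorem \ref{definitions} gives the rank weight as a \emph{minimum} over bases, so $d_{R,s}(C) \leq d_{H,s}(\varphi_{B^*}(C))$ holds for free regardless of how $B^*$ was chosen. Everything else is a direct substitution, so no serious obstacle remains.
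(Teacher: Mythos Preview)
Your proof is correct and is essentially identical to the paper's own argument: pick a basis $B$ achieving $d_{R,r}(C)=d_{H,r}(\varphi_B(C))$ via Theorem~\ref{definitions}, apply the Hamming bound to $\varphi_B(C)$, and then use $d_{R,s}(C)\le d_{H,s}(\varphi_B(C))$ together with monotonicity of $g_{r,s}$; the relative case is handled the same way via Theorem~\ref{definitions relative}. Your extra remarks (that $\varphi_B$ preserves $n,m,k,q$, and that the ``same basis'' issue is resolved precisely because the rank weight is a minimum over bases) are correct and make explicit points the paper leaves implicit.
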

\begin{proof}
By Theorem \ref{definitions}, there exists a basis $ B \subset \mathbb{F}_q^n $ of $ \mathbb{F}_{q^m}^n $ such that $ d_{R,r} (C) = d_{H,r}(\varphi_B(C)) $. Therefore,
$$ f_{r,s}(d_{R,r}(C)) = f_{r,s}(d_{H,r}(\varphi_B(C))) $$
$$ \geq g_{r,s}(d_{H,s}(\varphi_B(C))) \geq g_{r,s}(d_{R,s}(C)), $$
where the last inequality follows again from Theorem \ref{definitions}. Similarly for relative weights.
\end{proof}

\begin{remark}
The previous theorem is also valid, with the same proof, for the more general bounds
$$ f_{r,s_1,s_2,\ldots,s_N}(d_r (C)) $$
$$ \geq g_{r,s_1,s_2,\ldots,s_N}(d_{s_1} (C), d_{s_2}(C), \ldots, d_{s_N}(C)), $$
where $ g_{r,s_1,s_2,\ldots,s_N} $ is increasing in each component. However, most of the bounds in the literature are of the form of the previous theorem.
\end{remark}

In \cite{kloeve} and \cite[Part I, Section III.A]{tsfasman}, many of these kind of bounds are given for generalized Hamming weights. One of these (a particular case of \cite[Corollary 3.6]{tsfasman}) is proven for rank weights in \cite[Proposition II.3]{jerome}, using (\ref{def1}). Some of these are also valid for relative weights (see \cite[Proposition 1 and Proposition 2]{zhuang-luo} or \cite{new-relative}). We next list some of these bounds, where $ 1 \leq r \leq s \leq k $, and $ d_j = d_{R,j}(C) $, for all $ j $. Note that monotonicity is one of these bounds, and therefore it does not need a specific proof. Also recall that linear codes in this paper are $ \mathbb{F}_{q^m} $-linear, and hence the field size is $ q^m $, not $ q $.
\begin{enumerate}
\item
Monotonicity: $$ d_{r+1} \geq d_{r} + 1, $$
\item 
Griesmer-type (\cite[bound (14)]{tsfasman}): $$ d_r \geq \sum_{i=0}^{r-1} \left\lceil \frac{d_1}{q^{mi}} \right\rceil, $$
\item
Griesmer-type (\cite[bound (16)]{tsfasman}): $$ d_s \geq d_r + \sum_{i=0}^{s-r} \left\lceil \frac{(q^m-1)d_r}{(q^{mr}-1)q^{mi}} \right\rceil, $$
\item 
\cite[Theorem 1]{kloeve} or \cite[bound (18)]{tsfasman}: $$ (q^{ms}-1)d_r \leq (q^{ms} - q^{m(s-r)})d_s, $$
\item 
\cite[Corollary 1]{kloeve}: $$ (q^{mr}-1)d_1 \leq (q^{mr} - q^{m(r-1)})d_r, $$
\item 
\cite[Proposition II.3]{jerome}: $$ (q^{mr}-1)d_{r-1} \leq (q^{mr} - q^{m})d_r, $$
\item 
\cite[bound (20)]{tsfasman}: $$ d_r \geq n - \left\lfloor \frac{(q^{m(k-r)}-1)(n-d_s)}{q^{m(k-s)}-1} \right\rfloor. $$
\end{enumerate}

\begin{remark}
A trivial lower bound that is valid for every linear code is $ d_{R,r}(C) \geq r $, for all $ 1 \leq r \leq k $. Observe that a linear code $ C $ satisfies that $ d_{R,r}(C) = r $, for every $ 1 \leq r \leq k $ if, and only if, $ C $ is Galois closed. This gives another characterization of Galois closed spaces to those in Proposition \ref{galois}, in terms of generalized rank weights. In the Hamming case, $ d_{H,r}(C) = r $, for every $ 1 \leq r \leq k $ if, and only if, $ C = V_I $, for some $ I \subset \{ 1,2, \ldots, n \} $.
\end{remark}

\subsection{On the Singleton bound}

In this subsection, we discuss the possible extensions of the Singleton bound to rank weights. We start by giving a brief overview of the bounds in the literature that resemble the usual Singleton bound, both for a linear code $ C \subset \mathbb{F}_{q^m}^n $ and a nested linear code pair $ C_2 \varsubsetneq C_1 \subset \mathbb{F}_{q^m}^n $:

\begin{displaymath}
d_{R,r}(C) \leq \left\lbrace
\begin{array}{l}
 n - k + r \textrm{ \cite{rgrw}}, \\
 (m-1)k + r \textrm{ \cite{rgrw}}, \\
 \frac{m}{n}(n-k) +1 \textrm{, if $ r=1 $ \cite{Loi}}, \\
\end{array}\right.
\end{displaymath}

\begin{displaymath}
M_{R,s}(C_1,C_2) \leq \left\lbrace \begin{array}{l}
 n - k_1 + s \textrm{ \cite{rgrw}}, \\
 (m-1)(k_1-k_2) + s \textrm{ \cite{rgrw}}, \\
 \frac{m(n-k_1)}{n-k_2} +1 \textrm{, if $ s=1 $ \cite{rgrw}}, \\
\end{array} \right.
\end{displaymath}
where $ 1 \leq r \leq k = \dim(C) $ and $ 1 \leq s \leq k_1 - k_2 $, $ k_1 = \dim(C_1) $ and $ k_2 = \dim(C_2) $.

In \cite[Proposition 6]{oggiercyclic}, a refinement of the classical Singleton bound is given for cyclic codes. By \cite[Proposition 5]{oggiercyclic} and duality \cite[Theorem]{jerome}, this bound is $ d_{R,1}(C) \leq d_{R,k}(C) - k + 1 $. Hence this bound is implied by the classical bound and Proposition \ref{range for which equivalent}, or by monotonicity. The description in \cite{oggiercyclic} gives then an alternative description of this bound for cyclic codes.

As a tool for future bounds, we establish the following one. It shows how to obtain bounds for all generalized weights from bounds on the first one or the last one.

\begin{lemma}
For every linear code $ C \subset \mathbb{F}_{q^m}^n $, and for every $ 1 \leq r \leq k - 1 $, $ k = \dim(C) $, it holds that
$$ 1 \leq d_{R,r+1}(C) - d_{R,r}(C) \leq m. $$
The same bound applies to relative generalized rank weights.
\end{lemma}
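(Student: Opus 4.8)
The lower bound $d_{R,r+1}(C) - d_{R,r}(C) \geq 1$ is simply the monotonicity property already stated among the listed bounds (item 1), so no further work is needed there; indeed it follows directly from Definition (\ref{def3}): if $D' \subset C$ achieves $d_{R,r+1}(C)$ with $\dim(D') = r+1$, then any hyperplane $D \subset D'$ has $\dim(D) = r$ and $G(D) \subset G(D')$, and the inclusion must be strict since $D'$ contains a vector outside $D$ whose rank support is not contained in $G(D)$ — or, more cleanly, one can argue via $\dim(D^*) < \dim(D'^*)$ using Lemma \ref{charact stic}. I would state this briefly and move on.

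For the upper bound $d_{R,r+1}(C) - d_{R,r}(C) \leq m$, the plan is to use the subspace-minimization definition (\ref{def3}) together with Lemma \ref{charact stic}. Let $D \subset C$ with $\dim(D) = r$ achieve $d_{R,r}(C) = {\rm wt_R}(D) = \dim(D^*)$. Pick any vector $\mathbf{c} \in C \setminus D$ and set $D' = D + \langle \mathbf{c} \rangle$, so $\dim(D') = r+1$ and $D' \subset C$. Then $D'^* = D^* + \langle \mathbf{c} \rangle^*$, and by Lemma \ref{charact stic} together with ${\rm wt_R}(\mathbf{c}) \leq m$ we get
\begin{equation*}
d_{R,r+1}(C) \leq {\rm wt_R}(D') = \dim(D'^*) \leq \dim(D^*) + \dim(\langle \mathbf{c} \rangle^*) \leq d_{R,r}(C) + m.
\end{equation*}
This gives the claimed bound. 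For the relative version, I would run the identical argument using Definition (\ref{def3r}): take $D \subset C_1$ with $D \cap C_2 = 0$, $\dim(D) = r$, achieving $M_{R,r}(C_1,C_2)$, pick $\mathbf{c} \in C_1$ with $\mathbf{c} \notin D + C_2$ (possible since $r < \ell = \dim(C_1/C_2)$), and set $D' = D + \langle \mathbf{c}\rangle$; then $D' \cap C_2 = 0$ still holds because $\mathbf{c} \notin D + C_2$, and the same dimension estimate on $D'^*$ applies.

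The main subtlety — not really an obstacle, but the point requiring a line of justification — is verifying that $D' \cap C_2 = 0$ in the relative case after adjoining $\mathbf{c}$; this is where the hypothesis $r \leq \ell - 1$ is used, guaranteeing that $D \oplus C_2 \subsetneq C_1$ so a suitable $\mathbf{c}$ exists, and the direct-sum condition $(D+\langle\mathbf{c}\rangle)\cap C_2 = 0$ follows from $\mathbf{c} \notin D + C_2$ by a routine linear-algebra check. Everything else is immediate from Lemma \ref{charact stic} and the inequality ${\rm wt_R}(\mathbf{c}) \leq m$, which holds because $\mu(\mathbf{c})$ has $m$ rows (Remark \ref{vector to matrix}).
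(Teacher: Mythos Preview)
Your argument for the upper bound is exactly the paper's: take a minimizer $D$ of dimension $r$, adjoin one vector to get $D'$, and use $D'^* = D^* + \langle \mathbf{c} \rangle^*$ together with ${\rm wt_R}(\mathbf{c}) \leq m$. Your treatment of the relative case is also correct and in fact more explicit than the paper, which simply asserts that the same bound applies; your verification that $(D + \langle \mathbf{c}\rangle) \cap C_2 = 0$ when $\mathbf{c} \notin D + C_2$ is sound.

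One caution about your aside on monotonicity. Your primary claim---that the lower bound is just item 1, already derived from Theorem \ref{comparison bounds}---is correct and is how the paper handles it. But the direct sketch you offer is not: it is \emph{not} true that for an arbitrary hyperplane $D \subset D'$ one has $G(D) \subsetneq G(D')$ (equivalently $\dim D^* < \dim D'^*$). For instance, with $m \geq 2$ take $D' = \langle \mathbf{e}_1, \mathbf{e}_2 \rangle$ and $D = \langle \mathbf{e}_1 + \alpha \mathbf{e}_2 \rangle$ for $\alpha \in \mathbb{F}_{q^m} \setminus \mathbb{F}_q$; then $D^* = D'^* = \langle \mathbf{e}_1, \mathbf{e}_2 \rangle$. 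A clean direct proof of monotonicity instead uses Definition (\ref{def1}): if $V = D'^*$ achieves $d_{R,r+1}(C)$, remove one vector from a basis of $V$ in $\mathbb{F}_q^n$ to get a Galois-closed $V'$ of dimension $\dim V - 1$ with $\dim(C \cap V') \geq \dim(C \cap V) - 1 \geq r$. Since you correctly defer to the already-established bound, this does not affect the validity of your proof, but the parenthetical justification should be dropped or replaced.
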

\begin{proof}
It is enough to prove that, if $ D \subset D^\prime $ and $ \dim(D^\prime) = \dim(D) + 1 $, then $ {\rm wt_R}(D^\prime) \leq {\rm wt_R}(D) + m $. Take $ \mathbf{d} \in D^\prime $ such that $ D^\prime = D \oplus \langle \mathbf{d} \rangle $. Then $ D^{\prime *} = D^* + \langle \mathbf{d} \rangle^* $, and the result follows, since $ {\rm wt_R}(\mathbf{d}) \leq m $.
\end{proof}

Note that this bound implies that an inverse statement to Theorem \ref{comparison bounds} is not possible: Take for instance $ m = 1 $, then we have the bound $ d_{R,r+1} = d_{R,r} + 1 $, which holds for all linear codes. However, the bound $ d_{H,r+1} = d_{H,r} + 1 $ does not hold for all linear codes. 

The case $ r=1 $ of the following bound was established and proven by Loidreau in \cite{Loi} and for relative weights by Kurihara et al. in \cite[Proposition 3]{rgrw}. The general case follows from these and the previous lemma.

\begin{proposition}[\textbf{Alternative Singleton bound}] \label{special singleton}
If $ n > m $, then for every linear code $ C \subset \mathbb{F}_{q^m}^n $, and every $ 1 \leq r \leq k=\dim(C) $,
$$ d_{R,r}(C) \leq \frac{m}{n}(n-k) + m(r-1) + 1. $$
For a code pair $ C_2 \varsubsetneq C_1 \subset \mathbb{F}_{q^m}^n $, with $ k_i = \dim(C_i) $, $ i=1,2 $, and every $ 1 \leq r \leq \dim(C_1/C_2) $,
$$ M_{R,r}(C_1,C_2) \leq \frac{m(n-k_1)}{n-k_2} + m(r-1) + 1. $$
\end{proposition}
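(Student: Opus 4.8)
The plan is to reduce the statement for arbitrary $r$ to the already-known case $r = 1$, and then propagate it through all generalized weights using the gap bound from the preceding lemma.

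First I would record the $r = 1$ instances, which are exactly the bounds already listed just before this proposition: Loidreau's bound \cite{Loi} gives $d_{R,1}(C) \leq \frac{m}{n}(n-k) + 1$ under the hypothesis $n > m$, and \cite[Proposition 3]{rgrw} gives $M_{R,1}(C_1,C_2) \leq \frac{m(n-k_1)}{n-k_2} + 1$. These coincide with the claimed inequalities when $r = 1$.

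Next, for general $r$ I would telescope using the previous lemma, which asserts that $1 \leq d_{R,j+1}(C) - d_{R,j}(C) \leq m$ for every $1 \leq j \leq k-1$ (and the same for relative weights). Writing
$$ d_{R,r}(C) = d_{R,1}(C) + \sum_{j=1}^{r-1} \left( d_{R,j+1}(C) - d_{R,j}(C) \right) \leq d_{R,1}(C) + (r-1)m $$
and substituting the $r = 1$ bound yields $d_{R,r}(C) \leq \frac{m}{n}(n-k) + m(r-1) + 1$; the relative case follows identically, with $M_{R,1}(C_1,C_2)$ and $\frac{m(n-k_1)}{n-k_2} + 1$ in place of $d_{R,1}(C)$ and $\frac{m}{n}(n-k)+1$.

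I do not expect a real obstacle: the substantive work sits in Loidreau's first-weight bound and in the gap lemma, so what remains is a one-line summation. The only points to watch are bookkeeping ones — the gap lemma is applied $r-1$ times, so the argument is valid throughout the stated ranges $1 \leq r \leq k$ (respectively $1 \leq r \leq \dim(C_1/C_2)$), and the hypothesis $n > m$ enters only to license the $r = 1$ bounds.
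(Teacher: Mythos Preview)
Your proposal is correct and is exactly the argument the paper uses: the text immediately preceding the proposition says ``The case $r=1$ of the following bound was established and proven by Loidreau in \cite{Loi} and for relative weights by Kurihara et al.\ in \cite[Proposition 3]{rgrw}. The general case follows from these and the previous lemma,'' and your telescoping sum is precisely how one combines the $r=1$ bound with the gap lemma $d_{R,j+1}-d_{R,j}\leq m$.
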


Now, for generalized rank weights, it is easy to see that this bound is sharper than the usual Singleton bound if, and only if,
\begin{equation}
r \leq \left\lfloor \frac{n(n-1) - (n-m)k}{n(m-1)} \right\rfloor ,
\label{r bound}
\end{equation}
which is a number in $ (1,k] $ if $ n \leq mk $ (the case where the code is not necessarily rank degenerate, see Proposition \ref{when degenerate}). However, as it is usual and for convenience, we give the following definition:

\begin{definition}[\textbf{\cite[Definition 1]{jerome}}] \label{MRD definition}
A linear code $ C \subset \mathbb{F}_{q^m}^n $ of dimension $ k $ is $ r $-MRD if $ d_{R,r}(C) = n - k + r $. We say it is MRD if it is $ 1 $-MRD. Similarly for $ r $-MDS and MDS codes, replacing $ d_{R,r} $ by $ d_{H,r} $ (see \cite[Section VI]{wei}).
\end{definition}

We also obtain the bound $ d_{R,r}(C) \leq rm $ from the previous lemma, by induction on $ r $. Therefore, the overview of the Singleton bound becomes now as follows, with notation as above, which improves the bounds in the previous overview:

\begin{displaymath}
d_{R,r}(C) \leq \left\lbrace
\begin{array}{l}
 n - k + r, \\
 rm, \\
 \frac{m}{n}(n-k) + m(r-1) + 1, \\
\end{array}\right. 
\end{displaymath}

\begin{displaymath}
M_{R,s}(C_1,C_2) \leq \left\lbrace \begin{array}{l}
 n - k_1 + s, \\
 sm, \\
 \frac{m(n-k_1)}{n-k_2} + m(s-1) + 1. \\
\end{array} \right.
\end{displaymath}

\begin{remark}
The bound $ d_{R,r}(C) \leq rm $ is sharper than the alternative Singleton bound if, and only if, $ n \geq mk $. We know that in this case, $ C $ is rank degenerate (Proposition \ref{when degenerate}). Therefore, for codes that are not rank degenerate, the usual and alternative Singleton bounds are the sharpest ones.
\end{remark}

\begin{remark}
When $ n \leq m $ the usual Singleton bound is the sharpest general upper bound on the rank distance, since Gabidulin codes (see \cite{gabidulin}) are MRD and may have length $ n $, for all $ n \leq m $, and dimension $ k $, for all $ 1 \leq k \leq n $. 

Since the alternative Singleton bound is sharper for $ r = 1 $ when $ n > m $, it follows immediately that, given $ 1 \leq k \leq n $, and $ m $, there exists an MRD code over $ \mathbb{F}_{q^m}^n $, with length $ n $ and dimension $ k $, if and only if, $ n \leq m $. This gives a result analogous to the MDS conjecture (see \cite[page 265]{pless}) for the rank distance -- although in this case it is not a conjecture.

Also note that the inequality (\ref{r bound}) gives a lower bound on the number $ r $ such that $ C $ is $ r $-MRD.
\end{remark}

\begin{remark}
One might ask if a bound of the form $ d_{R,r}(C) \leq \frac{m}{n}(n-k) + r $ holds, when $ n > m $. However, this is not true even for $ r=2 $. Take for example $ m=2 $, $ n=4 $, $ \alpha \in \mathbb{F}_{q^2} \setminus \mathbb{F}_q $, and the code $ C = \langle (1,\alpha,0,0), (0,0, \alpha, 1) \rangle $, which has dimension $ k =2 $. It is easy to see that $ C^* $ has dimension $ 4 $, since $ (1,\alpha,0,0), (1,\alpha^q,0,0), (0,0, \alpha, 1) $ and $ (0,0, \alpha^q, 1) $ are linearly independent over $ \mathbb{F}_{q^m} $. Thus, for $ r = k = 2 $,
$$ d_{R,2}(C) = 4, \quad \textrm{and} \quad \frac{m}{n}(n-k) + r = \frac{2}{4}(4-2) + 2 = 3. $$
Moreover, we see that $ d_{R,2}(C) $ attains the alternative Singleton bound.
\end{remark}

We conclude the section with a simple fact that connects $ r $-MRD codes with $ r $-MDS codes, and which follows directly from (\ref{def4}).

\begin{proposition}
A linear code $ C \subset \mathbb{F}_{q^m}^n $ is $ r $-MRD if, and only if, $ \varphi_B(C) $ is $ r $-MDS, for all bases $ B \subset \mathbb{F}_q^n $ of $ \mathbb{F}_{q^m}^n $.
\end{proposition}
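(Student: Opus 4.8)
The plan is to use the equivalent definition (\ref{def4}) of generalized rank weights, which states that $d_{R,r}(C) = \min \{ d_{H,r}(\varphi_B(C)) \mid B \subset \mathbb{F}_q^n \text{ is a basis of } \mathbb{F}_{q^m}^n \}$, established in Theorem \ref{definitions}. First I would observe that, by the usual Singleton bound for generalized Hamming weights (which holds for every linear code over any field, in particular over $\mathbb{F}_{q^m}$), we have $d_{H,r}(\varphi_B(C)) \leq n - k + r$ for every basis $B$, since $\varphi_B$ is an $\mathbb{F}_{q^m}$-linear isomorphism and hence $\varphi_B(C)$ is a linear code of dimension $k$ and length $n$. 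Thus $d_{R,r}(C) = n - k + r$ forces $d_{H,r}(\varphi_B(C)) = n - k + r$ for \emph{every} basis $B$, i.e.\ each $\varphi_B(C)$ is $r$-MDS; this gives one direction immediately.

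For the converse, suppose $\varphi_B(C)$ is $r$-MDS for all bases $B \subset \mathbb{F}_q^n$ of $\mathbb{F}_{q^m}^n$. Then $d_{H,r}(\varphi_B(C)) = n-k+r$ for every such $B$, and taking the minimum over all $B$ in (\ref{def4}) yields $d_{R,r}(C) = n-k+r$, so $C$ is $r$-MRD. Both directions together establish the equivalence.

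The only point requiring care is making sure that (\ref{def4}) is being applied correctly: the minimum in Theorem \ref{definitions} is over the family of bases $B$ contained in $\mathbb{F}_q^n$, and for each such $B$ the map $\varphi_B$ is a genuine $\mathbb{F}_{q^m}$-linear automorphism of $\mathbb{F}_{q^m}^n$ (as noted in Theorem \ref{weights basis}), so $\varphi_B(C)$ is again a $k$-dimensional linear code of length $n$ over $\mathbb{F}_{q^m}$, and Definition \ref{MRD definition} of $r$-MDS applies to it verbatim. Given all this machinery is already in place, there is essentially no obstacle: the proposition is a direct corollary of (\ref{def4}) combined with the Singleton bound for generalized Hamming weights. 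The main (very minor) subtlety is simply recognizing that the Singleton bound $d_{H,r} \leq n-k+r$ applies to $\varphi_B(C)$ uniformly over all $B$, so that attaining the rank Singleton bound is equivalent to attaining the Hamming one after \emph{every} admissible change of basis rather than just some optimal one.
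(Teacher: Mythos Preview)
Your proof is correct and follows exactly the approach the paper indicates: the proposition is stated there as a direct consequence of (\ref{def4}), and your argument simply spells out that the minimum in (\ref{def4}) equals $n-k+r$ precisely when every term $d_{H,r}(\varphi_B(C))$ attains the Singleton bound $n-k+r$.
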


Thus, if $ C $ is a Gabidulin code \cite{gabidulin}, it is obviously MDS, but also the codes $ \varphi_B(C) $ are MDS. It can also be easily shown that the codes $ \varphi_B(C) $ are again Gabidulin codes. Therefore, to prove that they are MRD, it is only necessary to prove that they are MDS.

\section{Rank-puncturing and rank-shortening} \label{sec puncturing}

In this section we discuss what are the operations on rank-metric codes analogous to puncturing and shortening \cite[Section 1.5]{pless}. The main importance of the concept of puncturing is that a punctured codeword is essentially the same as a codeword with erasures, as in the Hamming case. Recall that the shortened and punctured codes of a given code $ C \subset \mathbb{F}_{q^m}^n $ on the coordinates in the set $ I \subset \mathcal{J} $ are defined, respectively, as
\begin{equation*}
\begin{split}
C_I = & C \cap V_I = \{ \mathbf{c} \in C \mid c_i = 0, \forall i \notin I \}, \\
C^I = & \{ (c_i)_{i \in I} \mid \mathbf{c} \in C \}.
\end{split}
\end{equation*}

\subsection{The definitions}

For a linear subspace $ L \subset \mathbb{F}_q^n $, fix another subspace $ L^\prime \subset \mathbb{F}_q^n $ such that $ \mathbb{F}_q^n = L^\prime \oplus L^\perp $. Observe that $ \dim(L) = n - \dim(L^\perp) = \dim(L^\prime) $, which we will use throughout the section. We then define the projection map 
$$ \pi_{L,L^\prime} : \mathbb{F}_{q^m}^n \longrightarrow V^\prime = L^\prime \otimes \mathbb{F}_{q^m}, $$
such that $ \pi_{L,L^\prime}(\mathbf{c}) = \mathbf{c}_1 $, where $ \mathbf{c} = \mathbf{c}_1 + \mathbf{c}_2 $, $ \mathbf{c}_1 \in V^\prime = L^\prime \otimes \mathbb{F}_{q^m} $ and $ \mathbf{c}_2 \in V^\perp = L^\perp \otimes \mathbb{F}_{q^m} $. We then write $ C^{L,L^\prime} = \pi_{L,L^\prime}(C) $, for an (arbitrary) code $ C \subset \mathbb{F}_{q^m}^n $.

\begin{lemma}
For any two subspaces $ L^\prime, L^{\prime \prime} \subset \mathbb{F}_q^n $ such that $ \mathbb{F}_q^n = L^\prime \oplus L^\perp = L^{\prime \prime} \oplus L^\perp $, and for any code $ C \subset \mathbb{F}_{q^m}^n $, we have that the codes $ C^{L,L^\prime} $ and $ C^{L,L^{\prime \prime}} $ are rank-metric equivalent in a canonical way. 
\end{lemma}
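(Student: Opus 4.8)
The plan is to produce an explicit isomorphism between $V' = L' \otimes \mathbb{F}_{q^m}$ and $V'' = L'' \otimes \mathbb{F}_{q^m}$ and check that it carries $C^{L,L'}$ onto $C^{L,L''}$ while preserving rank weights. First I would observe that since $\mathbb{F}_q^n = L' \oplus L^\perp = L'' \oplus L^\perp$, the projection maps $\pi_{L,L'}$ and $\pi_{L,L''}$ agree modulo $V^\perp = L^\perp \otimes \mathbb{F}_{q^m}$; that is, for every $\mathbf{c} \in \mathbb{F}_{q^m}^n$ we have $\pi_{L,L'}(\mathbf{c}) - \pi_{L,L''}(\mathbf{c}) \in V^\perp$. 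Consequently the restriction $\rho = \pi_{L,L''}|_{V'} : V' \longrightarrow V''$ is a well-defined $\mathbb{F}_{q^m}$-linear map, and it is an isomorphism: both spaces have dimension $\dim(L)$ over $\mathbb{F}_{q^m}$, and $\rho$ is surjective because $\pi_{L,L''}$ is surjective onto $V''$ and restricts to the complement $V'$ of its kernel $V^\perp$ (equivalently, its kernel is $V' \cap V^\perp = (L' \cap L^\perp) \otimes \mathbb{F}_{q^m} = 0$).

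Next I would verify that $\rho$ is a rank-metric equivalence in the sense of Theorem \ref{equiv rank}. The cleanest route is to use item 5 (or item 1): since $L'$ and $L''$ are $\mathbb{F}_q$-subspaces, the spaces $V'$ and $V''$ are Galois closed, and the map $\rho$ sends vectors of $\mathbb{F}_q^n$ to vectors of $\mathbb{F}_q^n$ — indeed $\rho$ is the scalar extension to $\mathbb{F}_{q^m}$ of the $\mathbb{F}_q$-linear projection $L' \longrightarrow L''$ along $L^\perp$, so it maps $V'|_{\mathbb{F}_q} = L'$ into $V''|_{\mathbb{F}_q} = L''$, which is exactly the condition in item 5 with $\beta = 1$ and $\phi' = \rho$. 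Hence by Theorem \ref{equiv rank}, $\rho$ preserves rank weights.

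Finally I would check that $\rho(C^{L,L'}) = C^{L,L''}$. For any $\mathbf{c} \in C$, write $\pi_{L,L'}(\mathbf{c}) \in V'$; then $\rho(\pi_{L,L'}(\mathbf{c})) = \pi_{L,L''}(\pi_{L,L'}(\mathbf{c}))$, and since $\mathbf{c} - \pi_{L,L'}(\mathbf{c}) \in V^\perp = \ker(\pi_{L,L''})$, this equals $\pi_{L,L''}(\mathbf{c})$. Therefore $\rho \circ \pi_{L,L'} = \pi_{L,L''}$ on all of $\mathbb{F}_{q^m}^n$, and applying both sides to $C$ gives $\rho(C^{L,L'}) = C^{L,L''}$. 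The word ``canonical'' is justified because $\rho$ depends only on the data $L, L', L''$ and involves no arbitrary choices. I do not expect a serious obstacle here; the only point requiring mild care is confirming that $\rho$ is well-defined and bijective, i.e. that $V' \cap V^\perp = 0$ and that restricting the projection to the complementary summand is an isomorphism onto $V''$ — this is routine linear algebra over $\mathbb{F}_{q^m}$ once one notes the decompositions tensor up from $\mathbb{F}_q$.
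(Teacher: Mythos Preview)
Your proof is correct and follows essentially the same approach as the paper: both define the map $\phi = \pi_{L,L''}|_{V'}$, verify injectivity via $V' \cap V^\perp = 0$, invoke item 5 of Theorem~\ref{equiv rank} to get the rank-metric equivalence from $\phi(L') \subset L''$, and then check that $\phi$ intertwines the two projections on $C$. Your formulation $\rho \circ \pi_{L,L'} = \pi_{L,L''}$ is slightly cleaner than the paper's element-chase for the last step, but the argument is the same.
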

\begin{proof}
Define $ \phi : V^\prime \longrightarrow V^{\prime \prime} $ by $ \phi (\mathbf{c}) = \pi_{L,L^{\prime \prime}}(\mathbf{c}) $, where $ V^\prime = L^\prime \otimes \mathbb{F}_{q^m} $ and $ V^{\prime \prime} = L^{\prime \prime} \otimes \mathbb{F}_{q^m} $. 

First we see that $ \phi $ is a vector space isomorphism. Since $ \dim(V^\prime) = \dim(V^{\prime \prime}) $, we only need to prove that it is one to one. Assume that $ \pi_{L,L^{\prime \prime}}(\mathbf{c}) = \mathbf{0} $. This means that $ \mathbf{c} \in V^\perp $, but also $ \mathbf{c} \in V^\prime $ and $ V^\prime \cap V^\perp = 0 $, hence $ \mathbf{c} = \mathbf{0} $.

On the other hand, since $ \mathbb{F}_q^n = L^{\prime \prime} \oplus L^\perp $, if $ \mathbf{c} \in L^\prime $, then $ \phi(\mathbf{c}) \in L^{\prime \prime} $. In other words, $ \phi(V^\prime \vert_{\mathbb{F}_q}) \subset V^{\prime \prime} \vert_{\mathbb{F}_q} $. By Theorem \ref{equiv rank}, item 5, $ \phi $ is a rank-metric equivalence.

Finally, we see that $ \phi (C^{L,L^\prime}) = C^{L,L^{\prime \prime}} $. If $ \mathbf{c}_1 \in C^{L,L^\prime} $, then there exists $ \mathbf{c} = \mathbf{c}_1 + \mathbf{c}_2 \in C $, with $ \mathbf{c}_2 \in V^\perp $. Write $ \mathbf{c} = \widetilde{\mathbf{c}}_1 + \widetilde{\mathbf{c}}_2 $, with $ \widetilde{\mathbf{c}}_1 \in V^{\prime \prime} $ and $ \widetilde{\mathbf{c}}_2 \in V^\perp $. Then $ \mathbf{c}_1 = \widetilde{\mathbf{c}}_1 + (\widetilde{\mathbf{c}}_2 - \mathbf{c}_2) $ and hence $ \phi(\mathbf{c}_1) = \widetilde{\mathbf{c}}_1 \in C^{L,L^{\prime \prime}} $.
\end{proof}

Therefore, the next definition of rank-punctured code is consistent.

\begin{definition} \label{punctured and shortened}
For every $ \mathbb{F}_q $-linear space $ L \subset \mathbb{F}_q^n $, and every code $ C \subset \mathbb{F}_{q^m}^n $, we define its rank-punctured and rank-shortened codes over $ L $ as $ C^L = C^{L,L^\prime} $ and $ C_L = C \cap V $, respectively, for some $ L^\prime $ as before, where $ V = L \otimes \mathbb{F}_{q^m} $.

Similarly, for a coding scheme $ \mathcal{P}_\mathcal{S} = \{ C_{\mathbf{x}} \}_{\mathbf{x} \in \mathcal{S}} $, we can define its rank-punctured and rank-shortened schemes over $ L $ as $ \mathcal{P}_\mathcal{S}^L = \{ C_{\mathbf{x}}^L \}_{\mathbf{x} \in \mathcal{S}} $ and $ {\mathcal{P}_\mathcal{S}}_L = \{ C_{\mathbf{x} L} \}_{\mathbf{x} \in \mathcal{S}} $, respectively. For a linear coding scheme built from $ C_2 \varsubsetneq C_1 \subset \mathbb{F}_{q^m}^n $, they are the schemes built from $ C_2^L \subset C_1^L $ and $ C_{2L} \subset C_{1L} $, respectively.
\end{definition}

Observe that it is not always true that $ C_L \subset C^L $, as opposed to the usual shortening and puncturing. On the other hand, we see that, for every $ I \subset \mathcal{J} $, $ V_I \in \Upsilon $. Then, it is easy to see that $ C^I = C^{L_I} $ and $ C_I = C_{L_I} $, regarded as subspaces of $ V_I $. Thus the previous definition extends the usual definition of puncturing and shortening. For brevity, we will use just the words puncturing and shortening for rank-puncturing and rank-shortening, respectively.

\begin{remark}
Note that, given $ L \subset \mathbb{F}_q^n $, there may be more than one subspace $ L^\prime \subset \mathbb{F}_q^n $ such that $ \mathbb{F}_q^n = L^\prime \oplus L^\perp $ (later we will actually see how to obtain them). If $ V = L \otimes \mathbb{F}_{q^m} $, then $ V^\perp = L^\perp \otimes \mathbb{F}_{q^m} $, and what we are doing is finding a subspace $ V^\prime \in \Upsilon $ such that $ \mathbb{F}_{q^m}^n = V^\prime \oplus V^\perp $.

On the other hand, if $ V = V_I \in \Lambda $, then $ V_I^\perp = V_{\overline{I}} $ and $ V_I $ is the unique subspace $ V^\prime \in \Lambda $ such that $ \mathbb{F}_{q^m}^n = V^\prime \oplus V^\perp $. Therefore, punctured codes in the Hamming case are defined in a unique way, in contrast with the rank case.
\end{remark}

Usually, $ C^I $ and $ C_I $ are considered as subspaces of $ \mathbb{F}_{q^m}^{\# I} $. This is obvious since $ {\rm Supp}(C^I) \subset I $ and $ V_I $ is Hamming equivalent to $ \mathbb{F}_{q^m}^{\# I} $. For rank-metric codes, we can fix bases $ B, B^\prime $ of $ L, L^\prime \subset \mathbb{F}_q^n $, respectively, and consider $ \psi_B (C_L) $ and $ \psi_{B^\prime} (C^L) $, where $ \psi_B $ and $ \psi_{B^\prime} $ are as in (\ref{psi basis}). That is, we can consider that $ C_L, C^L \subset \mathbb{F}_{q^m}^{\dim(L)} $.

\subsection{$ r $-MRD characterizations}

In this subsection, we give characterizations of $ r $-MRD (and $ r $-MDS) codes in terms of dimensions of punctured codes. We start with a tool that generalizes Forney's Lemmas \cite[Lemmas 1 and 2]{forney} and that is useful to relate dimensions of punctured and shortened codes. Note that \cite[Lemma 10]{rgrw} is essentially the second equality in this lemma.

\begin{lemma} \label{forney}
For every linear code $ C \subset \mathbb{F}_{q^m}^n $ of dimension $ k $ and every subspace $ L \subset \mathbb{F}_q^n $, it holds that
$$ \dim(C^L) = \dim(L) - \dim((C^\perp)_L) = k - \dim(C_{L^\perp}). $$
\end{lemma}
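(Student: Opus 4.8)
The plan is to prove the two equalities separately, using duality for the rank-metric operations $(-)^L$ and $(-)_L$ together with the dimension formula $\dim(C) + \dim(C^\perp) = n$. The key structural observation I would first record is how rank-puncturing and rank-shortening interact with orthogonality: if $V = L \otimes \mathbb{F}_{q^m}$ and $V^\prime$ is the complement of $V^\perp$ used to define $C^L = \pi_{L,L^\prime}(C)$, then the kernel of the restriction $\pi_{L,L^\prime}|_C$ is exactly $C \cap V^\perp = C_{L^\perp}$ (here $V^\perp = L^\perp \otimes \mathbb{F}_{q^m}$, which is Galois closed, so $C_{L^\perp}$ makes sense as a rank-shortened code). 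Applying the rank--nullity theorem to $\pi_{L,L^\prime}|_C : C \longrightarrow C^L$ immediately gives
$$ \dim(C^L) = \dim(C) - \dim(C_{L^\perp}) = k - \dim(C_{L^\perp}), $$
which is the second equality.

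**The first equality via duality.** For the equality $\dim(C^L) = \dim(L) - \dim((C^\perp)_L)$, I would apply the second equality (already proved, and valid for \emph{every} linear code since $C$ was arbitrary) to the code $C^\perp$ in place of $C$: this yields $\dim((C^\perp)^L) = (n-k) - \dim((C^\perp)_{L^\perp})$. So it suffices to relate $\dim((C^\perp)^L)$ back to $\dim(C^L)$ and $\dim((C^\perp)_L)$. The cleanest route is a dimension-counting identity of the form
$$ \dim(C^L) + \dim((C^\perp)_L) = \dim(L), $$
which is exactly a rank-metric analogue of the classical relation between a punctured code and its dual's shortened code on the same coordinate set. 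To establish it I would observe that $(C^\perp)_L = C^\perp \cap V$ and that, under the identification of $C^L$ with a subspace of $V^\prime \cong \mathbb{F}_{q^m}^{\dim(L)}$ via $\psi_{L^\prime}$, the annihilator of $C^L$ inside $V^\prime$ corresponds precisely to $C^\perp \cap V$. Concretely: for $\mathbf{w} \in V$, the condition $\mathbf{w} \cdot \mathbf{c}_1 = 0$ for all $\mathbf{c}_1 \in C^L$ is equivalent (writing $\mathbf{c} = \mathbf{c}_1 + \mathbf{c}_2$ with $\mathbf{c}_2 \in V^\perp$, so $\mathbf{w}\cdot\mathbf{c}_2 = 0$) to $\mathbf{w}\cdot\mathbf{c} = 0$ for all $\mathbf{c}\in C$, i.e. $\mathbf{w} \in C^\perp$. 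Hence $\dim(C^L) + \dim(C^\perp \cap V) = \dim(V) = \dim(L)$, giving the first equality.

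**Expected main obstacle.** The routine-looking steps (rank--nullity, the complement decomposition) are genuinely routine; the one place requiring care is the bookkeeping in the last paragraph, namely checking that the pairing between $V^\prime$ and $V$ is nondegenerate and that "annihilator inside $V^\prime$" transports correctly to "$C^\perp \cap V$" rather than to some subspace of $V^\perp$. The subtlety is that the standard dot product on $\mathbb{F}_{q^m}^n$ restricted to $V \times V^\prime$ need \emph{not} be the pairing induced by the abstract duality of $C^L$ unless $V^\prime$ is chosen compatibly; one must either argue that the choice of complement $L^\prime$ is irrelevant (which the preceding lemma already guarantees, since different choices give rank-metric equivalent — in particular equidimensional — punctured codes), or simply choose $L^\prime = (L^\perp)^\perp = L$ inside $\mathbb{F}_q^n$ when $L$ is non-degenerate, or in general pick $L^\prime$ so that $L^\prime \perp L^\perp$ fails gracefully. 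The safest write-up fixes an arbitrary valid $L^\prime$, uses $V^{\prime\,\perp} \cap V = 0$ and $\dim V^\prime + \dim V^\perp = n$ to see $\mathbb{F}_{q^m}^n = V \oplus V^{\prime\,\perp}$ is \emph{not} automatic, so instead one argues directly with the explicit decomposition $\mathbf{c} = \mathbf{c}_1 + \mathbf{c}_2$ as above, where only the vanishing $\mathbf{w}\cdot\mathbf{c}_2 = 0$ for $\mathbf{w}\in V$, $\mathbf{c}_2\in V^\perp$ is used — and that holds by the very definition of $V^\perp = L^\perp\otimes\mathbb{F}_{q^m}$. So the obstacle is entirely one of choosing notation that makes this orthogonality transparent rather than any real mathematical difficulty.
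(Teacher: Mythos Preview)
Your argument is correct. The rank--nullity computation giving $\dim(C^L) = k - \dim(C_{L^\perp})$ is exactly what the paper does for that part. For the remaining equality, the paper does not argue at all: it simply cites \cite[Lemma~10]{rgrw} for the identity $\dim(L) - \dim((C^\perp)_L) = k - \dim(C_{L^\perp})$ (which amounts to $(C + V^\perp)^\perp = C^\perp \cap V$ together with a dimension count) and combines it with the rank--nullity step. Your annihilator argument is a self-contained alternative that proves $\dim(C^L)+\dim((C^\perp)_L)=\dim(L)$ directly: the restricted pairing $V \times V^\prime \to \mathbb{F}_{q^m}$ is indeed perfect because any $\mathbf{w}\in V$ annihilating $V^\prime$ already annihilates $V^\perp$ and hence all of $\mathbb{F}_{q^m}^n = V^\prime \oplus V^\perp$; and your identification of the annihilator of $C^L\subset V^\prime$ inside $V$ with $C^\perp\cap V=(C^\perp)_L$ via the decomposition $\mathbf{c}=\mathbf{c}_1+\mathbf{c}_2$ is clean and correct. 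So your route does a bit more work but avoids the external reference, while the paper's proof is essentially a one-line citation plus rank--nullity. (Minor labeling quibble: what you call ``the second equality'' is really the equality of the first and third expressions; the paper reserves ``second equality'' for the middle-equals-right identity it cites.)
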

\begin{proof}
The second equality is \cite[Lemma 10]{rgrw}. Now $ \dim (C^L) = \dim (\pi_{L,L^\prime}(C)) = k - \dim ({\rm ker}(\pi_{L,L^\prime})) = k - \dim (C_{L^\perp}) $.
\end{proof}

We will need the duality theorem for generalized rank weights, which has been established and proven in \cite{jerome} (we will give a shorter proof in Appendix \ref{app 2}):

\begin{theorem}[\textbf{Duality \cite{jerome}}] \label{wei duality}
Given a linear code $ C \subset \mathbb{F}_{q^m}^n $ of dimension $ k $, write $ d_r = d_{R,r}(C) $ for $ 1 \leq r \leq k $, and $ d_s^\perp = d_{R,s}(C^\perp) $, for $ 1 \leq s \leq n-k $. Then it holds that
\begin{equation*}
\begin{split}
\{ 1,2, \ldots, n \} = & \{ d_1, d_2, \ldots, d_k \} \cup \\
 & \{ n+1-d_1^\perp, n+1 -d_2^\perp, \ldots, n+1- d_{n-k}^\perp \},
\end{split}
\end{equation*}
where the union is disjoint.
\end{theorem}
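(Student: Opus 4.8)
The plan is to mimic the classical Wei duality proof for generalized Hamming weights, using the ``translation'' results already developed in the paper. The cleanest route exploits Theorem~\ref{definitions}, which gives
$$ d_{R,r}(C) = \min \{ d_{H,r}(\varphi_B(C)) \mid B \subset \mathbb{F}_q^n \textrm{ is a basis of } \mathbb{F}_{q^m}^n \}, $$
together with the classical Wei duality for Hamming weights applied to the codes $ \varphi_B(C) $. However, the subtlety is that the minimizing basis for $ d_{R,r}(C) $ need not be the minimizing basis for $ d_{R,r+1}(C) $ or for the dual weights, so I cannot simply pull Wei's theorem back through a single $ \varphi_B $. Instead I would first reduce the statement to a purely set-theoretic claim about which integers occur as rank weights.

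First I would recall (or re-derive from Lemma~\ref{charact stic} and the monotonicity bound $ d_{R,r+1} > d_{R,r} $) that $ \{ d_1, \dots, d_k \} $ is a $k$-element subset of $ \{1, \dots, n\} $, and likewise $ \{ n+1-d_s^\perp \} $ is an $(n-k)$-element subset; since $ k + (n-k) = n $, it suffices to prove the two sets are \emph{disjoint}, i.e. that no integer $ a $ can be simultaneously of the form $ d_r $ and of the form $ n+1-d_s^\perp $. Equivalently, writing $ a = d_r $, I must show $ n+1-a $ is \emph{not} a generalized rank weight of $ C^\perp $; the natural way is to show $ d_{R,j}(C^\perp) \neq n+1-a $ for the relevant $ j $ by a counting argument on Galois-closed subspaces. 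The key algebraic input is: for a Galois-closed $ V \in \Upsilon $ with $ \dim V = w $, one has $ \dim(C \cap V) + \dim(C^\perp \cap V) = w - k + \dim(C \cap V) + \dim(C^\perp \cap V) $... more precisely the identity $ \dim(C \cap V) - \dim(C^\perp \cap V^\perp) = w - (n-k) $ (valid because $ V^\perp $ is also Galois-closed and $ (C\cap V)^\perp = C^\perp + V^\perp $). This is the rank-metric analogue of the linear-algebra identity underpinning Wei's proof.

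Concretely, the main steps I would carry out, in order: (1) establish the dimension identity $ \dim(C \cap V) = \dim(C^\perp \cap V^\perp) + \dim V - (n-k) $ for all $ V \in \Upsilon $, using $ V = V^* $, $ V^\perp = (V^\perp)^* $, and the standard $ (C \cap V)^\perp = C^\perp + V^\perp $; (2) using definition~(\ref{def1}), rephrase ``$ a $ is a generalized rank weight of $ C $'' as ``$ a = \min\{\dim V : V \in \Upsilon, \dim(C\cap V) \geq r\} $ for some $ r $,'' and show this forces a sharp jump: there is $ V \in \Upsilon $ with $ \dim V = a $, $ \dim(C \cap V) = r $, and every Galois-closed subspace of dimension $ a-1 $ meets $ C $ in dimension $ \leq r-1 $; (3) feed $ V $ (or rather $ V^\perp $) through the identity from step~(1) to compute $ \dim(C^\perp \cap V^\perp) = (n-a) - (n-k) + r = k - a + r $, and argue that this makes $ n-a $ an ``inefficient'' dimension for $ C^\perp $, i.e. $ n+1-a $ cannot equal any $ d_s^\perp $, because the jump in $ \dim(C^\perp \cap \cdot) $ as the Galois-closed subspace grows past dimension $ n-a $ is blocked by the extremality of $ V $ from step~(2); (4) conclude disjointness, and hence equality of the two sets by the cardinality count.

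The main obstacle I expect is step~(3): translating the extremality property of $ V $ (minimality of $ \dim V $ among Galois-closed subspaces with $ \dim(C\cap V) \geq r $) into the corresponding extremality/non-attainment property for $ C^\perp $ and its dual weight. In the Hamming case this is a clean complementation of coordinate sets; here the lattice $ \Upsilon $ of Galois-closed subspaces is not complemented, so I will need to argue directly with the dimension identity rather than with a bijection on supports. An alternative, possibly shorter, route that sidesteps this: pick a \emph{single} basis $ B $ that is simultaneously optimal in a suitable sense (e.g. use Theorem~\ref{weights basis} / Theorem~\ref{definitions} to choose $ B $ realizing $ d_{R,k}(C) $, noting $ \varphi_B(C) $ is then as ``nondegenerate'' as possible), apply classical Wei duality to $ \varphi_B(C) $ and $ \varphi_B(C)^\perp = \varphi_{B^*}(C^\perp) $ for an appropriate dual basis, and check the endpoints match; but verifying that the interior weights also match would still reduce to the same counting, so I would lead with the direct $ \Upsilon $-lattice argument and only fall back on the basis approach if the dimension bookkeeping becomes unwieldy.
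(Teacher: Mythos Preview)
Your approach is correct and essentially the same as the paper's: both reduce to disjointness by a cardinality/monotonicity count and then derive a contradiction from the dimension identity $\dim(C \cap V) = \dim(C^\perp \cap V^\perp) + \dim V - (n-k)$ for Galois-closed $V$ (the paper packages this identity through Proposition~\ref{pre duality} applied with $C_1=C$, $C_2=0$, together with the characterization $d_{R,r}(C) = \min\{ j : \max_{\dim L = j} \dim(C_L) = r\}$ from \cite{rgrw}, but the underlying linear algebra is exactly your step~(1)). Your anticipated obstacle in step~(3) is not a real one: since $V \in \Upsilon$ implies $V^\perp \in \Upsilon$, the map $V \mapsto V^\perp$ is an order-reversing, dimension-complementing involution on $\Upsilon$ that plays precisely the role of $I \mapsto \overline{I}$ in Wei's Hamming proof, so the extremality of $V$ for $C$ transfers directly to the extremality of $V^\perp$ for $C^\perp$ via your identity, yielding the two inequalities $r \leq s+a+k-n-1$ and $r \geq s+a+k-n$ (in your notation) and hence the contradiction.
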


Note that, in the next propositions, the equivalence of the two first conditions follows directly from Wei's duality and its corresponding theorem for rank weights, as proven in \cite[Proposition 4.1]{tsfasman} and \cite[Corollary III.3]{jerome}, respectively. The equivalence between item 2 and item 4 for Hamming weights is proven in \cite[Theorem 1.4.15]{pless}, and the case $ r = 1 $ ($ C $ is MDS) is fully proven in \cite[Theorem 2.4.3]{pless}. It also generalizes \cite[Corollary 1.4.14]{pless} and \cite[Theorem 1.5.7 (ii)]{pless}.

\begin{proposition}  \label{r MDS}
The following conditions are equivalent for a linear code $ C \subset \mathbb{F}_{q^m}^n $ of dimension $ k $, and every $ 1 \leq r \leq k $:
\begin{enumerate}
\item
The code $ C $ is $ r $-MDS.
\item
$ d_{H,1}(C^\perp) \geq k - r + 2 $.
\item
For all $ I \subset \mathcal{J} $ such that $ \# I \leq k - r + 1 $, we have that $ \dim (C^I) = \# I $.
\item
For all $ I \subset \mathcal{J} $ such that $ \# I \geq n - k + r - 1 $, we have that $ \dim ((C^\perp)^I) = n-k $.
\end{enumerate}
\end{proposition}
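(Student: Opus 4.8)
The plan is to route every one of the four conditions through the single auxiliary statement
$$ (\star)\qquad C^\perp\cap V_J=0\ \text{ for every } J\subset\mathcal{J}\text{ with }\#J\le k-r+1, $$
and then note that $(\star)$ is nothing but item~2: it says precisely that $C^\perp$ has no nonzero word supported on a set of size $\le k-r+1$, i.e. no nonzero word of Hamming weight $\le k-r+1$, i.e. $d_{H,1}(C^\perp)\ge k-r+2$.

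For $1\Leftrightarrow 2$ I would invoke Wei's duality for generalized Hamming weights (\cite[Proposition 4.1]{tsfasman}, the Hamming counterpart of Theorem~\ref{wei duality}): the sets $\{d_{H,i}(C)\}_{i=1}^{k}$ and $\{n+1-d_{H,j}(C^\perp)\}_{j=1}^{n-k}$ partition $\{1,\dots,n\}$. The first observation is that $d_{H,r}(C)=n-k+r$ already forces $d_{H,i}(C)=n-k+i$ for all $i\ge r$, by monotonicity ($d_{H,i+1}\ge d_{H,i}+1$) together with the Singleton bound ($d_{H,i}\le n-k+i$); hence being $r$-MDS is equivalent to $\{n-k+r,\dots,n\}\subseteq\{d_{H,i}(C)\}_i$. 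Since these are the $k-r+1$ largest elements of $\{1,\dots,n\}$, by the partition this containment holds iff none of them lies in the complementary set $\{n+1-d_{H,j}(C^\perp)\}_j$, whose largest element is $n+1-d_{H,1}(C^\perp)$; that is, iff $n+1-d_{H,1}(C^\perp)\le n-k+r-1$, which is item~2.

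It then remains to identify items~3 and~4 with $(\star)$, and both fall out of Lemma~\ref{forney} applied with $L=L_I$, for which $C^{L_I}=C^I$ and $(C^\perp)_{L_I}=C^\perp\cap V_I$. For item~3, the first equality of Lemma~\ref{forney} reads $\dim(C^I)=\#I-\dim(C^\perp\cap V_I)$, so ``$\dim(C^I)=\#I$ for all $\#I\le k-r+1$'' is literally $(\star)$ with $J=I$. For item~4, applying Lemma~\ref{forney} to the code $C^\perp$ of dimension $n-k$ (and using $(C^\perp)^\perp=C$) gives $\dim\bigl((C^\perp)^I\bigr)=(n-k)-\dim(C^\perp\cap V_{\bar I})$, so ``$\dim((C^\perp)^I)=n-k$'' is equivalent to $C^\perp\cap V_{\bar I}=0$; as $I$ runs over all subsets with $\#I\ge n-k+r-1$, the complement $\bar I$ runs over exactly the subsets with $\#\bar I\le k-r+1$, so item~4 is again $(\star)$. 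This closes the loop $1\Leftrightarrow 2\Leftrightarrow(\star)\Leftrightarrow 3$ and $(\star)\Leftrightarrow 4$.

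The whole argument is essentially bookkeeping and I do not expect a genuinely hard step. The two places that need a little care are: in the $1\Leftrightarrow 2$ step, the preliminary remark that $r$-MDS propagates to $i$-MDS for all $i\ge r$, which is what lets one read the equivalence straight off Wei's partition; and in item~4, keeping track of the two dualizations at once---the code $C$ is replaced by $C^\perp$ (so the dimension $k$ becomes $n-k$) and, simultaneously, the index set $I$ is replaced by its complement $\bar I$ (so the size bound $\#I\ge n-k+r-1$ becomes $\#\bar I\le k-r+1$).
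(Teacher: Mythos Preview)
Your proposal is correct and follows essentially the same route as the paper: the equivalence $1\Leftrightarrow 2$ via Wei's duality (with the preliminary ``$r$-MDS propagates upward'' remark), and the equivalences $2\Leftrightarrow 3\Leftrightarrow 4$ via Lemma~\ref{forney}, noting that your statement $(\star)$ is exactly the condition $(C^\perp)_I=0$ that the paper uses in its proof of the analogous Proposition~\ref{r MRD}. The only difference is presentational: you name $(\star)$ explicitly as a hub, whereas the paper establishes $2\Leftrightarrow 3$ and $3\Leftrightarrow 4$ as two separate steps---but the content is the same.
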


\begin{proposition}  \label{r MRD}
The following conditions are equivalent for a linear code $ C \subset \mathbb{F}_{q^m}^n $ of dimension $ k $, and every $ 1 \leq r \leq k $:
\begin{enumerate}
\item
The code $ C $ is $ r $-MRD.
\item
$ d_{R,1}(C^\perp) \geq k - r + 2 $.
\item
For all $ L \subset \mathbb{F}_q^n $ such that $ \dim(L) \leq k - r + 1 $, we have that $ \dim (C^L) = \dim(L) $.
\item
For all $ L \subset \mathbb{F}_q^n $ such that $ \dim(L) \geq n - k + r - 1 $, we have that $ \dim ((C^\perp)^L) = n-k $.
\end{enumerate}
\end{proposition}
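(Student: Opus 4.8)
The plan is to mirror the structure of the Hamming case (Proposition \ref{r MDS}), using the dictionary ``coordinate sets $ I $'' $ \leftrightarrow $ ``$ \mathbb{F}_q $-subspaces $ L \subset \mathbb{F}_q^n $'' and ``cardinality $ \# I $'' $ \leftrightarrow $ ``dimension $ \dim(L) $''. First I would establish the equivalence of items 1 and 2. This is stated in the excerpt to follow directly from Wei-type duality; concretely, by Theorem \ref{wei duality} the set $ \{d_{R,1}(C^\perp), \ldots, d_{R,n-k}(C^\perp)\} $ is the complement in $ \{1,\ldots,n\} $ of $ \{n+1-d_{R,1}(C), \ldots, n+1-d_{R,k}(C)\} $, so $ d_{R,k}(C) = n $, $ d_{R,k-1}(C) = n-1 $, \ldots, $ d_{R,r}(C) = n-k+r $ (i.e.\ $ C $ is $ r $-MRD) holds precisely when the top $ k-r+1 $ values $ n, n-1, \ldots, n-k+r $ are all achieved by $ C $, which by complementation is equivalent to $ 1, 2, \ldots, k-r+1 $ all being \emph{absent} from $ \{n+1-d_s^\perp\} $, i.e.\ to $ n+1-d_s^\perp > k-r+1 $ for every $ s \geq 1 $; taking $ s=1 $ (the smallest, hence the binding case by monotonicity) gives $ d_{R,1}(C^\perp) \geq k-r+2 $, and conversely this single inequality forces all of them.

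Next I would prove $ 3 \Leftrightarrow 4 $ and $ 2 \Leftrightarrow 3 $ (equivalently $ 2 \Leftrightarrow 4 $). For $ 3 \Leftrightarrow 4 $, apply Forney's Lemma (Lemma \ref{forney}) to $ C^\perp $: for any $ L $, $ \dim((C^\perp)^L) = \dim(L) - \dim(C_L) $, and also $ \dim((C^\perp)^L) = (n-k) - \dim((C^\perp)_{L^\perp}) $. Since $ L \mapsto L^\perp $ is an inclusion-reversing bijection on subspaces of $ \mathbb{F}_q^n $ with $ \dim(L^\perp) = n - \dim(L) $, the condition ``$ \dim(L) \leq k-r+1 $'' on $ L $ corresponds to ``$ \dim(L^\perp) \geq n-k+r-1 $'', and the identity $ \dim((C^\perp)^{L^\perp}) = (n-k) - \dim(C_L) $ shows that $ \dim(C^L) = \dim(L) $ (i.e.\ $ \dim(C_{L^\perp}) = k - \dim(L) $, by the second form of Lemma \ref{forney} applied to $ C $) is equivalent to $ \dim((C^\perp)^{L^\perp}) = n-k $. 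Running the substitution $ L \leftrightarrow L^\perp $ then swaps items 3 and 4. For $ 2 \Rightarrow 3 $: if $ \dim(L) \leq k-r+1 $ and $ \dim(C^L) < \dim(L) $, then $ \ker(\pi_{L,L'}) \cap C = C_{L^\perp} $ is nonzero, so there is a nonzero $ \mathbf{c} \in C $ with $ G(\mathbf{c}) \subset L^\perp $ (here I would use that $ C_{L^\perp} = C \cap (L^\perp \otimes \mathbb{F}_{q^m}) $ consists exactly of codewords whose rank support lies in $ L^\perp $), hence $ \mathrm{wt_R}(\mathbf{c}) \leq \dim(L^\perp) = n - \dim(L) \leq n-k+r-1 $. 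Taking $ V = G(\mathbf{c})^* $ gives a Galois-closed space witnessing, via definition (\ref{def1}) applied to $ C^\perp $... actually here I would instead argue on $ C $ directly: such a $ \mathbf{c} $ gives $ d_{R,1}(C) \leq n-k+r-1 $; dualizing through Theorem \ref{wei duality} contradicts $ d_{R,1}(C^\perp) \geq k-r+2 $. For $ 3 \Rightarrow 2 $: if $ d_{R,1}(C^\perp) \leq k-r+1 $, pick $ \mathbf{c}^\perp \in C^\perp $ nonzero with $ \mathrm{wt_R}(\mathbf{c}^\perp) \leq k-r+1 $, let $ L = G(\mathbf{c}^\perp) $ so $ \dim(L) \leq k-r+1 $, and observe $ \mathbf{c}^\perp \in (C^\perp)_L $, so $ \dim((C^\perp)_L) \geq 1 $, whence by Lemma \ref{forney} $ \dim(C^{L^\perp}) = \dim(L^\perp) - \dim((C^\perp)_L) < \dim(L^\perp) $ — wait, the index sets need care; I would phrase item 3 on $ C $ via $ \dim(C_{L'}) $ for small $ \dim(L') $ and translate cleanly using Lemma \ref{forney}.

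The main obstacle I anticipate is \emph{bookkeeping the duality between puncturing and shortening together with the orthogonal complement} $ L \mapsto L^\perp $: there are two independent dualities in play (code-level $ \perp $ on $ \mathbb{F}_{q^m}^n $, and space-level $ \perp $ on $ \mathbb{F}_q^n $), and Lemma \ref{forney} intertwines them, so it is easy to get the direction of an inequality or the correct dimension threshold ($ k-r+1 $ versus $ n-k+r-1 $) wrong. The cleanest route is probably to fix once and for all the identity $ \dim(C_{L}) = k - \dim(C^{L^\perp}) $ from Lemma \ref{forney}, observe that $ C_L $ is precisely $ \{\mathbf{c} \in C : G(\mathbf{c}) \subseteq L\} $, relate ``$ C $ has a codeword of small rank weight'' to ``$ C_L \neq 0 $ for some small-dimensional $ L $'', and then feed everything through Theorem \ref{wei duality} exactly as the Hamming argument feeds through Wei duality in Proposition \ref{r MDS}. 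Everything else is the verbatim rank analogue of \cite[Theorems 1.4.15, 2.4.3]{pless}.
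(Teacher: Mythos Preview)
Your overall strategy matches the paper's: $1 \Leftrightarrow 2$ via Theorem \ref{wei duality}, $3 \Leftrightarrow 4$ via Lemma \ref{forney}, and $2 \Leftrightarrow 3$ by relating $(C^\perp)_L$ to the minimum rank distance of $C^\perp$. However, your execution of $2 \Leftrightarrow 3$ contains a concrete bookkeeping error, and the detour through $d_{R,1}(C)$ plus a second pass through Theorem \ref{wei duality} is both unnecessary and, as written, does not close.

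Specifically: in your $2 \Rightarrow 3$ argument you pass from $\dim(C^L) < \dim(L)$ to $C_{L^\perp} \neq 0$ via the \emph{second} identity in Lemma \ref{forney}, obtaining a nonzero $\mathbf{c} \in C$ with $\mathrm{wt_R}(\mathbf{c}) \leq \dim(L^\perp) = n - \dim(L)$. But the hypothesis $\dim(L) \leq k-r+1$ gives $n - \dim(L) \geq n-k+r-1$, a \emph{lower} bound, so you cannot conclude $d_{R,1}(C) \leq n-k+r-1$ as you claim. The paper instead uses the \emph{first} identity $\dim(C^L) = \dim(L) - \dim((C^\perp)_L)$: then $\dim(C^L) < \dim(L)$ forces $(C^\perp)_L \neq 0$ directly, giving a nonzero element of $C^\perp$ of rank weight at most $\dim(L) \leq k-r+1$, contradicting item 2 immediately. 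The same identity handles $2 \Rightarrow 3$ and $3 \Rightarrow 2$ symmetrically; your $3 \Rightarrow 2$ sketch becomes correct once you replace $C^{L^\perp}$ by $C^L$ (the line $\dim(C^{L^\perp}) = \dim(L^\perp) - \dim((C^\perp)_L)$ is not an instance of Lemma \ref{forney}). So the fix is exactly the one you gesture at in your final paragraph: choose the form of Forney that ties $C^L$ to $(C^\perp)_L$, not to $C_{L^\perp}$, and the argument goes through in one line without invoking duality a second time.
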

\begin{proof}
The equivalence between the first two conditions follows from the duality Theorem \ref{wei duality}, as proven in \cite{jerome}, and the equivalence between the last two conditions follows from Lemma \ref{forney}.

Now, we prove that condition 3 implies condition 2. Take $ \mathbf{c} \in C^\perp \setminus 0 $ and assume that $ {\rm wt_R}(\mathbf{c}) = \dim(L) \leq k - r +1 $, where $ L = (\langle \mathbf{c} \rangle^*) \vert_{\mathbb{F}_q} $ (recall $ {\rm wt_R}(\mathbf{c}) = \dim(\langle \mathbf{c} \rangle^*) $ from Lemma \ref{charact stic}). Then by Lemma \ref{forney},
$$ \dim(L) = \dim (C^L) = \dim(L) - \dim ((C^\perp)_L), $$
and thus $ (C^\perp)_L = 0 $, but this implies that $ \mathbf{c} = \mathbf{0} $, which is a contradiction. Hence $ {\rm wt_R} (\mathbf{c}) \geq k - r +2 $.

Finally, we prove that condition 2 implies condition 3. Let $ L \subset \mathbb{F}_q^n $ be such that $ \dim(L) \leq k - r + 1 $. Then, by the definition of minimum rank distance (recall (\ref{def1})), we have that $ \dim ((C^\perp)_L) = 0 $, and thus by Lemma \ref{forney},
$$ \dim(C^L) = \dim(L) - \dim ((C^\perp)_L) = \dim(L). $$
\end{proof}

After showing how to compute generator matrices for punctured codes, it can be easily proven that the equivalence between items 2 and 3 generalizes \cite[Theorem 1]{gabidulin}.

\begin{corollary}
The smallest integer $ r $ such that $ C $ is $ r $-MDS is $ r = k - d_{H,1}(C^\perp) + 2 $, and similarly for rank weights.
\end{corollary}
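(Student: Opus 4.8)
The plan is to read the corollary off directly from Proposition \ref{r MDS} (respectively Proposition \ref{r MRD}) by a single elementary rearrangement. First I would recall the equivalence of items 1 and 2 there: for each $ 1 \leq r \leq k $, the code $ C $ is $ r $-MDS if, and only if, $ d_{H,1}(C^\perp) \geq k - r + 2 $. This inequality is equivalent to $ r \geq k - d_{H,1}(C^\perp) + 2 $, so the least integer $ r $ for which $ C $ is $ r $-MDS is exactly $ r = k - d_{H,1}(C^\perp) + 2 $. (I would also note that by monotonicity together with the classical Singleton bound, once $ C $ is $ r $-MDS it is $ r^\prime $-MDS for every $ r \leq r^\prime \leq k $, so ``the smallest such $ r $'' is the natural quantity and the answer is unambiguous.)

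Then I would check that this value always sits in a sensible range. Since $ C^\perp \subset \mathbb{F}_{q^m}^n $ has dimension $ n - k $, the classical Singleton bound gives $ d_{H,1}(C^\perp) \leq k + 1 $, hence $ k - d_{H,1}(C^\perp) + 2 \geq 1 $ always; and $ k - d_{H,1}(C^\perp) + 2 \leq k $ precisely when $ d_{H,1}(C^\perp) \geq 2 $, i.e. precisely when $ C $ is not Hamming degenerate, which is exactly the case in which some $ r \in \{1,\dots,k\} $ makes $ C $ an $ r $-MDS code. The ``similarly for rank weights'' part is then word for word the same argument, replacing Proposition \ref{r MDS} by Proposition \ref{r MRD}, $ d_{H,1} $ by $ d_{R,1} $, ``$ r $-MDS'' by ``$ r $-MRD'', ``Hamming degenerate'' by ``rank degenerate'', and using $ d_{R,1}(C^\perp) \leq d_{H,1}(C^\perp) \leq k+1 $ (from inequality (\ref{ineq}) and the classical Singleton bound, or the alternative Singleton bound) to keep the formula in range.

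I expect no real obstacle here: the entire content of the corollary is already packaged in Propositions \ref{r MDS} and \ref{r MRD}, and what remains is the trivial equivalence $ d_{H,1}(C^\perp) \geq k - r + 2 \Leftrightarrow r \geq k - d_{H,1}(C^\perp) + 2 $. The only point worth a sentence of care is the boundary case $ d_{H,1}(C^\perp) = 1 $ (respectively $ d_{R,1}(C^\perp) = 1 $), where no $ r \in \{1,\dots,k\} $ yields an $ r $-MDS (respectively $ r $-MRD) code and the formula returns the out-of-range value $ r = k+1 $, consistently with the fact that $ C $ then fails already to be $ k $-MDS (respectively $ k $-MRD).
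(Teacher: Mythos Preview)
Your proposal is correct and matches the paper's approach: the corollary is stated without proof immediately after Propositions \ref{r MDS} and \ref{r MRD}, so it is intended to follow exactly as you say, by rearranging the equivalence $1\Leftrightarrow 2$ into $r \geq k - d_{H,1}(C^\perp) + 2$ (respectively with $d_{R,1}$). Your remarks on the range and the degenerate boundary case are sound but go slightly beyond what the paper records; note also that the monotonicity argument you give is not strictly needed, since the set of admissible $r$ is already an upper interval directly from the equivalence.
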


\subsection{Information spaces}

Next, we define the notion of information space, which plays the same role as information sets in the Hamming case: any original codeword can be recovered from the punctured codeword if (and also only if in the linear case) we puncture on an information space. Therefore, information spaces completely describe the erasure correction capability of a code, and not only worst cases.

\begin{definition} \label{info spaces}
Given a linear code $ C \subset \mathbb{F}_{q^m}^n $, we say that a subspace $ L \subset \mathbb{F}_q^n $ is an information space for $ C $ if $ \dim(C^L) = \dim (C) $. Equivalently, if the restriction $ \pi_{L,L^\prime} : C \longrightarrow C^L $ is an $ \mathbb{F}_{q^m} $-linear vector space isomorphism.

For an (arbitrary) code $ C \subset \mathbb{F}_{q^m}^n $, we say that $ L $ is an information space for  $ C $ if $ \pi_{L,L^\prime} : C \longrightarrow C^L $ is bijective.

On the other hand, given a code pair $ C_2 \varsubsetneq C_1 \subset \mathbb{F}_{q^m}^n $, we say that $ L $ is an information space for $ C_1, C_2 $ if $ \dim(C_1^L/C_2^L) = \dim(C_1/C_2) $. In general, for an (arbitrary) coding scheme $ \mathcal{P}_\mathcal{S} = \{ C_{\mathbf{x}} \}_{\mathbf{x} \in \mathcal{S}} $, we say that $ L $ is an information space for $ \mathcal{P}_\mathcal{S} $ if $ \pi_{L,L^\prime}(C_{\mathbf{x}_1}) \cap \pi_{L,L^\prime}(C_{\mathbf{x}_2}) = \varnothing $, whenever $ \mathbf{x}_1 \neq \mathbf{x}_2 $.
\end{definition} 

Observe that a set $ I \subset \mathcal{J} $ is an information set for $ C $ if, and only if, $ L_I $ is an information space for $ C $. Note also that $ \pi_{L,L^\prime} $ is always surjective, so it is only necessary to be injective in order to be bijective.

On the other hand, Proposition \ref{r MRD}, item 4, shows threshold values on the dimension of a space to guarantee that it is an information space for a given code, in terms of its minimum rank distance, as in the Hamming case.

Now we characterize MRD codes using information spaces, in the same way as MDS codes are characterized using information sets. Note that the result is a particular case of Proposition \ref{r MRD}, taking $ r=1 $. After knowing how to compute generator matrices of punctured codes, it can be shown that this proposition is essentially \cite[Theorem 2]{gabidulin}.

\begin{proposition} \label{info spaces def}
A linear code $ C \subset \mathbb{F}_{q^m}^n $ is MRD if, and only if, every $ L \subset \mathbb{F}_q^n $, with $ \dim(L) = k = \dim(C) $, is an information space for $ C $.
\end{proposition}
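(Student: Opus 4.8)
The plan is to derive this as the special case $r=1$ of Proposition~\ref{r MRD}. Recall that $C$ being MRD means $C$ is $1$-MRD, i.e.\ $d_{R,1}(C)=n-k+1$. Applying Proposition~\ref{r MRD} with $r=1$, condition~3 reads: for all $L\subset\mathbb{F}_q^n$ with $\dim(L)\le k-1+1=k$, we have $\dim(C^L)=\dim(L)$. I would first observe that when $\dim(L)<k$ the equality $\dim(C^L)=\dim(L)$ holds automatically for \emph{any} linear code (puncturing can only decrease dimension, and $\dim(C^L)=k-\dim(C_{L^\perp})$ by Lemma~\ref{forney}, so $\dim(C^L)\le\min\{k,\dim(L)\}$; equality for small $L$ follows from the MRD hypothesis but is also forced once we reach the critical dimension). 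So the content of the proposition is precisely the case $\dim(L)=k$: $L$ is an information space for $C$ (meaning $\dim(C^L)=\dim(C)=k$ by Definition~\ref{info spaces}) exactly when $\dim(C^L)=\dim(L)=k$.

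Concretely, I would argue both directions as follows. For the forward direction, assume $C$ is MRD. By Proposition~\ref{r MRD} (equivalence of conditions 1 and 3, with $r=1$), every $L$ with $\dim(L)\le k$ satisfies $\dim(C^L)=\dim(L)$; in particular every $L$ with $\dim(L)=k$ has $\dim(C^L)=k=\dim(C)$, so $L$ is an information space by definition. For the converse, suppose every $L\subset\mathbb{F}_q^n$ with $\dim(L)=k$ is an information space, i.e.\ $\dim(C^L)=k$ whenever $\dim(L)=k$. I claim this forces condition~3 of Proposition~\ref{r MRD} with $r=1$: given any $L$ with $\dim(L)=j\le k$, extend $L$ to some $\widetilde L\supset L$ with $\dim(\widetilde L)=k$; then $C^L$ is a puncturing of $C^{\widetilde L}$ (projecting further onto the coordinates of $L'$), and since $\dim(C^{\widetilde L})=k=\dim(\widetilde L)$ the map $\pi$ restricted to $C$ is already injective on all of $\mathbb{F}_q^n$-coordinates carried by $\widetilde L$, hence $C_{L^\perp}=0$ because $L^\perp\supset\widetilde L^\perp$ and $C_{\widetilde L^\perp}=0$ gives $\dim(C_{\widetilde L^\perp})=0$; wait, one must be careful with the direction of inclusion here. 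The cleaner route is: $\dim(C^{\widetilde L})=k$ means, by Lemma~\ref{forney}, $\dim(C_{\widetilde L^\perp})=0$, and since $\widetilde L^\perp\subset L^\perp$ we get $C_{L^\perp}\subset C_{\widetilde L^\perp}=0$, so again by Lemma~\ref{forney}, $\dim(C^L)=k-\dim(C_{L^\perp})=k$. But also $\dim(C^L)\le\dim(L)=j\le k$, forcing $j=k$ unless—hmm, this shows $C^L$ would have dimension $k>j$, a contradiction unless there is no such $L$ with $j<k$; that is wrong.

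Let me recalibrate the converse argument: it should go through condition~2 of Proposition~\ref{r MRD}, namely $d_{R,1}(C^\perp)\ge k+1$. Suppose, for contradiction, $C$ is not MRD, so $d_{R,1}(C^\perp)\le k$. Pick $\mathbf{c}\in C^\perp\setminus 0$ with $w:={\rm wt_R}(\mathbf{c})\le k$, and let $L=(\langle\mathbf{c}\rangle^*)|_{\mathbb{F}_q}$, so $\dim(L)=w\le k$ by Lemma~\ref{charact stic}. Extend $L$ to $\widetilde L$ with $\dim(\widetilde L)=k$. By hypothesis $\widetilde L$ is an information space, so $\dim(C^{\widetilde L})=k$, hence $\dim(C_{\widetilde L^\perp})=0$ by Lemma~\ref{forney}. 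But $\mathbf{c}\in C^\perp$ has rank support contained in $L\otimes\mathbb{F}_{q^m}\subset\widetilde L\otimes\mathbb{F}_{q^m}$; I need the statement for $C^\perp$, not $C$, so instead note that $(C^\perp)_{\widetilde L}$ contains $\mathbf{c}$: since $G(\mathbf{c})\subset\widetilde L$ we have $\langle\mathbf{c}\rangle^*\subset\widetilde L\otimes\mathbb{F}_{q^m}$, so $\mathbf{c}\in(C^\perp)_{\widetilde L}\ne 0$. Then Lemma~\ref{forney} applied to $C$ with the subspace $\widetilde L$ gives $\dim(C^{\widetilde L})=\dim(\widetilde L)-\dim((C^\perp)_{\widetilde L})\le k-1<k$, contradicting that $\widetilde L$ is an information space. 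Hence $d_{R,1}(C^\perp)\ge k+1$, so $C$ is MRD. The main obstacle, as this back-and-forth shows, is getting the direction of inclusions and the role of $C$ versus $C^\perp$ in Forney's Lemma~\ref{forney} exactly right; once that bookkeeping is pinned down the proof is a short two-line invocation of Proposition~\ref{r MRD} in each direction. In the writeup I would simply cite ``this is the case $r=1$ of Proposition~\ref{r MRD}, observing that $\dim(L)=k$ is the only nontrivial case and that $\dim(C^L)=\dim(C)$ is, by Definition~\ref{info spaces}, the statement that $L$ is an information space,'' leaving the one-line Forney computation above as the glue.
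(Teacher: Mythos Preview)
Your approach is correct and matches the paper's: the paper states without further detail that ``the result is a particular case of Proposition~\ref{r MRD}, taking $r=1$,'' and you carry this out explicitly. You are right that a small bridging step is needed, since condition~3 of Proposition~\ref{r MRD} with $r=1$ reads ``for all $L$ with $\dim(L)\le k$'' while the present statement concerns only $\dim(L)=k$; your second argument for the converse (extend $G(\mathbf{c})$ to a $k$-dimensional $\widetilde L$, note $\mathbf{c}\in(C^\perp)_{\widetilde L}$, and apply Lemma~\ref{forney} to contradict $\dim(C^{\widetilde L})=k$) is exactly the right way to close this gap and mirrors the ``3 $\Rightarrow$ 2'' step in the proof of Proposition~\ref{r MRD}.

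Your first attempt at the converse does contain a genuine slip: from $L\subset\widetilde L$ one gets $\widetilde L^\perp\subset L^\perp$, hence $C_{\widetilde L^\perp}\subset C_{L^\perp}$, so $C_{\widetilde L^\perp}=0$ does \emph{not} force $C_{L^\perp}=0$. You caught this yourself. In the final writeup simply drop that paragraph and go straight to the argument via condition~2; the result is then a clean two-line specialization, as you say.
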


The following two propositions essentially describe erasure correction on networks. The second one also describes the correction capability of punctured codes. They are analogous to \cite[Theorem 1.5.7 (ii)]{pless} and \cite[Theorem 1.5.1]{pless}, respectively. The first one also extends \cite[Theorem 1]{gabidulin} to arbitrary codes.

\begin{proposition} \label{info spaces distance 1}
Given an (arbitrary) code $ C \subset \mathbb{F}_{q^m}^n $, if $ \rho < d_R(C) $, then every subspace $ L \subset \mathbb{F}_q^n $ with $ \dim (L) \geq n - \rho $ is an information space for $ C $. If $ \rho \geq d_R(C) $, there exists a subspace $ L \subset \mathbb{F}_q^n $ with $ \dim (L) = n - \rho $ which is not an information space for $ C $.
\end{proposition}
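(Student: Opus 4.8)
The plan is to reduce both halves of the statement to one elementary observation about the kernel of the projection $ \pi_{L,L^\prime} $. For any complement $ L^\prime $ with $ \mathbb{F}_q^n = L^\prime \oplus L^\perp $, the kernel of $ \pi_{L,L^\prime} $ (viewed on all of $ \mathbb{F}_{q^m}^n $) is exactly $ V^\perp = L^\perp \otimes \mathbb{F}_{q^m} $, and a vector $ \mathbf{v} \in \mathbb{F}_{q^m}^n $ lies in $ V^\perp $ if and only if $ G(\mathbf{v}) \subseteq L^\perp $; in particular every $ \mathbf{v} \in V^\perp $ satisfies $ {\rm wt_R}(\mathbf{v}) = \dim G(\mathbf{v}) \leq \dim(L^\perp) = n - \dim(L) $. (Equivalently, $ V^\perp $ is Galois closed, so $ {\rm wt_R}(V^\perp) = \dim(V^\perp) = n - \dim(L) $ by Lemma \ref{charact stic}, and one concludes by monotonicity of $ {\rm wt_R} $ under inclusion of subspaces.) Recalling from Definition \ref{info spaces} that $ L $ is an information space for $ C $ precisely when $ \pi_{L,L^\prime} $ is injective on $ C $, i.e. precisely when there is \emph{no} pair of distinct $ \mathbf{c}, \mathbf{c}^\prime \in C $ with $ \mathbf{c} - \mathbf{c}^\prime \in V^\perp $, this observation is essentially all that is needed.

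For the first assertion I would argue by contradiction. Suppose $ \rho < d_R(C) $ and $ \dim(L) \geq n - \rho $, but $ L $ is not an information space. Then there are distinct $ \mathbf{c}, \mathbf{c}^\prime \in C $ with $ \mathbf{0} \neq \mathbf{c} - \mathbf{c}^\prime \in V^\perp $, and the observation above gives $ {\rm wt_R}(\mathbf{c} - \mathbf{c}^\prime) \leq n - \dim(L) \leq \rho < d_R(C) $, contradicting the definition of $ d_R(C) $ as the least rank weight of a difference of two distinct codewords. Hence every such $ L $ is an information space.

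For the second assertion, write $ d = d_R(C) $ and fix distinct $ \mathbf{c}, \mathbf{c}^\prime \in C $ with $ {\rm wt_R}(\mathbf{c} - \mathbf{c}^\prime) = d $; put $ W = G(\mathbf{c} - \mathbf{c}^\prime) \subseteq \mathbb{F}_q^n $, so $ \dim(W) = d $. Since $ \rho \geq d $ we have $ \dim(W^\perp) = n - d \geq n - \rho $, so we may choose an $ \mathbb{F}_q $-subspace $ L \subseteq W^\perp $ with $ \dim(L) = n - \rho $. Then $ W \subseteq L^\perp $, hence $ G(\mathbf{c} - \mathbf{c}^\prime) \subseteq L^\perp $, i.e. $ \mathbf{c} - \mathbf{c}^\prime \in V^\perp = \ker \pi_{L,L^\prime} $; thus $ \pi_{L,L^\prime}(\mathbf{c}) = \pi_{L,L^\prime}(\mathbf{c}^\prime) $ with $ \mathbf{c} \neq \mathbf{c}^\prime $, so $ L $ is not an information space for $ C $. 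The only step requiring any care is the threefold identification ``$ \mathbf{v} \in V^\perp $'' $ \Leftrightarrow $ ``$ G(\mathbf{v}) \subseteq L^\perp $'' $ \Rightarrow $ ``$ {\rm wt_R}(\mathbf{v}) \leq n - \dim(L) $''; once that is in hand, both directions follow immediately from the definition of the minimum rank distance. (The value $ \rho $ here is at most $ n $, since $ n - \rho $ must be the dimension of a subspace; the borderline case $ \rho = n $ is harmless, as $ d_R(C) $ being defined forces $ C $ to have at least two codewords, and then $ L = \{ \mathbf{0} \} $ is plainly not an information space.)
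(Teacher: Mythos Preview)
Your proof is correct and follows essentially the same approach as the paper's: both parts hinge on the fact that $\ker \pi_{L,L^\prime} = V^\perp = L^\perp \otimes \mathbb{F}_{q^m}$, so a difference of codewords is collapsed if and only if its rank support lies in $L^\perp$, which forces its rank weight to be at most $n - \dim(L)$. The only cosmetic difference is in the second half: the paper builds $L^\perp$ by extending an $\mathbb{F}_q$-basis of $\langle \mathbf{c} \rangle^* \vert_{\mathbb{F}_q}$ to $\rho$ vectors, whereas you dually pick $L$ inside $G(\mathbf{c}-\mathbf{c}^\prime)^\perp$ of the right dimension; these are equivalent constructions.
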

\begin{proof}
First we prove in the first case that $ \pi_{L,L^\prime} : C \longrightarrow C^L $ is injective. Take $ \mathbf{c}_1, \mathbf{c}_2 \in C $ such that $ \pi_{L,L^\prime} (\mathbf{c}) = \mathbf{0} $, where $ \mathbf{c} = \mathbf{c}_1 - \mathbf{c}_2 $. Then, $ \mathbf{c} \in V^\perp $, $ V = L \otimes \mathbb{F}_{q^m} $, and therefore, $ {\rm wt_R} (\mathbf{c}) \leq \dim (V^\perp) \leq \rho $, which is absurd.

For the second statement, take $ \mathbf{c}_1, \mathbf{c}_2 $ and $ \mathbf{c} = \mathbf{c}_1 - \mathbf{c}_2 $ such that $ {\rm wt_R}(\mathbf{c}) = d_R (C) $, write $ D = \langle \mathbf{c} \rangle ^* = \langle \mathbf{v}_1, \mathbf{v}_2, \ldots, \mathbf{v}_s \rangle $, with $ \mathbf{v}_i \in \mathbb{F}_q^n $, and extend this to a basis $ B = \{ \mathbf{v}_i \}_{i=1}^n $ of $ \mathbb{F}_q^n $. Consider $ L^\perp = \langle \mathbf{v}_1, \mathbf{v}_2, \ldots, \mathbf{v}_\rho \rangle_{\mathbb{F}_q} $, then $ \dim (L) = n - \rho $ and $ \pi_{L,L^\prime}(\mathbf{c}_1) = \pi_{L,L^\prime}(\mathbf{c}_2) $.
\end{proof}

\begin{proposition} \label{info spaces distance 2}
Given an (arbitrary) code $ C \subset \mathbb{F}_{q^m}^n $ with $ \rho < d_R(C) $, every subspace $ L \subset \mathbb{F}_q^n $ with $ \dim (L) \geq n - \rho $ satisfies that $ d_R(C^L) \geq d_R(C) - \rho $. Moreover, there exists a subspace $ L \subset \mathbb{F}_q^n $ with $ \dim (L) = n - \rho $ such that $ d_R (C^L) = d_R (C) - \rho $.
\end{proposition}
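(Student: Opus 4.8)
The plan is to mimic the classical argument for punctured codes (the Hamming analogue of \cite[Theorem 1.5.1]{pless}) but phrased entirely in the language of rank-punctured codes and the tools developed in this paper. For the first, inequality part: fix a subspace $L \subset \mathbb{F}_q^n$ with $\dim(L) \geq n - \rho$, set $V = L \otimes \mathbb{F}_{q^m}$, and let $\pi = \pi_{L,L^\prime}$ be the associated projection. I want to show $d_R(C^L) \geq d_R(C) - \rho$. Take any two distinct codewords $\mathbf{c}_1', \mathbf{c}_2' \in C^L$ and lift them to $\mathbf{c}_1, \mathbf{c}_2 \in C$ with $\pi(\mathbf{c}_i) = \mathbf{c}_i'$; note that $\mathbf{c}_1 \neq \mathbf{c}_2$ since $\rho < d_R(C)$ forces $\pi$ to be injective on $C$ (this is exactly Proposition \ref{info spaces distance 1}, which also tells us $L$ is an information space). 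Write $\mathbf{c} = \mathbf{c}_1 - \mathbf{c}_2$ and $\mathbf{c}' = \mathbf{c}_1' - \mathbf{c}_2' = \pi(\mathbf{c})$. Then $\mathbf{c} - \mathbf{c}' \in V^\perp$, and I want a rank-weight version of the triangle-type estimate $\mathrm{wt_R}(\mathbf{c}) \leq \mathrm{wt_R}(\mathbf{c}') + \mathrm{wt_R}(\mathbf{c} - \mathbf{c}')$. Since $\mathrm{wt_R}(\mathbf{c} - \mathbf{c}') \leq \dim(V^\perp) = n - \dim(L) \leq \rho$ and $d_R(C) \leq \mathrm{wt_R}(\mathbf{c})$, this gives $d_R(C) \leq \mathrm{wt_R}(\mathbf{c}') + \rho$; but here one must be slightly careful, because $\mathrm{wt_R}(\mathbf{c}')$ is the rank weight of $\mathbf{c}'$ as an element of $V^\prime$, and I should argue that this equals, or is at least bounded below by, the rank weight of $\mathbf{c}'$ viewed inside $\mathbb{F}_{q^m}^n$ via the rank-metric equivalence $\psi_{B^\prime}$ of \eqref{psi basis}. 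Taking the minimum over all such pairs yields $d_R(C) - \rho \leq d_R(C^L)$.

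The subadditivity of rank weight is routine: using Remark \ref{vector to matrix}, $\mathrm{wt_R}(\mathbf{a} + \mathbf{b}) = \mathrm{Rk}(\mu(\mathbf{a}) + \mu(\mathbf{b})) \leq \mathrm{Rk}(\mu(\mathbf{a})) + \mathrm{Rk}(\mu(\mathbf{b})) = \mathrm{wt_R}(\mathbf{a}) + \mathrm{wt_R}(\mathbf{b})$. The one genuinely delicate point is reconciling ``rank weight in $V^\prime$'' with ``rank weight in $\mathbb{F}_{q^m}^n$'': since $\psi_{B^\prime}: V^\prime \to \mathbb{F}_{q^m}^{\dim L}$ is a rank-metric equivalence (Theorem \ref{equiv rank}) and $\pi_{L,L^\prime}$ restricted to $V^\prime$ is the identity, the rank weight of $\mathbf{c}'$ as a vector in $\mathbb{F}_{q^m}^n$ coincides with its rank weight in the punctured space, so the bound $d_R(C) \leq \mathrm{wt_R}(\mathbf{c}') + \rho$ is exactly what is needed. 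I expect this bookkeeping about which ambient space the rank weight is measured in to be the main obstacle to a clean write-up — the inequality itself is elementary.

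For the second (sharpness) part: pick $\mathbf{c}_1, \mathbf{c}_2 \in C$ realizing the minimum, so $\mathbf{c} = \mathbf{c}_1 - \mathbf{c}_2$ has $\mathrm{wt_R}(\mathbf{c}) = d_R(C) = s$. Write the Galois closure $\langle \mathbf{c} \rangle^* = \langle \mathbf{v}_1, \dots, \mathbf{v}_s \rangle$ with $\mathbf{v}_i \in \mathbb{F}_q^n$ (possible by Proposition \ref{galois}), and extend to a basis $\{\mathbf{v}_i\}_{i=1}^n$ of $\mathbb{F}_q^n$. Set $L^\perp = \langle \mathbf{v}_1, \dots, \mathbf{v}_\rho \rangle_{\mathbb{F}_q}$, so $\dim(L) = n - \rho$, and choose $L^\prime$ spanned by $\mathbf{v}_{\rho+1}, \dots, \mathbf{v}_n$. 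The projection $\pi_{L,L^\prime}$ then annihilates $\mathbf{v}_1, \dots, \mathbf{v}_\rho$ and fixes the rest, so $\pi_{L,L^\prime}(\mathbf{c}) = \sum_{i=\rho+1}^{s} \lambda_i \mathbf{v}_i$ for appropriate scalars (writing $\mathbf{c} = \sum_{i=1}^s \lambda_i \mathbf{v}_i$, $\lambda_i \in \mathbb{F}_{q^m}$), which is a vector of rank weight at most $s - \rho$; combined with the first part applied in reverse (i.e. $d_R(C^L) \geq s - \rho$ and the explicit codeword $\pi(\mathbf{c}_1) - \pi(\mathbf{c}_2) = \pi(\mathbf{c})$ of weight $\leq s - \rho$), this forces $d_R(C^L) = s - \rho = d_R(C) - \rho$. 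One should double-check that $\pi(\mathbf{c}_1) \neq \pi(\mathbf{c}_2)$, which holds because $\rho < d_R(C)$ makes $L$ an information space by Proposition \ref{info spaces distance 1}, so the witnessing pair in $C^L$ is genuinely a pair of distinct codewords.
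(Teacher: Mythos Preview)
Your proposal is correct and follows essentially the same route as the paper's proof. The paper argues the first inequality via $\dim(\langle \pi_{L,L'}(\mathbf{c})\rangle^*) \geq \dim(\langle \mathbf{c}\rangle^*) - \rho$, which is exactly your triangle-inequality estimate $\mathrm{wt_R}(\mathbf{c}) \leq \mathrm{wt_R}(\mathbf{c}') + \mathrm{wt_R}(\mathbf{c}-\mathbf{c}')$ rephrased through Lemma~\ref{charact stic}; and for sharpness the paper makes the identical choice of $L^\perp = \langle \mathbf{v}_1,\dots,\mathbf{v}_\rho\rangle$ and $L' = \langle \mathbf{v}_{\rho+1},\dots,\mathbf{v}_n\rangle$ inside a basis extending a basis of $\langle \mathbf{c}\rangle^*$, obtaining $\mathrm{wt_R}(\pi_{L,L'}(\mathbf{c})) = \mathrm{wt_R}(\mathbf{c}) - \rho$ directly from $\ker(\pi_{L,L'})\cap \langle \mathbf{c}\rangle^* = L^\perp \otimes \mathbb{F}_{q^m}$ rather than combining an upper bound with the first part as you do. Your extra care about the ambient space in which rank weight is measured is well placed but, as you note, is settled immediately by the rank-metric equivalence $\psi_{B'}$; the paper simply leaves this implicit.
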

\begin{proof}
With the same notation as in the previous proof, we have that $ {\rm wt_R}(\pi_{L,L^\prime} (\mathbf{c})) = \dim ( \langle \pi_{L,L^\prime} (\mathbf{c}) \rangle ^*) \geq \dim (\langle \mathbf{c} \rangle ^*) - \rho $, and the first statement follows.

Finally, take $ \mathbf{c}_1, \mathbf{c}_2 $ such that $ {\rm wt_R}(\mathbf{c}) = d_R (C) $, and write $ D = \langle \mathbf{c} \rangle ^* = \langle \mathbf{v}_1, \mathbf{v}_2, \ldots, \mathbf{v}_s \rangle $, with $ \mathbf{v}_i \in \mathbb{F}_q^n $, and extend this to a basis $ B = \{ \mathbf{v}_i \}_{i=1}^n $ of $ \mathbb{F}_q^n $. Consider $ L^\perp = \langle \mathbf{v}_1, \mathbf{v}_2, \ldots, \mathbf{v}_\rho \rangle_{\mathbb{F}_q} $ and $ L^\prime = \langle \mathbf{v}_{\rho +1}, \mathbf{v}_{\rho + 2}, \ldots, \mathbf{v}_n \rangle_{\mathbb{F}_q} $, then $ {\rm ker}(\pi_{L,L^\prime}) \cap D = L^\perp \otimes \mathbb{F}_{q^m} $, and therefore $ {\rm wt_R}(\pi_{L,L^\prime}(\mathbf{c})) = {\rm wt_R}(\mathbf{c}) - \rho $, and the last statement follows.
\end{proof}

We can extend this to (arbitrary) coding schemes, just by substituting the code $ C $ with a coding scheme $ \mathcal{P}_\mathcal{S} = \{ C_{\mathbf{x}} \}_{\mathbf{x} \in \mathcal{S}} $. The proof is the same.

\subsection{Computing rank-punctured codes}

We conclude the section showing how to compute punctured codes. In the Hamming case, this is obvious, since we only have to project on some of the coordinates. In the rank case, we need to solve some systems of linear equations, which is still an efficient computation.

\begin{proposition}
Given a subspace $ L \subset \mathbb{F}_q^n $ and one of its generator matrices $ A $ ($ L = {\rm row}(A) $ and $ A $ has full rank \cite{pless}), we have that a subspace $ L^\prime \subset \mathbb{F}_q^n $ satisfies $ \mathbb{F}_q^n = L^\prime \oplus L^\perp $ if, and only if, it has a generator matrix $ A^\prime $ such that $ A^\prime A^T = I $.
\end{proposition}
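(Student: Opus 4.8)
The plan is to unwind both directions via the standard identification of $ L^\perp $ with the kernel of the map $ \mathbf{y} \mapsto \mathbf{y} A^T $. Write $ s = \dim(L) $, so $ A \in \mathbb{F}_q^{s \times n} $ has rank $ s $, $ L = {\rm row}(A) $, and (with the row-vector convention) $ L^\perp = \{ \mathbf{y} \in \mathbb{F}_q^n \mid \mathbf{y} A^T = \mathbf{0} \} $, a subspace of dimension $ n - s $. Since any direct complement of $ L^\perp $ must have dimension $ s $, in both directions we will have $ \dim(L') = s $, so a generator matrix $ A' $ of $ L' $ lies in $ \mathbb{F}_q^{s \times n} $ and $ A' A^T \in \mathbb{F}_q^{s \times s} $; comparing it with the $ s \times s $ identity $ I $ then makes sense.

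First I would prove the ``if'' direction. Suppose $ L' $ has a generator matrix $ A' $ with $ A' A^T = I $. Since $ I $ has rank $ s $, the matrix $ A' $ has rank $ s $, so it is genuinely a full-rank generator matrix and $ \dim(L') = s $. To see $ L' \cap L^\perp = 0 $, take any $ \mathbf{x} = \mathbf{v} A' \in L' $ with $ \mathbf{v} \in \mathbb{F}_q^s $ that also lies in $ L^\perp $; then $ \mathbf{0} = \mathbf{x} A^T = \mathbf{v} A' A^T = \mathbf{v} $, whence $ \mathbf{x} = \mathbf{0} $. Combining $ L' \cap L^\perp = 0 $ with $ \dim(L') + \dim(L^\perp) = s + (n-s) = n $ gives $ \mathbb{F}_q^n = L' \oplus L^\perp $.

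For the ``only if'' direction, suppose $ \mathbb{F}_q^n = L' \oplus L^\perp $, so $ \dim(L') = s $. Consider the $ \mathbb{F}_q $-linear map $ \theta : \mathbb{F}_q^n \longrightarrow \mathbb{F}_q^s $ given by $ \theta(\mathbf{y}) = \mathbf{y} A^T $, whose kernel is exactly $ L^\perp $. Its restriction to $ L' $ is injective because $ L' \cap L^\perp = 0 $, and since $ \dim(L') = s = \dim(\mathbb{F}_q^s) $, this restriction is an isomorphism $ L' \longrightarrow \mathbb{F}_q^s $. Pulling back the canonical basis $ \mathbf{e}_1, \ldots, \mathbf{e}_s $ of $ \mathbb{F}_q^s $ yields a basis $ \mathbf{v}'_1, \ldots, \mathbf{v}'_s $ of $ L' $ with $ \mathbf{v}'_i A^T = \mathbf{e}_i $; stacking these vectors as the rows of a matrix $ A' $ produces a generator matrix of $ L' $ with $ A' A^T = I $.

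The only delicate points — and hence the closest thing to an obstacle — are bookkeeping ones: pinning down the row-vector convention so that $ L^\perp = \ker(\mathbf{y} \mapsto \mathbf{y} A^T) $, and observing at the outset of the ``if'' direction that $ A' A^T = I $ already forces $ A' $ to have full rank $ s $, so that $ L' $ really is $ s $-dimensional and the dimension count closes. Everything else is routine linear algebra.
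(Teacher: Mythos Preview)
Your proof is correct and follows essentially the same approach as the paper: the ``if'' direction is identical, and for the ``only if'' direction the paper takes an arbitrary generator matrix $B$ of $L'$, shows $BA^T$ is invertible (equivalently, that your map $\theta$ restricts to an isomorphism on $L'$), and sets $A' = (BA^T)^{-1}B$, which is exactly your pullback of the canonical basis written in matrix form.
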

\begin{proof}
First assume that $ \mathbb{F}_q^n = L^\prime \oplus L^\perp $ and $ B $ is a generator matrix for $ L^\prime $. Take $ \mathbf{x} $ such that $ \mathbf{x} B A^T = \mathbf{0} $, then $ \mathbf{x} B \in L^\prime \cap L^\perp $ and therefore, $ \mathbf{x} B = \mathbf{0} $, which implies that $ \mathbf{x} = \mathbf{0} $. Hence, $ BA^T $ is full rank and there exists an invertible matrix $ M $ such that $ MBA^T = I $. Taking $ A^\prime = MB $ we obtain the desired matrix.

Now assume that $ L^\prime $ has a generator matrix $ A^\prime $ with $ A^\prime A^T = I $. Since $ \dim(L^\prime) = \dim(L) = n - \dim(L^\perp) $, we need to prove that $ L^\prime \cap L^\perp = 0 $. Suppose that $ \mathbf{x} A^\prime \in L^\perp $, then $ \mathbf{x} = \mathbf{x} A^\prime A^T = \mathbf{0} $, and we are done.
\end{proof}

Therefore, to compute subspaces $ L^\prime $ with $ \mathbb{F}_q^n = L^\prime \oplus L^\perp $, we just need to solve the equations $ A \mathbf{a}^{\prime T}_i = \mathbf{e}^T_i $, $ i=1,2, \ldots, \dim(L) $. Different solutions give different spaces.

Note that if $ A $ is a generator matrix of $ L \subset \mathbb{F}_q^n $ over $ \mathbb{F}_q $, then it is a generator matrix of $ V = L \otimes \mathbb{F}_{q^m} $ over $ \mathbb{F}_{q^m} $.

\begin{lemma} \label{lemma projection}
With the same notation as in the previous proposition, we have that, for every $ \mathbf{c} \in \mathbb{F}_{q^m}^n $, 
$$ \pi_{L,L^\prime}(\mathbf{c}) = \mathbf{c} A^TA^\prime. $$
\end{lemma}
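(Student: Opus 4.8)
The plan is to verify the identity by splitting an arbitrary $ \mathbf{c} \in \mathbb{F}_{q^m}^n $ along the direct sum $ \mathbb{F}_{q^m}^n = V^\prime \oplus V^\perp $, where $ V^\prime = L^\prime \otimes \mathbb{F}_{q^m} $ and $ V^\perp = L^\perp \otimes \mathbb{F}_{q^m} $, and then appealing to $ \mathbb{F}_{q^m} $-linearity. Both $ \pi_{L,L^\prime} $ and the map $ \mathbf{c} \longmapsto \mathbf{c} A^T A^\prime $ are $ \mathbb{F}_{q^m} $-linear, so it suffices to check that they agree on vectors of $ V^\perp $ and on vectors of $ V^\prime $ separately.

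On $ V^\perp $: since $ A $ is a generator matrix of $ V = L \otimes \mathbb{F}_{q^m} $ over $ \mathbb{F}_{q^m} $ (as noted just before the lemma), a vector $ \mathbf{c} \in \mathbb{F}_{q^m}^n $ lies in $ V^\perp $ if and only if it is orthogonal to every row of $ A $, that is, $ A \mathbf{c}^T = \mathbf{0} $, equivalently $ \mathbf{c} A^T = \mathbf{0} $. Hence $ \mathbf{c} A^T A^\prime = \mathbf{0} = \pi_{L,L^\prime}(\mathbf{c}) $ for such $ \mathbf{c} $, the last equality being the definition of $ \pi_{L,L^\prime} $ on the kernel summand $ V^\perp $. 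On $ V^\prime $: since $ A^\prime $ is a generator matrix of $ L^\prime $ over $ \mathbb{F}_q $, it generates $ V^\prime $ over $ \mathbb{F}_{q^m} $, so any $ \mathbf{c} \in V^\prime $ can be written $ \mathbf{c} = \mathbf{y} A^\prime $ with $ \mathbf{y} \in \mathbb{F}_{q^m}^{\dim(L)} $; using the hypothesis $ A^\prime A^T = I $ we get $ \mathbf{c} A^T = \mathbf{y} A^\prime A^T = \mathbf{y} $, and therefore $ \mathbf{c} A^T A^\prime = \mathbf{y} A^\prime = \mathbf{c} = \pi_{L,L^\prime}(\mathbf{c}) $, again by the definition of $ \pi_{L,L^\prime} $ on $ V^\prime $.

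Finally, writing an arbitrary $ \mathbf{c} = \mathbf{c}_1 + \mathbf{c}_2 $ with $ \mathbf{c}_1 \in V^\prime $ and $ \mathbf{c}_2 \in V^\perp $, linearity and the two cases above give $ \mathbf{c} A^T A^\prime = \mathbf{c}_1 A^T A^\prime + \mathbf{c}_2 A^T A^\prime = \mathbf{c}_1 = \pi_{L,L^\prime}(\mathbf{c}) $, which is the claim. There is no real obstacle in this argument; the only points requiring a moment's care are the transpose bookkeeping and the observation that membership in $ V^\perp $ is precisely annihilation by $ A^T $ on the right, together with keeping track of which matrix generates which space over which field.
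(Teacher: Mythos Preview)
Your proof is correct. The paper does not actually prove this lemma: it is stated without proof, the computation being regarded as straightforward. Your argument, verifying the identity on the two summands $V^\prime$ and $V^\perp$ separately and then invoking $\mathbb{F}_{q^m}$-linearity, is exactly the natural way to fill in this omitted verification.
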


And now we give a method to compute the generator matrix of a punctured code $ C^L $, given generator matrices of $ C $ and $ L $. The proof is straightforward and follows from the previous lemma.

\begin{proposition} \label{generator punctured}
Let $ C \subset \mathbb{F}_{q^m}^n $ be a linear code with generator matrix $ G $, and let $ L,L^\prime \subset \mathbb{F}_q^n $ be subspaces with generator matrices $ A $ and $ A^\prime $, respectively, and such that $ A^\prime A^T = I $.

We have that $ GA^TA^\prime $ satisfies that $ {\rm row}(GA^TA^\prime) = C^{L,L^\prime} = C^L $, and thus by deleting linearly dependent rows, we obtain a generator matrix for $ C^L $. Moreover, if $ L $ is an information space for $ C $, then $ G A^T A^\prime $ is full rank and therefore it is a generator matrix for $ C^L $.
\end{proposition}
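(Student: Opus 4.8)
The plan is to reduce the whole statement to Lemma \ref{lemma projection} together with the $ \mathbb{F}_{q^m} $-linearity of the projection $ \pi_{L,L^\prime} $. First I would write $ \mathbf{g}_1, \mathbf{g}_2, \ldots, \mathbf{g}_k $ for the rows of $ G $, so that $ C = \langle \mathbf{g}_1, \mathbf{g}_2, \ldots, \mathbf{g}_k \rangle_{\mathbb{F}_{q^m}} $. By Definition \ref{punctured and shortened} we have $ C^L = C^{L,L^\prime} = \pi_{L,L^\prime}(C) $, and since $ \pi_{L,L^\prime} $ is $ \mathbb{F}_{q^m} $-linear this equals $ \langle \pi_{L,L^\prime}(\mathbf{g}_1), \pi_{L,L^\prime}(\mathbf{g}_2), \ldots, \pi_{L,L^\prime}(\mathbf{g}_k) \rangle_{\mathbb{F}_{q^m}} $. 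Now Lemma \ref{lemma projection} gives $ \pi_{L,L^\prime}(\mathbf{g}_i) = \mathbf{g}_i A^T A^\prime $ for each $ i $, and these are precisely the rows of $ G A^T A^\prime $. Hence $ {\rm row}(G A^T A^\prime) = C^{L,L^\prime} = C^L $, which is the first assertion. Since deleting $ \mathbb{F}_{q^m} $-linearly dependent rows from a matrix does not change its row space, removing such rows from $ G A^T A^\prime $ leaves a full-rank matrix that still generates $ C^L $, i.e.\ a generator matrix of $ C^L $.

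For the final claim I would note that $ G A^T A^\prime $ is a $ k \times n $ matrix (with $ G $ of size $ k \times n $, $ A^T $ of size $ n \times \dim(L) $, and $ A^\prime $ of size $ \dim(L) \times n $), so $ {\rm Rk}(G A^T A^\prime) = \dim({\rm row}(G A^T A^\prime)) = \dim(C^L) \leq k $, with equality exactly when $ L $ is an information space for $ C $, by Definition \ref{info spaces}. In that case $ G A^T A^\prime $ has rank $ k $, so its $ k $ rows are already $ \mathbb{F}_{q^m} $-linearly independent, and it is itself a generator matrix of $ C^L $ with no deletion needed.

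The argument is essentially a one-line consequence of Lemma \ref{lemma projection} and linearity, so there is no genuine obstacle; the only two points that deserve a moment of care are that $ \pi_{L,L^\prime} $ should be applied to a spanning set (a basis) of $ C $ rather than to individual codewords, which is legitimate precisely because $ \pi_{L,L^\prime} $ is $ \mathbb{F}_{q^m} $-linear, and that the ``full rank'' conclusion really does require the information-space hypothesis — without it the rows of $ G A^T A^\prime $ may be dependent and one must pass to a submatrix.
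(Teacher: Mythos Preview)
Your proposal is correct and matches the paper's approach exactly: the paper does not spell out a proof but simply states that it ``is straightforward and follows from the previous lemma'' (Lemma \ref{lemma projection}), which is precisely what you do by applying $\pi_{L,L'}(\mathbf{g}_i) = \mathbf{g}_i A^T A'$ row by row and invoking linearity, then reading off the full-rank claim from Definition \ref{info spaces}.
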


\section{Secure network coding} \label{sec secure}

In this section we revisit the description of secure linear network coding in view of the results in the previous sections. Recall from Subsection \ref{subsec linear network} the linear network coding with errors that we are considering, which is the one in \cite{rgrw, on-metrics}, and recall from Subsection \ref{subsec coding schemes} that we assume that the source encodes the original message $ \mathbf{x} \in \mathbb{F}_{q^m}^k $ into $ \mathbf{c} \in \mathbb{F}_{q^m}^n $ using some coding scheme $ \mathcal{P}_\mathcal{S} = \{ C_\mathbf{x} \}_{\mathbf{x} \in \mathcal{S}} $.

As explained in the introduction, we consider an adversary that may compromise the security of the network by doing three things: introducing $ t $ erroneous packets on $ t $ different links, modifying the transfer matrix $ A $ and obtaining information about the original message $ \mathbf{x} $ by wiretapping several links. 

As in \cite{rgrw, on-metrics}, if the receiver obtains the vector $ \mathbf{y} = \mathbf{c} A^T + \mathbf{e} $, $ t = {\rm wt_R}(\mathbf{e}) $ and $ \rho = n - {\rm Rk}(A) $, then we say that $ t $ errors and $ \rho $ erasures occurred. In Appendix \ref{app 3}, we will see how to consider erasures as errors.

\subsection{Erasure correction and information leakage revisited} \label{erasure sect}

In this subsection we study the problems of erasure correction and information leakage, which are closely related. The amount of leaked information on networks was studied in \cite{rgrw}. We will see how the punctured construction in Section \ref{sec puncturing} can describe this.

Consider a linear coding scheme built from $ C_2 \varsubsetneq C_1 \subset \mathbb{F}_{q^m}^n $. Denote by $ S $ and $ X $ the random variables corresponding to the original message and the encoded message by the previous nested coset coding scheme, respectively, and $ \pi_I $ the projection onto the coordinates in $ I \subset \mathcal{J} $. It was shown in \cite{one-point} and \cite{kurihara-secret} that 
\begin{equation} \label{info leakage punctured 1}
{\rm I}(S;\pi_I(X)) = \dim((C_2^\perp)_I/(C_1^\perp)_I) = \dim(C_1^I / C_2^I),
\end{equation}
for every $ I \subset \mathcal{J} $, assuming a uniform distribution, where the last equality follows from Lemma \ref{forney}, and $ {\rm I}(X;Y) = H(X) - H(X | Y) $ is the mutual information of the random variables $ X $ and $ Y $.

On the other hand, by wiretapping $ s $ links in a network, an adversary obtains the variable $ X B^T $, for some matrix $ B \in \mathbb{F}_q^{s \times n} $. Assuming uniform distributions, and defining $ L = {\rm row}(B) \subset \mathbb{F}_q^n $, it is proven in \cite[Lemma 7]{rgrw} that
\begin{equation} \label{info leakage punctured 2}
{\rm I}(S; X B^T) = \dim((C_2^\perp)_L/(C_1^\perp)_L) = \dim(C_1^L / C_2^L),
\end{equation}
where the last equality follows from Lemma \ref{forney}. 

Therefore, the information leakage is tightly related to the dimension of punctured and shortened codes. 

Observe that $ {\rm I}(S; X B^T) \leq \dim(C_1/C_2) $ and the equality holds if, and only if, $ L $ is an information space for $ C_1, C_2 $ as in Definition \ref{info spaces}. Remember from Proposition \ref{info spaces distance 1} that if $ n - {\rm Rk}(B) < d_R(\mathcal{P}_\mathcal{S}) $, then $ L = {\rm row}(B) $ is an information space for $ \mathcal{P}_\mathcal{S} $. In Appendix \ref{app 1}, we show how to efficiently obtain the original message if $ L $ is an information space. \\

Next we give a relation between information leakage and duality, whose philosophy is similar to that of MacWilliams equations, since it means that knowing the information leakage using the code pair $ C_2 \varsubsetneq C_1 $ is equivalent to knowing the information leakage using the ``dual'' code pair $ C_1^\perp \varsubsetneq C_2^\perp $. It is convenient to introduce the definition of access structures:

\begin{definition}[\textbf{\cite{one-point}}] \label{access structures 1}
We define the Hamming access structure of the nested linear code pair $ C_2 \varsubsetneq C_1 $ as the collection of the following sets
$$ \mathcal{A}(C_1,C_2)_r = \{ I \subset \mathcal{J} \mid \dim(C_1^I / C_2^I) = r \}, $$
for $ 0 \leq r \leq \ell= \dim(C_1/C_2) $. Given a set $ \mathcal{A} \subset \mathcal{P}(\mathcal{J}) $, we define its Hamming dual as $ \mathcal{A}^\perp = \{ I \subset \mathcal{J} \mid \overline{I} \in \mathcal{A} \} $.
\end{definition}

\begin{definition} \label{access structures 2}
We define the rank access structure of the nested linear code pair $ C_2 \varsubsetneq C_1 $ as the collection of the following linear subspaces of $ \mathbb{F}_q^n $
$$ \mathcal{B}(C_1,C_2)_r = \{ L \subset \mathbb{F}_q^n \mid \dim(C_1^L / C_2^L) = r \}, $$
for $ 0 \leq r \leq \ell= \dim(C_1/C_2) $. Given a set $ \mathcal{B} \subset \{ L \subset \mathbb{F}_q^n \textrm{ linear subspace} \} $, we define its rank dual as $ \mathcal{B}^\perp = \{ L \subset \mathbb{F}_q^n \mid L^\perp \in \mathcal{B} \} $.
\end{definition}

We now present the relation with duality, where the Hamming case for $ r = 0 $ was already proven in \cite[Proof of Theorem 1]{secure-computation} for the Massey-type scheme \cite[Section 3]{secure-computation}. The rank case and the general Hamming case are new.

\begin{proposition} 
Given a nested linear code pair $ C_2 \varsubsetneq C_1 \subset \mathbb{F}_{q^m}^n $ and $ 0 \leq r \leq \ell = \dim(C_1/C_2) $, we have that
$$ \mathcal{A}(C_2^\perp,C_1^\perp)_r = \mathcal{A}(C_1,C_2)_{\ell-r}^\perp. $$
\end{proposition}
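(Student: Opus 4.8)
The plan is to unfold both sides of the claimed identity using the formula~\eqref{info leakage punctured 1} and its reformulation via Forney's Lemma~\ref{forney}, and then check that a set $I$ satisfies the defining condition of one side precisely when it satisfies that of the other. Write $\ell = \dim(C_1/C_2)$ and, for brevity, abbreviate the dual pair as $D_2 = C_1^\perp \varsubsetneq D_1 = C_2^\perp$, so that $D_1^\perp = C_2$ and $D_2^\perp = C_1$. By Definition~\ref{access structures 1}, $I \in \mathcal{A}(C_2^\perp, C_1^\perp)_r$ means $\dim(D_1^I / D_2^I) = r$, i.e. $\dim((C_2^\perp)^I / (C_1^\perp)^I) = r$. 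On the other hand, $I \in \mathcal{A}(C_1,C_2)_{\ell - r}^\perp$ means $\overline{I} \in \mathcal{A}(C_1,C_2)_{\ell - r}$, i.e. $\dim(C_1^{\overline{I}} / C_2^{\overline{I}}) = \ell - r$.

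So the whole statement reduces to the single numerical identity
\begin{equation*}
\dim\bigl((C_2^\perp)^I / (C_1^\perp)^I\bigr) + \dim\bigl(C_1^{\overline{I}} / C_2^{\overline{I}}\bigr) = \ell
\end{equation*}
valid for every $I \subset \mathcal{J}$. First I would rewrite each of the two dimensions using the equality chain in~\eqref{info leakage punctured 1}, namely $\dim(C_1^I / C_2^I) = \dim((C_2^\perp)_I / (C_1^\perp)_I)$, which itself comes from Lemma~\ref{forney} applied to $C_1$ and to $C_2$. Concretely, Forney's Lemma~\ref{forney} (in the Hamming/coordinate form, i.e. taking $L = L_I$) gives $\dim(C^I) = \#I - \dim((C^\perp)_I)$ for any linear code $C$; subtracting the instances for $C = C_1$ and $C = C_2$ yields $\dim(C_1^I/C_2^I) = \dim((C_2^\perp)_I) - \dim((C_1^\perp)_I)$. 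The plan is to apply this twice: once with the pair $(C_1,C_2)$ on the complement set $\overline{I}$, and once with the pair $(D_1,D_2) = (C_2^\perp, C_1^\perp)$ on the set $I$.

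After these substitutions the left-hand side becomes a combination of terms of the form $\dim((C_i^\perp)_I)$ and $\dim((C_i)_{\overline{I}})$. The key simplification is that shortening a code $C$ at $I$ and then taking the orthogonal, versus puncturing, are linked; but more directly, I would reduce everything to punctured dimensions of $C_1$ and $C_2$ on the set $I$ using Forney once more, and observe that the four dimensions telescope. Writing $a_j = \dim(C_j^I)$ for $j = 1,2$, one gets $\dim((C_j^\perp)_{\overline{I}}) $-type quantities in terms of $\#I - a_j$ and $k_j$, and the $\#I$ and $k_j$ contributions all cancel, leaving exactly $k_1 - k_2 = \ell$. I would carry out this bookkeeping carefully using $\#I + \#\overline{I} = n$ and $\dim(C_j) + \dim(C_j^\perp) = n$.

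The main obstacle is purely bookkeeping: making sure every application of Forney's Lemma~\ref{forney} uses the correct pairing of code, subset, and dual, since the lemma relates a punctured code on $L$, a shortened dual on $L$, and a shortened code on $L^\perp$, and here $L^\perp = L_{\overline{I}}$ when $L = L_I$. There is no deep content beyond Lemma~\ref{forney} and the dimension identities for orthogonal complements; once the four dimension terms are written consistently in terms of $n$, $\#I$, $k_1$, $k_2$, and $\dim(C_j^I)$, the cancellation is immediate. One should also note monotonicity/range: since both sides are nonnegative and bounded by $\ell$, the identity automatically sends $\mathcal{A}(C_2^\perp,C_1^\perp)_r$ onto exactly the sets whose complements lie in $\mathcal{A}(C_1,C_2)_{\ell-r}$, which is the asserted equality of set families.
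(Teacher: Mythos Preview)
Your proposal is correct and follows essentially the same approach as the paper: the paper's proof consists precisely of stating the key identity $\dim((C_2^\perp)^I / (C_1^\perp)^I) + \dim(C_1^{\overline{I}} / C_2^{\overline{I}}) = \ell$ and noting that it follows from Lemma~\ref{forney}. Your derivation of that identity via repeated applications of Forney's Lemma is exactly the intended bookkeeping; in fact a slightly shorter route is to apply~\eqref{info leakage punctured 1} to the dual pair to get $\dim((C_2^\perp)^I/(C_1^\perp)^I)=\dim((C_1)_I/(C_2)_I)$, and then use $\dim((C_j)_I)=k_j-\dim(C_j^{\overline{I}})$ from Lemma~\ref{forney} to obtain $\ell-\dim(C_1^{\overline{I}}/C_2^{\overline{I}})$ directly.
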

\begin{proof}
It follows from the following equality, which follows from Lemma \ref{forney}, 
$$ \dim((C_2^\perp)^I / (C_1^\perp)^I) + \dim(C_1^{\overline{I}} / C_2^{\overline{I}}) = \ell. $$
\end{proof}

\begin{proposition} \label{pre duality}
Given a nested linear code pair $ C_2 \varsubsetneq C_1 \subset \mathbb{F}_{q^m}^n $ and $ 0 \leq r \leq \ell = \dim(C_1/C_2) $, we have that
$$ \mathcal{B}(C_2^\perp,C_1^\perp)_r = \mathcal{B}(C_1,C_2)_{\ell-r}^\perp. $$
\end{proposition}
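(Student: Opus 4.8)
The plan is to reduce this to the Hamming-case statement in the previous proposition via the dictionary between rank and Hamming notions established earlier in the paper. The key observation is that $L \in \mathcal{B}(C_1,C_2)_r$ if and only if $L^\prime := L \otimes \mathbb{F}_{q^m}$ is a Galois closed subspace of $\mathbb{F}_{q^m}^n$ of dimension $\dim(L)$ with $\dim((C_1 \cap L^\prime)/(C_2 \cap L^\prime))$ controlled in the right way; but rather than go through subspaces, I would instead mimic the proof of the previous proposition directly, using the rank analogue of Lemma \ref{forney} together with the identity $(L \otimes \mathbb{F}_{q^m})^\perp = L^\perp \otimes \mathbb{F}_{q^m}$.

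Concretely, first I would fix a linear subspace $L \subset \mathbb{F}_q^n$ and unwind the membership condition: $L \in \mathcal{B}(C_2^\perp, C_1^\perp)_r$ means $\dim((C_2^\perp)^L/(C_1^\perp)^L) = r$, where the roles of $C_1, C_2$ in Definition \ref{access structures 2} are played by $C_2^\perp \varsubsetneq C_1^\perp$. On the other side, $L \in \mathcal{B}(C_1,C_2)_{\ell-r}^\perp$ means $L^\perp \in \mathcal{B}(C_1,C_2)_{\ell-r}$, i.e. $\dim(C_1^{L^\perp}/C_2^{L^\perp}) = \ell - r$. So the whole statement reduces to the single identity
$$ \dim\left((C_2^\perp)^L / (C_1^\perp)^L\right) + \dim\left(C_1^{L^\perp} / C_2^{L^\perp}\right) = \ell. $$
Second, I would prove this identity by applying Lemma \ref{forney} twice. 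Lemma \ref{forney} gives $\dim((C_i^\perp)^L) = \dim(L) - \dim(C_i \cap (L^\perp \otimes \mathbb{F}_{q^m}))$ for $i=1,2$ — here I use that $(L \otimes \mathbb{F}_{q^m})^\perp = L^\perp \otimes \mathbb{F}_{q^m}$, and that the rank-shortened code $(C_i^\perp)_L$ is exactly $C_i^\perp \cap (L \otimes \mathbb{F}_{q^m})$, while $(C_i^\perp)^\perp = C_i$. Subtracting, the $\dim(L)$ terms cancel and
$$ \dim\left((C_2^\perp)^L / (C_1^\perp)^L\right) = \dim\left(C_1 \cap (L^\perp \otimes \mathbb{F}_{q^m})\right) - \dim\left(C_2 \cap (L^\perp \otimes \mathbb{F}_{q^m})\right) = \dim\left((C_1)_{L^\perp}/(C_2)_{L^\perp}\right). $$
Third, applying Lemma \ref{forney} again (the second equality there, $\dim(C^L) = k - \dim(C_{L^\perp})$) to $C_1$ and to $C_2$ with the subspace $L^\perp$ in place of $L$, I get $\dim(C_i^{L^\perp}) = k_i - \dim((C_i)_{(L^\perp)^\perp}) = k_i - \dim((C_i)_L)$... but the cleaner route is: $\dim((C_i)_{L^\perp}) = \dim((C_i)_{L^\perp})$ and $\dim(C_i^{L^\perp}) = \dim(L^\perp) - \dim((C_i^\perp)_{L^\perp})$, so I should be careful to pick the pairing that makes the dimensions of $L$ and $L^\perp$ cancel against $n$; combining $\dim((C_1)_{L^\perp}/(C_2)_{L^\perp})$ with $\dim(C_1^{L^\perp}/C_2^{L^\perp})$ via $\dim(C^{L^\perp}) = k - \dim(C_{(L^\perp)^\perp}) = k - \dim(C_L)$ is not quite it either — the correct combination is $\dim(C_i^{L^\perp}) = \dim(L^\perp) - \dim((C_i^\perp)_{L^\perp})$ paired against $\dim((C_i)_{L^\perp})$, which adds up correctly because $(C_i^\perp)_{L^\perp}$ and $(C_i)_{L^\perp}$ live in complementary-rank spaces. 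So the identity will follow from bookkeeping with the two forms of Lemma \ref{forney}.

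The main obstacle I anticipate is purely organizational: keeping straight which of $C_1, C_2$ is "the bigger code," which of $L, L^\perp$ is being used for puncturing versus shortening, and ensuring the two applications of Lemma \ref{forney} are applied with consistent conventions so the $\dim(L)$, $\dim(L^\perp)$, $k$, and $n-k$ terms genuinely cancel to leave $\ell = k_1 - k_2$. There is no conceptual difficulty — once the identity $\dim((C_2^\perp)^I/(C_1^\perp)^I) + \dim(C_1^{\overline I}/C_2^{\overline I}) = \ell$ from the previous proposition's proof is in hand, the present proof is its verbatim transcription replacing $I \mapsto L$, $\overline I \mapsto L^\perp$, and $V_I$ by $L \otimes \mathbb{F}_{q^m}$, using that this assignment respects orthogonal complements. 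Hence I would simply write: "It follows from the following equality, which follows from Lemma \ref{forney}, $\dim((C_2^\perp)^L / (C_1^\perp)^L) + \dim(C_1^{L^\perp} / C_2^{L^\perp}) = \ell.$"
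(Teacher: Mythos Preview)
Your approach is correct and is exactly the paper's: reduce to the identity $\dim((C_2^\perp)^L / (C_1^\perp)^L) + \dim(C_1^{L^\perp} / C_2^{L^\perp}) = \ell$ and deduce it from Lemma~\ref{forney}. One slip in your bookkeeping: the first form of Lemma~\ref{forney} with $C = C_i^\perp$ gives $\dim((C_i^\perp)^L) = \dim(L) - \dim((C_i)_L)$ (not $(C_i)_{L^\perp}$), so the intermediate equality is $\dim((C_2^\perp)^L/(C_1^\perp)^L) = \dim((C_1)_L/(C_2)_L)$, after which the second form $\dim(C_i^{L^\perp}) = k_i - \dim((C_i)_L)$ makes everything telescope to $k_1 - k_2$.
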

\begin{proof}
Again, it follows from the following equality, which follows from Lemma \ref{forney},
$$ \dim((C_2^\perp)^L / (C_1^\perp)^L) + \dim(C_1^{L^\perp} / C_2^{L^\perp}) = \ell. $$ \\
\end{proof}

Finally, as consequences of Proposition \ref{r MDS} and Proposition \ref{r MRD}, we obtain the description of the access structures for MDS and MRD code pairs, respectively. The Hamming case (Corollary \ref{computing info leakage MDS}) also follows immediately from \cite[Section III]{one-point}.

\begin{corollary}[\textbf{\cite[Section III]{one-point}}] \label{computing info leakage MDS}
If both $ C_1 $ and $ C_2 $ are MDS, then
\begin{displaymath}
\dim(C_1^I / C_2^I) = \left\lbrace 
\begin{array}{ll}
 \ell &, \textrm{ if } k_1 \leq \# I, \\
 \# I - k_2 &, \textrm{ if } k_2 \leq \# I \leq k_1, \\
 0 &, \textrm{ if } \# I \leq k_2, \\
\end{array} 
\right.
\end{displaymath}
for every $ I \subset \mathcal{J} $.
\end{corollary}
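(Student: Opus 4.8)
The plan is to reduce everything to the single fact that puncturing an MDS code keeps the dimension as large as possible. First I would show that for any MDS code $C \subset \mathbb{F}_{q^m}^n$ of dimension $k$ and any $I \subset \mathcal{J}$ one has $\dim(C^I) = \min(\#I, k)$. When $\#I \le k$ this is exactly item 3 of Proposition \ref{r MDS} with $r = 1$. When $\#I \ge k$, I would pick any $J \subseteq I$ with $\#J = k$; since the further projection $C^I \longrightarrow C^J$ is surjective we get $\dim(C^I) \ge \dim(C^J) = k$, while trivially $\dim(C^I) \le \dim(C) = k$, so $\dim(C^I) = k$. (Equivalently, one can observe that $d_H(C) = n-k+1$ forces $C_{\overline{I}} = 0$ when $\#\overline{I} \le n-k$ and then invoke Lemma \ref{forney}.)

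Next, since $C_2 \varsubsetneq C_1$ implies $C_2^I \subseteq C_1^I$ (the coordinate projection is monotone on subspaces), we have $\dim(C_1^I / C_2^I) = \dim(C_1^I) - \dim(C_2^I)$. Applying the first step to both $C_1$ and $C_2$, with $k_i = \dim(C_i)$, yields
$$ \dim(C_1^I / C_2^I) = \min(\#I, k_1) - \min(\#I, k_2). $$

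It then only remains to expand this expression according to the position of $\#I$ relative to $k_2 \le k_1$: if $\#I \le k_2$ both minima equal $\#I$ and the difference is $0$; if $k_2 \le \#I \le k_1$ the difference is $\#I - k_2$; and if $k_1 \le \#I$ the difference is $k_1 - k_2 = \ell$. This is precisely the three-case formula in the statement. I do not expect any real obstacle; the only point requiring a little care is the case $\#I \ge k$ in the first step, which is not literally covered by Proposition \ref{r MDS} and must be handled by the monotonicity-of-projection argument above.
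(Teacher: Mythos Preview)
Your proposal is correct and follows essentially the paper's route: the corollary is stated there as an immediate consequence of Proposition~\ref{r MDS}, and your argument simply spells out that consequence via $\dim(C^I)=\min(\#I,k)$ for each MDS code and then subtracts. The one minor remark is that the case $\#I\ge k$ \emph{is} literally covered by Proposition~\ref{r MDS}: since $C$ MDS implies $C^\perp$ MDS, applying item~4 of that proposition to $C^\perp$ (with $r=1$) gives $\dim(C^I)=k$ whenever $\#I\ge k$, so your separate monotonicity/Forney argument, while perfectly valid, is not strictly needed.
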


\begin{corollary}
If both $ C_1 $ and $ C_2 $ are MRD, then
\begin{displaymath}
\dim(C_1^L / C_2^L) = \left\lbrace 
\begin{array}{ll}
 \ell &, \textrm{ if } k_1 \leq \dim(L), \\
 \dim(L) - k_2 &, \textrm{ if } k_2 \leq \dim(L) \leq k_1, \\
 0 &, \textrm{ if } \dim(L) \leq k_2, \\
\end{array} 
\right.
\end{displaymath}
for every linear subspace $ L \subset \mathbb{F}_q^n $.
\end{corollary}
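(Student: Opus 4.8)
The statement is the exact rank-metric analogue of Corollary \ref{computing info leakage MDS}, so the plan is to mimic that argument using Proposition \ref{r MRD} in place of Proposition \ref{r MDS}. The quantity to compute is $\dim(C_1^L/C_2^L)$ for an arbitrary $L \subset \mathbb{F}_q^n$, and the three cases depend only on $\dim(L)$ relative to $k_2$ and $k_1$.

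First I would handle the extreme regimes. If $\dim(L) \leq k_2 = k_1$-MRD's dimension bound for $C_2$, then since $C_2$ is MRD (hence in particular $d_{R,1}(C_2^\perp) \geq k_2 - 1 + 2 = k_2 + 1$ via the $r=1$ case of Proposition \ref{r MRD}, equivalently by Proposition \ref{info spaces def}), every $L$ with $\dim(L) = k_2$ is an information space for $C_2$; for $\dim(L) \leq k_2$ we get $\dim(C_2^L) = \dim(L)$. A clean way to organize this: apply Lemma \ref{forney} to write $\dim(C_i^L) = k_i - \dim((C_i)_{L^\perp})$, and then use that $C_i$ being MRD forces $(C_i)_{L^\perp} = 0$ whenever $\dim(L^\perp) = n - \dim(L) < d_R(C_i) = n - k_i + 1$, i.e. whenever $\dim(L) \geq k_i$. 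So for $\dim(L) \geq k_1$ we get $\dim(C_i^L) = k_i$ for both $i=1,2$, giving $\dim(C_1^L/C_2^L) = k_1 - k_2 = \ell$. For $k_2 \leq \dim(L) \leq k_1$, we still have $(C_2)_{L^\perp} = 0$ (since $\dim(L) \geq k_2$), so $\dim(C_2^L) = k_2$; and since $\pi_{L,L'}$ restricted to $C_2$ is injective, $C_2^L \subset C_1^L$ and $\dim(C_1^L/C_2^L) = \dim(C_1^L) - k_2$. For $\dim(L) \leq k_2$, the same injectivity argument applied to $C_2 \subset C_1$ gives $\dim(C_2^L) = \dim(L)$ and $\dim(C_1^L) = \dim(L)$ (the latter since $\dim(L) \leq k_2 \leq k_1$, apply Proposition \ref{r MRD} item 3 to $C_1$ with $\dim(L) \leq k_1 = k_1 - 1 + 1$), hence the quotient is $0$.

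It remains to pin down $\dim(C_1^L)$ in the middle range $k_2 \leq \dim(L) \leq k_1$. Here I would invoke Proposition \ref{r MRD} item 3 directly for the code $C_1$: since $C_1$ is MRD ($1$-MRD), item 3 with $r=1$ says $\dim(C_1^L) = \dim(L)$ for all $L$ with $\dim(L) \leq k_1 - 1 + 1 = k_1$. Combining with $\dim(C_2^L) = k_2$ from the previous paragraph, $\dim(C_1^L/C_2^L) = \dim(L) - k_2$, as claimed. This also subsumes the $\dim(L) \leq k_2$ case, where the formula $\dim(L) - k_2 \leq 0$ is replaced by $0$ precisely because $C_2^L$ cannot sit properly inside $C_1^L$ once $\dim(C_2^L) = \dim(L) = \dim(C_1^L)$, forcing $C_1^L = C_2^L$.

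The step I expect to require the most care is bookkeeping the inclusion $C_2^L \subset C_1^L$ and the injectivity of $\pi_{L,L'}$ on the relevant subspaces, so that the quotient dimension is genuinely the difference of the two punctured dimensions and not merely bounded by it — but this is exactly the content of ``$L$ is an information space'' from Definition \ref{info spaces}, and for $\dim(L) \geq k_2$ the MRD hypothesis on $C_2$ guarantees $\pi_{L,L'}|_{C_2}$ is an isomorphism onto $C_2^L$, while Proposition \ref{r MRD} item 3 for $C_1$ handles $\dim(L) \leq k_1$. Once those two facts are in place the three cases drop out immediately, exactly paralleling the Hamming proof of Corollary \ref{computing info leakage MDS}.
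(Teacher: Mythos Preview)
Your proposal is correct and follows essentially the route the paper intends: the corollary is stated there without proof as an immediate consequence of Proposition~\ref{r MRD}, and your argument is precisely the natural unpacking of that---using item~3 of Proposition~\ref{r MRD} (applied to $C_1$ and $C_2$ with $r=1$) together with Lemma~\ref{forney} to compute $\dim(C_i^L)$ in each of the three ranges of $\dim(L)$. The exposition could be streamlined (you revisit the same cases twice), but the mathematics is sound and matches the paper's approach.
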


In general, we can compute the information leaked in many cases, but if the involved codes are not MDS (respectively, MRD), then there is always a collection of sets (respectively, subspaces) for which we do not completely know the information leaked. We first establish this fact for the rank case, which follows from Proposition \ref{r MRD}, and give an example in the Hamming case:

\begin{proposition}
Let $ C_2 \varsubsetneq C_1 \subset \mathbb{F}_{q^m}^n $ be a nested linear code pair such that $ k_i = \dim(C_i) $, $ i=1,2 $, $ \ell = k_1 - k_2 $, $ C_1 $ is $ r_1 $-MRD and $ C_2^\perp $ is $ r_2 $-MRD, or equivalently, $ d_R(C_1^\perp) \geq k_1 - r_1 + 2 $ and $ d_R(C_2) \geq n - k_2 - r_2 + 2 $. If $ L \subset \mathbb{F}_q^n $ is a subspace such that $ k_2 + r_2 - 1 \leq \dim(L) \leq k_1 - r_1 + 1 $, then $ \dim(C_1^L / C_2^L) = \dim(L) - k_2 $, which only depends on $ \dim(L) $ and not on the space $ L $.

If moreover, $ k_2 + r_2 - 1 < k_1 - r_1 + 1 $, and taking $ s_1 = n - k_1 - d(C_1) + 1 $ and $ s_2 = k_2 - d(C_2^\perp) + 1 $, then for every subspace $ L \subset \mathbb{F}_q^n $, it holds that $ \dim(C_1^L / C_2^L) $ is
\begin{displaymath}
\begin{array}{ll}
= \ell &, \textrm{ if } k_1 + s_1 \leq \dim(L), \\
\geq \ell - r_1 + 1 &, \textrm{ if } k_1 - r_1 + 1 < \dim(L) < k_1 + s_1, \\
= \dim(L) - k_2 &, \textrm{ if } k_2 + r_2 - 1 \leq \dim(L) \leq k_1 - r_1 + 1, \\
\leq r_2 - 1 &, \textrm{ if } k_2 - s_2 < \dim(L) < k_2 + r_2 - 1, \\
= 0 &, \textrm{ if } \dim(L) \leq k_2 - s_2. \\
\end{array}
\end{displaymath}
\end{proposition}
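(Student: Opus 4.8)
The plan is to reduce everything, via the Forney-type identities of Lemma \ref{forney}, to two complementary expressions for $ \dim(C_1^L/C_2^L) $, and then to read off each of the five regimes by bounding a single intersection of one of the four codes $ C_1, C_1^\perp, C_2, C_2^\perp $ with a Galois closed subspace. Write $ V = L \otimes \mathbb{F}_{q^m} $ and $ V^\perp = L^\perp \otimes \mathbb{F}_{q^m} $, which are Galois closed of dimensions $ \dim(L) $ and $ n - \dim(L) $. Since $ C_2 \subset C_1 $ the projection $ \pi_{L,L^\prime} $ carries $ C_2 $ into $ \pi_{L,L^\prime}(C_1) $, so $ C_2^L \subset C_1^L $ and $ \dim(C_1^L/C_2^L) = \dim(C_1^L) - \dim(C_2^L) $. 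Applying Lemma \ref{forney} to $ C_1 $ and to $ C_2 $, once in each of its two forms, yields
$$ \dim(C_1^L/C_2^L) = \dim((C_2^\perp)_L) - \dim((C_1^\perp)_L) = \ell - \dim((C_1)_{L^\perp}) + \dim((C_2)_{L^\perp}), $$
where $ (C_1^\perp)_L \subset (C_2^\perp)_L $ (since $ C_1^\perp \subset C_2^\perp $) and $ (C_2)_{L^\perp} \subset (C_1)_{L^\perp} $, so each displayed difference is the dimension of an honest quotient, in particular nonnegative. Next I would translate the two MRD hypotheses, using Definition \ref{MRD definition}, the definition (\ref{def1}) of generalized rank weights, and Proposition \ref{r MRD}, into intersection bounds: ``$ C_1 $ is $ r_1 $-MRD'' gives $ d_{R,r_1}(C_1) = n - k_1 + r_1 $ (hence $ \dim(C_1 \cap W) \leq r_1 - 1 $ for every Galois closed $ W $ with $ \dim(W) < n - k_1 + r_1 $) and, equivalently, $ d_R(C_1^\perp) \geq k_1 - r_1 + 2 $ (hence $ C_1^\perp \cap W = 0 $ whenever $ \dim(W) \leq k_1 - r_1 + 1 $); ``$ C_2^\perp $ is $ r_2 $-MRD'', remembering that $ \dim(C_2^\perp) = n - k_2 $, gives $ d_{R,r_2}(C_2^\perp) = k_2 + r_2 $ and $ d_R(C_2) \geq n - k_2 - r_2 + 2 $, with the analogous intersection statements.

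Each of the five ranges is then a one-line deduction from the appropriate one of the two expressions. If $ \dim(L) \geq k_1 + s_1 = n - d(C_1) + 1 $, then $ \dim(V^\perp) = n - \dim(L) < d(C_1) $, so $ (C_1)_{L^\perp} = 0 $, hence also $ (C_2)_{L^\perp} = 0 $, and the second expression gives $ \ell $. If $ \dim(L) > k_1 - r_1 + 1 $, then $ \dim(V^\perp) < n - k_1 + r_1 = d_{R,r_1}(C_1) $, so $ \dim((C_1)_{L^\perp}) \leq r_1 - 1 $, and since $ \dim((C_2)_{L^\perp}) \geq 0 $ the second expression gives at least $ \ell - r_1 + 1 $. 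If $ k_2 + r_2 - 1 \leq \dim(L) \leq k_1 - r_1 + 1 $, then the lower inequality with $ d_R(C_2) \geq n - k_2 - r_2 + 2 $ forces $ (C_2)_{L^\perp} = 0 $, so $ \dim(C_2^L) = k_2 $ by Lemma \ref{forney}, while the upper inequality with $ d_R(C_1^\perp) \geq k_1 - r_1 + 2 $ forces $ (C_1^\perp)_L = 0 $, so $ \dim(C_1^L) = \dim(L) $; thus $ \dim(C_1^L/C_2^L) = \dim(L) - k_2 $, which depends only on $ \dim(L) $ --- this is precisely the first assertion of the proposition. If $ \dim(L) < k_2 + r_2 - 1 $, then $ \dim(V) < k_2 + r_2 = d_{R,r_2}(C_2^\perp) $, so $ \dim((C_2^\perp)_L) \leq r_2 - 1 $, and the first expression gives at most $ r_2 - 1 $. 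Finally, if $ \dim(L) \leq k_2 - s_2 = d(C_2^\perp) - 1 $, then $ \dim(V) < d(C_2^\perp) $, so $ (C_2^\perp)_L = 0 $, hence $ (C_1^\perp)_L = 0 $, and the first expression gives $ 0 $.

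The extra hypothesis $ k_2 + r_2 - 1 < k_1 - r_1 + 1 $ is needed only so that the five ranges form an exhaustive case distinction: one checks that $ k_2 - s_2 \leq k_2 + r_2 - 1 $ and $ k_1 - r_1 + 1 \leq k_1 + s_1 $ always hold (because $ s_1, s_2 \geq 0 $ by the Singleton bounds for $ C_1 $ and for $ C_2^\perp $, and $ r_1, r_2 \geq 1 $), so together with the assumed strict middle inequality the four breakpoints are weakly ordered and cover all values of $ \dim(L) $; the first assertion of the proposition requires no such hypothesis (its displayed range may simply be empty). I do not anticipate a genuine obstacle: the whole argument is a matter of matching the MRD thresholds of $ C_1, C_1^\perp, C_2, C_2^\perp $ against $ \dim(V) $ or $ \dim(V^\perp) $, and choosing whether to read off $ \dim(C_1^L/C_2^L) $ from its ``primal'' or its ``dual'' Forney expression. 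The one place where a slip is easy is the $ k_2 $-bookkeeping for the dual code: since $ C_2^\perp $ has dimension $ n - k_2 $, its being $ r_2 $-MRD reads $ d_{R,r_2}(C_2^\perp) = k_2 + r_2 $, not $ n - k_2 + r_2 $, and the equivalent form $ d_R(C_2) \geq n - k_2 - r_2 + 2 $ is what Proposition \ref{r MRD} delivers when applied to the $ (n-k_2) $-dimensional code $ C_2^\perp $.
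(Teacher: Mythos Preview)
Your proof is correct and is essentially the argument the paper has in mind: the paper does not write out a proof but simply says the result ``follows from Proposition \ref{r MRD}'', and your reduction via Lemma \ref{forney} to the two expressions $\dim((C_2^\perp)_L) - \dim((C_1^\perp)_L)$ and $\ell - \dim((C_1)_{L^\perp}) + \dim((C_2)_{L^\perp})$, followed by bounding the relevant shortened code in each regime using the MRD thresholds (exactly the content of Proposition \ref{r MRD} and definition (\ref{def1})), is precisely the intended unpacking.
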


\begin{example}
If $ C_1 $ and $ C_2 $ are algebraic geometric codes constructed from a function field of genus $ g $ \cite{tsfasman}, then we have the Goppa bound \cite[Theorem 4.3]{tsfasman}: $ d_{H,1}(C_i) \geq n - \dim(C_i) + 1 - g $ and $ d_{H,1}(C_i^\perp) \geq \dim(C_i) + 1 - g $. It follows from Proposition \ref{r MDS} that, for the code pair $ C_2 \varsubsetneq C_1 $,
\begin{displaymath}
\dim(C_1^I / C_2^I) \left\lbrace 
\begin{array}{ll}
= \ell &, \textrm{ if } k_1 + g \leq \# I, \\
\geq \ell - g &, \textrm{ if } k_1 - g < \# I < k_1 + g, \\
= \# I - k_2 &, \textrm{ if } k_2 + g \leq \# I \leq k_1 - g, \\
\leq g &, \textrm{ if } k_2 - g < \# I < k_2 + g, \\
= 0 &, \textrm{ if } \# I \leq k_2 - g. \\
\end{array}
\right.
\end{displaymath}
\end{example}

\subsection{Error and erasure correction revisited}

In this subsection we see how the rank-puncturing can describe error and erasure correction in networks. We will follow a slightly different approach than that of \cite{rgrw, on-metrics}. 

We will treat the coherent case, that is, the case in which the matrix $ A $ is known by the receiver. For simplicity, we will consider the case of one code $ C \subset \mathbb{F}_{q^m}^n $, which may be non-linear. At the end we will show how to adapt the results to arbitrary coding schemes. Observe that \cite[Theorem 4]{rgrw} only deals with linear (meaning $ \mathbb{F}_{q^m} $-linear, as in the rest of the paper) coding schemes.

As we saw in the previous subsection (see also Appendix \ref{app 1}), if the sink node receives $ \mathbf{y} = \mathbf{c} A^T $ and the number of erasures is less than $ d_R(C) $, we can perform erasure correction. For that, we can take a submatrix $ \widetilde{A} $ of $ A $ which is a generator matrix of $ L = {\rm row}(A) $, since the other rows in $ A $ are redundant. All choices of $ \widetilde{A} $ will give the same unique solution.

When there are errors, we would also like to take a submatrix as before and the corresponding subvector of $ \mathbf{y} $. However, it is not clear that the decoder in \cite{rgrw, on-metrics} for $ A $ and for $ \widetilde{A} $ will behave in the same way. We now propose a slightly different approach.

Fix the positive integer $ N $ and the matrix $ A \in \mathbb{F}_q^{N \times n} $, which are assumed to be known by the receiver. 

\begin{definition}[\textbf{\cite[Equations (9), (12)]{on-metrics}}] \label{def discrepancy}
For each $ \mathbf{c} \in \mathbb{F}_{q^m}^n $ and $ \mathbf{y} \in \mathbb{F}_{q^m}^N $, we define the discrepancy between them as
\begin{equation*}
\begin{split}
\Delta_A (\mathbf{c}, \mathbf{y}) = \min \{ & r \mid \exists \mathbf{z} \in \mathbb{F}_{q^m}^r, D \in \mathbb{F}_q^{N \times r} \\
 & \textrm{ with } \mathbf{y} = \mathbf{c} A^T + \mathbf{z} D^T \} = {\rm wt_R}(\mathbf{y} - \mathbf{c} A^T).
\end{split}
\end{equation*}
\end{definition}

Fix nonnegative integers $ \rho, t $, with $ {\rm Rk}(A) \geq n - \rho $. We will assume that, if $ \mathbf{c} \in \mathbb{F}_{q^m}^n $ is sent and $ \mathbf{y} \in \mathbb{F}_{q^m}^N $ is received, then $ \Delta_A(\mathbf{c},\mathbf{y}) \leq t $, or equivalently, that $ \mathbf{y} = \mathbf{c} A^T + \mathbf{e} $, with $ {\rm wt_R}(\mathbf{e}) \leq t $. Define $ L = {\rm row}(A) $. We will denote $ \widetilde{A} \subset A $ if $ \widetilde{A} $ is a submatrix of $ A $ that is a generator matrix of $ L $. 

Next we recall the decoder in \cite{on-metrics} and present a slightly different one.

\begin{definition}[\textbf{\cite[Equation (10)]{on-metrics}}] \label{decoder silva}
We define the decoder 
$$ \overline{\mathbf{c}} = {\rm argmin}_{\mathbf{c} \in C} \Delta_{A}(\mathbf{c}, \mathbf{y}). $$
\end{definition}

\begin{definition} \label{decoder ours}
For each $ \widetilde{A} \subset A $, we define the decoder:
$$ \widehat{\mathbf{c}} = {\rm argmin}_{\mathbf{c} \in C} \Delta_{\widetilde{A}}(\mathbf{c}, \widetilde{\mathbf{y}}), $$
where $ \widetilde{\mathbf{y}} $ is the vector obtained from $ \mathbf{y} $ taking the coordinates in the same positions as the rows of $ \widetilde{A} $. 
\end{definition}

We will say that one of the previous decoders is infallible \cite[Section III.A]{on-metrics} if $ \widehat{\mathbf{c}} = \mathbf{c} $ (or $ \overline{\mathbf{c}} = \mathbf{c} $), when $ \mathbf{c} $ is the sent message, for every $ \mathbf{c} \in C $.

In \cite{rgrw, on-metrics}, sufficient and necessary conditions for the decoder corresponding to $ A $ being infallible are given. We will now state that the same conditions are valid for the decoders corresponding to all the submatrices $ \widetilde{A} $. In particular, all of them give the correct (and thus, the same) answer.
 
The main difference is that now the proof only relies on Proposition \ref{info spaces distance 1} and Proposition \ref{info spaces distance 2}, where we do not need the machinery developed in \cite{rgrw, on-metrics}, in total analogy with the Hamming case, as proven in \cite[Theorem 1.5.1]{pless}, and for the decoding, we do not need all rows in $ A $. Moreover, although it is not difficult to adapt the proof in \cite[Theorem 4]{rgrw} for $ \mathbb{F}_q $-linear coding schemes, our proof works for any (arbitrary) scheme.
 
\begin{theorem}
Given an (arbitrary) code $ C \subset \mathbb{F}_{q^m}^n $, if $ d_R(C) > 2t + \rho $, then the decoders in Definition \ref{decoder ours} are infallible for every $ \widetilde{A} \subset A $, and in particular, they all give the same answer. If  $ d_R(C) \leq 2t + \rho $, then there exists a matrix $ A \in \mathbb{F}_q^{N \times n} $ such that for every $ \widetilde{A} \subset A $, the decoder in Definition \ref{decoder ours} is not infallible.
\end{theorem}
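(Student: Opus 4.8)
The plan is to treat the two directions separately, reducing each to the erasure/error results of Proposition \ref{info spaces distance 1} and Proposition \ref{info spaces distance 2}, exactly as one does in the Hamming case (cf. \cite[Theorem 1.5.1]{pless}). Throughout, fix $ \widetilde{A} \subset A $, so that $ \widetilde{A} $ is a full-rank generator matrix of $ L = {\rm row}(A) $ of size $ (n-\rho) \times n $ (recall ${\rm Rk}(A) = n-\rho$), and let $ \widetilde{\mathbf{y}} $ be the corresponding subvector of $ \mathbf{y} $. The first observation is that $ \widetilde{\mathbf{y}} = \mathbf{c}\widetilde{A}^T + \widetilde{\mathbf{e}} $ where $ \widetilde{\mathbf{e}} $ is the subvector of $ \mathbf{e} $, and ${\rm wt_R}(\widetilde{\mathbf{e}}) \le {\rm wt_R}(\mathbf{e}) \le t $ since passing to a subvector can only drop the rank support. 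So it suffices to analyze the decoder $ \widehat{\mathbf{c}} = {\rm argmin}_{\mathbf{c}\in C} \Delta_{\widetilde{A}}(\mathbf{c},\widetilde{\mathbf{y}}) $ with a received word within rank-distance $ t $ of the true transmitted word $ \mathbf{c}\widetilde{A}^T $ (in the metric $\Delta_{\widetilde{A}}$).

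\textbf{Infallibility when $ d_R(C) > 2t+\rho $.} First I would factor $ \widetilde{A}^T = $ (the map $\mathbf{c}\mapsto \mathbf{c}\widetilde{A}^T$) through the rank-punctured code: since the rows of $\widetilde{A}$ are a basis of $L$, the map $\pi_{L,L^\prime}$ followed by $\psi_{B^\prime}$ (with $B^\prime$ chosen compatibly, as in Proposition \ref{generator punctured}) realizes $\mathbf{c}\mapsto \mathbf{c}\widetilde{A}^T$ up to a fixed rank-metric equivalence. By Proposition \ref{info spaces distance 1}, since $ \rho < d_R(C) $ and $\dim(L) = n-\rho$, the space $L$ is an information space for $C$, so $\pi_{L,L^\prime}:C\to C^L$ is a bijection; hence $C^L = {\rm row}(G\widetilde{A}^T \cdots)$ is a code of the same cardinality, and by Proposition \ref{info spaces distance 2}, $d_R(C^L) \ge d_R(C) - \rho > 2t$. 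Now the standard argument applies verbatim: suppose $\mathbf{c}$ was sent and $\mathbf{c}'\in C$ satisfies $\Delta_{\widetilde{A}}(\mathbf{c}',\widetilde{\mathbf{y}}) \le \Delta_{\widetilde{A}}(\mathbf{c},\widetilde{\mathbf{y}}) \le t$; then by the triangle inequality for rank distance, ${\rm wt_R}(\mathbf{c}\widetilde{A}^T - \mathbf{c}'\widetilde{A}^T) \le 2t < d_R(C^L)$, forcing $\mathbf{c}\widetilde{A}^T = \mathbf{c}'\widetilde{A}^T$, and then $\mathbf{c}=\mathbf{c}'$ by injectivity of $\pi_{L,L^\prime}$ on $C$. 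Thus $\widehat{\mathbf{c}} = \mathbf{c}$ for every $\widetilde{A}\subset A$, and in particular all these decoders (and the decoder of Definition \ref{decoder silva}, by the same computation without passing to a subvector) return the same answer.

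\textbf{Failure when $ d_R(C) \le 2t+\rho $.} Here I would construct a bad $A$. Write $d_R(C) = d$. If $\rho \ge d$, take $A$ to be a matrix of rank $n-\rho$ whose row space $L = {\rm row}(A)$ is chosen, via the second statement of Proposition \ref{info spaces distance 1}, so that $L$ is \emph{not} an information space for $C$: then there exist $\mathbf{c}_1\ne\mathbf{c}_2$ in $C$ with $\mathbf{c}_1 A^T = \mathbf{c}_2 A^T$, so with $\mathbf{e}=\mathbf{0}$ (certainly ${\rm wt_R}(\mathbf{e})\le t$) the received word $\mathbf{y}=\mathbf{c}_1A^T$ is also consistent with $\mathbf{c}_2$, and no decoder $\widehat{\mathbf{c}}$ (for any $\widetilde{A}\subset A$, since the restricted equation $\mathbf{c}_1\widetilde{A}^T = \mathbf{c}_2\widetilde{A}^T$ also holds) can distinguish them. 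If $\rho < d \le 2t+\rho$, pick $\mathbf{c}_1,\mathbf{c}_2\in C$ with ${\rm wt_R}(\mathbf{c}_1-\mathbf{c}_2)=d$ and, following the explicit construction in the proof of Proposition \ref{info spaces distance 2}, choose a basis $B=\{\mathbf{v}_i\}_{i=1}^n$ of $\mathbb{F}_q^n$ with $\langle\mathbf{c}_1-\mathbf{c}_2\rangle^* = \langle \mathbf{v}_1,\dots,\mathbf{v}_s\rangle$ and set $L^\perp = \langle\mathbf{v}_1,\dots,\mathbf{v}_\rho\rangle_{\mathbb{F}_q}$, $L = \langle\mathbf{v}_{\rho+1},\dots,\mathbf{v}_n\rangle_{\mathbb{F}_q}$, letting $A$ be a full-rank generator matrix of $L$; then ${\rm wt_R}((\mathbf{c}_1-\mathbf{c}_2)A^T) = {\rm wt_R}(\mathbf{c}_1-\mathbf{c}_2)-\rho = d-\rho \le 2t$. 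Splitting $\mathbf{c}_1 A^T - \mathbf{c}_2 A^T$ into two parts of rank weight $\le t$ — i.e. writing it as $\mathbf{e}_1 - (-\mathbf{e}_2)$ with ${\rm wt_R}(\mathbf{e}_i)\le t$ and, say, $\mathbf{e}_2 = \mathbf{0}$ if $d-\rho\le t$ — gives a received $\mathbf{y}$ equidistant (or strictly closer to the wrong codeword) from $\mathbf{c}_1 A^T$ and $\mathbf{c}_2 A^T$; the same split of the subvectors works for every $\widetilde{A}\subset A$ because the relevant rank weights are unchanged under passing from $A$ to $\widetilde{A}$ (both generate $L$). Hence the decoder is not infallible for this $A$.

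\textbf{Main obstacle.} The routine parts are the triangle inequality manipulations; the one point requiring care is the second half, where one must split the error vector $\mathbf{c}_1 A^T - \mathbf{c}_2 A^T$ of rank weight $d-\rho$ into two summands each of rank weight at most $t$ so as to place the received word in (or just past) the decision boundary — this uses $d-\rho \le 2t$ and the fact that rank supports of any dimension up to $d-\rho$ can be realized as sub-supports, which follows from Lemma \ref{charact stic} together with the basis-extension trick already used in Propositions \ref{info spaces distance 1} and \ref{info spaces distance 2}. One should also double-check that the chosen $\mathbf{e}$ actually arises as a genuine error vector $\mathbf{e}\in\mathbb{F}_{q^m}^N$ of rank weight $\le t$ in the original ambient space (not merely in the punctured space), which is immediate since one may take $\mathbf{e}$ supported on the coordinates corresponding to the rows of $\widetilde{A}$ and zero elsewhere.
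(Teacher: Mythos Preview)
Your proof is correct and follows essentially the same route as the paper: reduce to the punctured code $C^L$ via Propositions \ref{info spaces distance 1} and \ref{info spaces distance 2}, then apply the standard triangle-inequality argument for the sufficiency and the split of a weight-$\le 2t$ difference into two weight-$\le t$ errors for the necessity. Your explicit case split $\rho \ge d$ versus $\rho < d$ in the failure direction is actually a bit more careful than the paper (which invokes Proposition \ref{info spaces distance 2} without checking its hypothesis $\rho < d_R(C)$); one small slip to fix is that when you set $L^\perp = \langle\mathbf{v}_1,\dots,\mathbf{v}_\rho\rangle_{\mathbb{F}_q}$ you should take $L = (L^\perp)^\perp$ rather than $\langle\mathbf{v}_{\rho+1},\dots,\mathbf{v}_n\rangle_{\mathbb{F}_q}$, since these need not coincide for a non-orthogonal basis --- only the former guarantees that $\mathbf{c}\mapsto\mathbf{c}A^T$ kills $\mathbf{v}_1,\dots,\mathbf{v}_\rho$.
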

\begin{proof}
First, assume $ d_R(C) > 2t + \rho $ and fix a matrix $ A \in \mathbb{F}_q^{N \times n} $ and $ \widetilde{A} \subset A $. Assume also that the sent message is $ \mathbf{c} \in C $ and we receive $ \mathbf{y} = \mathbf{c} A^T + \mathbf{e} $, with $ {\rm wt_R}(\mathbf{e}) \leq t $. Define $ \widetilde{\mathbf{y}} $ and $ \widetilde{\mathbf{e}} $ as the vectors obtained from $ \mathbf{y} $ and $ \mathbf{e} $, respectively, taking the coordinates in the same positions as the rows in $ \widetilde{A} $. Therefore, $ \widetilde{\mathbf{y}} = \mathbf{c} \widetilde{A}^T + \widetilde{\mathbf{e}} $.

We have that $ {\rm Rk}(\widetilde{A}) = {\rm Rk}(A) $ and $ {\rm wt_R}(\widetilde{\mathbf{e}}) \leq {\rm wt_R}(\mathbf{e}) \leq t $, and on the other hand,
$$ \Delta_{\widetilde{A}}(\mathbf{c}, \widetilde{\mathbf{y}}) = {\rm wt_R}(\widetilde{\mathbf{e}}) = {\rm wt_R}(\widetilde{\mathbf{e}}A^\prime), $$
where $ A^\prime \widetilde{A}^T = I $. 

Now, $ \mathbf{c} \widetilde{A}^T A^\prime = \pi_{L,L^\prime}(\mathbf{c}) $ by Lemma \ref{lemma projection}. Since $ d_R(C^L) > 2t $ by Proposition \ref{info spaces distance 2}, and since $ L $ is an information space for $ C $ by Proposition \ref{info spaces distance 1}, $ \mathbf{c} $ is the only vector in $ C $ with $ d_R (\widetilde{\mathbf{y}} A^\prime, \pi_{L,L^\prime}(\mathbf{c}) ) \leq t $, and we are done.

Finally, if $ d_R(C) \leq 2t + \rho $, then take $ A $ such that $ \dim(L) = n - \rho $ and $ d_R(C^L) = d_R(C) - \rho \leq 2t $, which exists by Proposition \ref{info spaces distance 2}. Then, take $ \widetilde{A} \subset A $ and $ \mathbf{c}, \mathbf{c}^\prime \in C $ such that $ d_R(\pi_{L,L^\prime}(\mathbf{c}),\pi_{L,L^\prime}(\mathbf{c}^\prime)) = d_R(\mathbf{c} \widetilde{A}^T, \mathbf{c}^\prime \widetilde{A}^T) \leq 2t $. There exists $ \mathbf{e}, \mathbf{e}^\prime \in \mathbb{F}_{q^m}^N $ such that $ {\rm wt_R}(\mathbf{e}), {\rm wt_R}(\mathbf{e}^\prime) \leq t $ and $ \mathbf{c} \widetilde{A}^T + \widetilde{\mathbf{e}} = \mathbf{c}^\prime \widetilde{A}^T + \widetilde{\mathbf{e}}^\prime $, and hence the decoder associated with $ \widetilde{A} $ gives both $ \mathbf{c} $ and $ \mathbf{c}^\prime $ as solutions.
\end{proof}

To adapt this to (arbitrary) coding schemes, we just need to replace distances between vectors by distances between cosets
$$ d_R(C_{\mathbf{x}},C_{\mathbf{x}^\prime}) = \min \{ d_R(\mathbf{c},\mathbf{c}^\prime) \mid \mathbf{c} \in C_{\mathbf{x}}, \mathbf{c}^\prime \in C_{\mathbf{x}^\prime} \}, $$
and the choice of vectors in $ C $ by the choice of representatives of a coset $ C_{\mathbf{x}} $ in $ \mathcal{P}_{\mathcal{S}} $. \\

\appendix

\section{The role of $ C_1^L / C_2^L $ in information leakage} \label{app 1}

In this appendix we explain the role of $ C_1^L / C_2^L $ in information leakage beyond the expression (\ref{info leakage punctured 2}). Let the notation be as in Subsection \ref{erasure sect}.

If the adversary knows the matrix $ B $, then he or she may obtain $ \pi_{L,L^\prime}(\mathbf{c}) = \mathbf{c} \widetilde{B}^T \widetilde{B}^\prime $, where $ \widetilde{B} $ is a submatrix of $ B $ that is a generator matrix of $ L $, and $ \widetilde{B}^\prime \widetilde{B}^T = I $. Assuming uniform distributions, it can be shown that the adversary still obtains the same amount of information from $ \pi_{L,L^\prime}(\mathbf{c}) $:
\begin{equation}
{\rm I}(S; X B^T) = {\rm I}(S; \pi_{L,L^\prime}(X)) = \dim(C_1^L / C_2^L).
\label{info leakage}
\end{equation}

Actually, we can effectively compute the set of possible sent messages, regardless of the distributions used. If $ \psi : \mathbb{F}_{q^m}^\ell \longrightarrow W $ is the map in Definition \ref{definition NLCP}, we can see both $ \psi $ and $ \pi_{L,L^\prime} $ as maps
$$ \mathbb{F}_{q^m}^\ell \stackrel{\psi}{\longrightarrow} C_1 / C_2 \stackrel{\pi_{L,L^\prime}}{\longrightarrow} C_1^L / C_2^L, $$
where $ \psi $ is an isomorphism and $ \pi_{L,L^\prime} $ is surjective. Therefore, knowing $ \mathbf{c}^\prime = \pi_{L,L^\prime}(\mathbf{c} + C_2) = \pi_{L,L^\prime}(\psi(\mathbf{x})) $, where $ \mathbf{c} = \psi(\mathbf{x}) $, we can obtain the set of possible sent messages, which is 
$$ (\pi_{L,L^\prime} \circ \psi)^{-1}(\mathbf{c}^\prime) = \mathbf{x} + \ker(\pi_{L,L^\prime} \circ \psi), $$
regardless of the distribution, and in the case of uniform distributions, $ \dim (\ker(\pi_{L,L^\prime} \circ \psi)) = \ell - \dim(C_1^L / C_2^L) = H(S) - I(S;\pi_{L,L^\prime}(X)) = H(S | \pi_{L,L^\prime}(X)) $.

Moreover, if we know $ B $, we can obtain all vectors in $ \mathbf{x} + \ker(\pi_{L,L^\prime} \circ \psi) $ by performing matrix multiplications and solving systems of linear equations. 

Assume that $ G_1, G_2, G^\prime $ are generator matrices of $ C_1, C_2, W $, respectively, where $ C_1 = C_2 \oplus W $, and the first rows of $ G_1 $ are the rows in $ G_2 $, and the last rows are the rows in $ G^\prime $. Then, for a message $ \mathbf{x} \in \mathbb{F}_{q^m}^\ell $, the encoding consists in generating uniformly at random a vector $ \mathbf{x}_2 \in \mathbb{F}_{q^m}^{k_2} $ and defining $ \mathbf{c} = \mathbf{x}_2 G_2 + \mathbf{x} G^\prime = (\mathbf{x}_2, \mathbf{x}) G_1 $. Therefore, the projections onto the last $ \ell $ coordinates of the solutions of the system $ \pi_{L,L^\prime}(\mathbf{c}) = \widetilde{\mathbf{x}} (G_1 \widetilde{B}^T \widetilde{B}^\prime ) $ will be all the vectors in $ \mathbf{x} + \ker(\pi_{L,L^\prime} \circ \psi) $. 

If $ L $ is an information space for $ C_2 \varsubsetneq C_1 $, i.e., $ \dim(C_1^L/C_2^L) = \ell $, then all solutions of the previous system coincide in the last $ \ell $ coordinates, which constitute the original message $ \mathbf{x} \in \mathbb{F}_{q^m}^\ell $.

\section{Alternative proof of the duality theorem} \label{app 2}

We will now give a different proof of the duality Theorem \ref{wei duality} (proven in \cite{jerome}) that follows from Proposition \ref{pre duality}. Note that a theorem analogous to Wei's duality theorem \cite[Theorem 3]{wei} has not been given for relative generalized Hamming weights, nor for the rank case. However, Proposition \ref{pre duality} and its Hamming version work for any nested linear code pair.

We will need the following lemma:

\begin{lemma}[\textbf{\cite[Lemma 4]{rgrw}}]
For any linear code $ C \subset \mathbb{F}_{q^m}^n $ and any $ 1 \leq r \leq k $, we have that
$$ d_{R,r}(C) = \min \{ j \mid \max \{ \dim(C_L) \mid \dim(L) = j \} = r \}. $$
\end{lemma}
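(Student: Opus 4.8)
The plan is to transport both sides of the claimed identity onto the lattice of $ \mathbb{F}_q $-subspaces $ L \subset \mathbb{F}_q^n $ and the shortened codes $ C_L = C \cap (L \otimes \mathbb{F}_{q^m}) $, using the correspondence between Galois closed subspaces of $ \mathbb{F}_{q^m}^n $ and $ \mathbb{F}_q $-subspaces of $ \mathbb{F}_q^n $. Concretely, by Proposition \ref{galois} a linear subspace $ V \subset \mathbb{F}_{q^m}^n $ satisfies $ V = V^* $ if and only if $ V = L \otimes \mathbb{F}_{q^m} $ with $ L = V \vert_{\mathbb{F}_q} $; in that case $ \dim(V) = \dim(L) $ (an $ \mathbb{F}_q $-linearly independent family in $ \mathbb{F}_q^n $ remains $ \mathbb{F}_{q^m} $-linearly independent) and $ C \cap V = C_L $. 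Plugging this into the definition (\ref{def1}) of $ d_{R,r}(C) $ gives
$$ d_{R,r}(C) = \min \{ \dim(L) \mid L \subset \mathbb{F}_q^n, \ \dim(C_L) \geq r \} =: a. $$
Since $ r \geq 1 $ and $ C_{0} = 0 $ we have $ a \geq 1 $, and since $ C_{\mathbb{F}_q^n} = C $ has dimension $ k \geq r $ the set is nonempty, so $ 1 \leq a \leq n $.

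Next I would set $ f(j) = \max \{ \dim(C_L) \mid L \subset \mathbb{F}_q^n, \ \dim(L) = j \} $, so that the right-hand side of the lemma is exactly $ \min \{ j \mid f(j) = r \} $, and prove the two facts $ f(j) < r $ for every $ j < a $, and $ f(a) = r $; together these say that $ a $ is the least $ j $ with $ f(j) = r $, which is the claim. The first fact is immediate: if $ f(j) \geq r $ then some $ L $ of dimension $ j $ has $ \dim(C_L) \geq r $, forcing $ j \geq a $ by the displayed formula for $ a $. For $ f(a) = r $, the inequality $ f(a) \geq r $ holds by choice of $ a $. For $ f(a) \leq r $, suppose instead some $ L $ with $ \dim(L) = a $ has $ \dim(C_L) \geq r + 1 $, and pick a hyperplane $ L_0 \subset L $ with $ \dim(L_0) = a - 1 $. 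Then $ L_0 \otimes \mathbb{F}_{q^m} $ is a hyperplane of $ L \otimes \mathbb{F}_{q^m} $, so intersecting the subspace $ C \cap (L \otimes \mathbb{F}_{q^m}) $ with it decreases the dimension by at most one, i.e. $ \dim(C_{L_0}) \geq \dim(C_L) - 1 \geq r $; this forces $ a - 1 \geq a $, a contradiction. Hence $ f(a) = r $ and $ \min \{ j \mid f(j) = r \} = a = d_{R,r}(C) $.

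I do not expect a genuine obstacle here. The only points requiring a line of care are the "one-step" bound that the dimension of the shortened code drops by at most one when the base space $ L $ is replaced by a hyperplane (a two-line linear-algebra argument via a defining linear functional of the hyperplane), and the clean recording that Galois closed subspaces of $ \mathbb{F}_{q^m}^n $ are precisely the extensions $ L \otimes \mathbb{F}_{q^m} $, with the identification preserving dimension and commuting with intersecting with $ C $. Everything else is bookkeeping with monotone integer functions, and in fact full monotonicity of $ f $ is not even needed for the argument above.
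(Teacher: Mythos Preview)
Your argument is correct. The paper does not supply its own proof of this lemma; it is quoted verbatim from \cite[Lemma 4]{rgrw} and used as a black box in Appendix~\ref{app 2}. So there is nothing to compare against here, and your self-contained proof is a genuine addition.

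The route you take is the natural one given the paper's setup: you use Proposition~\ref{galois} to replace Galois closed $V$ in the definition~(\ref{def1}) by $L \otimes \mathbb{F}_{q^m}$ with $L \subset \mathbb{F}_q^n$, so that $d_{R,r}(C) = \min\{\dim(L) \mid \dim(C_L) \geq r\}$, and then reduce the remaining equality to the elementary observation that passing from $L$ to a hyperplane $L_0 \subset L$ can drop $\dim(C_L)$ by at most one. Both of the ``points requiring a line of care'' that you flag are indeed routine: the hyperplane bound is immediate since $C_{L_0} = C_L \cap (L_0 \otimes \mathbb{F}_{q^m})$ is the intersection of $C_L$ with a codimension-one subspace of $L \otimes \mathbb{F}_{q^m}$, and the dimension-preserving bijection between Galois closed $V$ and $\mathbb{F}_q$-subspaces $L$ is exactly the content of items~2 and~4 of Proposition~\ref{galois}. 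Your remark that full monotonicity of $f$ is not needed is also accurate; only the one-step drop at $j=a$ matters.
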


\begin{proof}[Proof of Theorem \ref{wei duality}]
By monotonicity and cardinality, it is enough to prove that both sets on the right-hand side are disjoint. Assume that they are not disjoint, then there exist $ i,j,s $ such that $ d_i = j $ and $ d_s^\perp = n+1 - j $. By the previous lemma, the first equality implies that 
$$ \max \{ \dim(C_L) \mid \dim(L) = j \} = i. $$
Now take $ C_1 = C $ and $ C_2 = 0 $ in Proposition \ref{pre duality}. From the fact that $ \mathcal{B}(\mathbb{F}_{q^m}^n,C^\perp)_r = \mathcal{B}(C,0)_{\ell-r}^\perp $ and the previous lemma, the second equality implies that
$$ \max \{ \dim(C_L) \mid \dim(L) = j-1 \} = s+k-n-1+j. $$
Again by the previous lemma, $ i > s+k-n-1+j $. Now interchanging the role of $ C $ and $ C^\perp $, which also interchanges the roles of $ i,s $; the roles of $ j,n+1-j $; and the roles of $ k,n-k $; we have that $ i \leq s+k-n-1+j $, which is absurd.
\end{proof}

\section{Seeing errors as erasures} \label{app 3}

We will show now that erasure correction is equivalent to error correction if the rank support of the error vector is known. This is analogous to the fact that usual erasure correction is equivalent to usual error correction where the positions of the errors (the Hamming support of the error vector) are known. This is a basic fact used in many decoding algorithms for the Hamming distance, which now we hope can be translated to the rank case.

\begin{proposition}
Assume that $ \mathbf{c} \in C $ and $ \mathbf{y} = \mathbf{c} + \mathbf{e} $, where $ {\rm wt_R}(\mathbf{e}) = t < d_R(C) $ and $ L = G(\mathbf{e}) $. Then, $ \mathbf{c} $ is the only vector $ \mathbf{c}^\prime \in C $ such that $ {\rm wt_R}(\mathbf{y} - \mathbf{c}^\prime) < d_R(C) $ and $ L = G(\mathbf{y} - \mathbf{c}^\prime) $. 

Moreover, if $ A $ is a generator matrix of $ L^\perp $, then $ \mathbf{c} $ is the unique solution in $ C $ of the system of equations $ \mathbf{y}A^T = \mathbf{x}A^T $, where $ \mathbf{x} $ is the unknown vector.
\end{proposition}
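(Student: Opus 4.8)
The plan is to treat the two assertions in turn. The first is essentially the observation that two codewords whose error vectors have rank supports inside the same space $L$ differ by a vector of rank weight at most $\dim L$. Concretely, I would take any $ \mathbf{c}^\prime \in C $ with $ G(\mathbf{y} - \mathbf{c}^\prime) \subseteq L $ (which already forces $ {\rm wt_R}(\mathbf{y} - \mathbf{c}^\prime) = \dim L = t < d_R(C) $, so the rank-weight hypothesis in the statement is automatic) and write
$ \mathbf{c} - \mathbf{c}^\prime = (\mathbf{y} - \mathbf{c}^\prime) - (\mathbf{y} - \mathbf{c}) = (\mathbf{y} - \mathbf{c}^\prime) - \mathbf{e} $. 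Expanding both summands in the fixed basis $ \alpha_1, \ldots, \alpha_m $, the $ \mathbb{F}_q $-coordinate vectors subtract componentwise, so $ G(\mathbf{c} - \mathbf{c}^\prime) \subseteq G(\mathbf{y} - \mathbf{c}^\prime) + G(\mathbf{e}) \subseteq L + L = L $. Hence $ d_R(\mathbf{c}, \mathbf{c}^\prime) = {\rm wt_R}(\mathbf{c} - \mathbf{c}^\prime) \leq \dim L = t < d_R(C) $, and by the definition of the minimum rank distance of $ C $ (valid for arbitrary, not necessarily linear, $ C $) this forces $ \mathbf{c} = \mathbf{c}^\prime $.

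For the second assertion I would first identify the solution set of $ \mathbf{x} A^T = \mathbf{y} A^T $. By the observation just before Lemma \ref{lemma projection}, a generator matrix $ A $ of $ L^\perp $ over $ \mathbb{F}_q $ is also a generator matrix of $ V^\perp = L^\perp \otimes \mathbb{F}_{q^m} $ over $ \mathbb{F}_{q^m} $, where $ V = L \otimes \mathbb{F}_{q^m} $; consequently $ \{ \mathbf{v} \in \mathbb{F}_{q^m}^n \mid \mathbf{v} A^T = \mathbf{0} \} = (V^\perp)^\perp = V $, so the full solution set is the coset $ \mathbf{y} + V $. Since $ G(\mathbf{e}) = L $ means $ \mathbf{e} \in V $, we get $ \mathbf{c} = \mathbf{y} - \mathbf{e} \in \mathbf{y} + V $, so $ \mathbf{c} $ is indeed a solution lying in $ C $. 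For uniqueness, any other solution $ \mathbf{c}^\prime \in C $ satisfies $ \mathbf{c} - \mathbf{c}^\prime \in V $, and every vector of $ V = L \otimes \mathbb{F}_{q^m} $ has rank support contained in $ L $ (expand in a basis of $ L $ over $ \mathbb{F}_q $, then in $ \alpha_1, \ldots, \alpha_m $). Thus $ G(\mathbf{y} - \mathbf{c}^\prime) \subseteq L $, and the first part (or directly the computation above) gives $ \mathbf{c} = \mathbf{c}^\prime $.

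I do not expect a genuine obstacle: the argument is short, and its only delicate points are bookkeeping ones — keeping straight over which field each orthogonal complement and each generator matrix is taken, i.e.\ that $ A $ generates $ V^\perp $ over $ \mathbb{F}_{q^m} $ and that $ (V^\perp)^\perp = V $, and making sure the reasoning never assumes $ C $ is linear (it does not, since everything is phrased through $ d_R(\mathbf{c}, \mathbf{c}^\prime) $ and the minimum distance of an arbitrary code). If one prefers to avoid the extension-field computation entirely, step one can be replaced by the elementary remark that for a row $ \mathbf{a} \in L^\perp \subseteq \mathbb{F}_q^n $ of $ A $ and any $ \mathbf{w} = \sum_i \mathbf{w}_i \alpha_i $ with $ \mathbf{w}_i \in \mathbb{F}_q^n $, one has $ \mathbf{w} \cdot \mathbf{a} = \sum_i (\mathbf{w}_i \cdot \mathbf{a}) \alpha_i $, which vanishes iff every $ \mathbf{w}_i \cdot \mathbf{a} $ vanishes; ranging over all rows of $ A $ shows $ \mathbf{w} A^T = \mathbf{0} $ iff every $ \mathbf{w}_i \in L $, i.e.\ iff $ G(\mathbf{w}) \subseteq L $, which is precisely the equivalence both halves of the proof need.
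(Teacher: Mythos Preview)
Your argument is correct and in fact slightly more general than what is stated (you allow $G(\mathbf{y}-\mathbf{c}')\subseteq L$ rather than equality; the parenthetical ``$=\dim L$'' should then read ``$\leq\dim L$'', but this is harmless). The route, however, differs from the paper's. The paper does not bound ${\rm wt_R}(\mathbf{c}-\mathbf{c}')$ directly; instead it observes that $G(\mathbf{e})=G(\mathbf{e}')=L$ forces $\mathbf{e}A^T=\mathbf{e}'A^T=\mathbf{0}$, hence $\mathbf{c}A^T=\mathbf{y}A^T=\mathbf{c}'A^T$, and then invokes the preceding error-and-erasure theorem (pure erasure case, $\rho=t$, no errors) to conclude $\mathbf{c}=\mathbf{c}'$. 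In other words, the paper proves the first assertion \emph{via} the second and the earlier decoding result, whereas you prove both assertions from scratch via the inclusion $G(\mathbf{c}-\mathbf{c}')\subseteq L$. Your approach is more self-contained and shows transparently that only the minimum-distance definition is needed, with no linearity assumption on $C$; the paper's approach is terser and highlights that this proposition is exactly the erasure-only instance of the general correction theorem, which is the conceptual point of the appendix.
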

\begin{proof}
Assume that $ \mathbf{y} = \mathbf{c} + \mathbf{e} = \mathbf{c}^\prime + \mathbf{e}^\prime $, where $ \mathbf{c}^\prime \in C $ and $ G(\mathbf{e}) = G(\mathbf{e}^\prime) $. Then $ \mathbf{y}A^T = \mathbf{c}A^T = \mathbf{c}^\prime A^T $. Since $ {\rm Rk}(A) = n - t $ and $ t < d_R(C) $, it follows from the previous theorem that $ \mathbf{c} = \mathbf{c}^\prime $. 
\end{proof}

\section*{Acknowledgement}

The author wishes to thank Ryutaroh Matsumoto, Relinde Jurrius and Ruud Pellikaan for important comments on their work, during the stay of the last author at Aalborg University, and Olav Geil and Diego Ruano for fruitful discussions and careful reading of the manuscript, including the fact that the duality theorem follows from Proposition \ref{pre duality}. The author also gratefully acknowledges the support from The Danish Council for Independent Research (Grant No. DFF-4002-00367).


\def\cprime{$'$}

\end{document}